\def\boldhead#1:{\par\vskip 7pt\noindent{\bf #1:}\hskip 10pt}
\def\ithead#1:{\par\vskip 7pt\noindent{\it #1:}\hskip 10pt}
\def\inline#1:{\par\vskip 7pt\noindent{\bf #1:}\hskip 10pt}
\long\def\comment #1\commentend{}
\long\def\commfull #1\commend{#1}
\long\def\commabs #1\commenda{}
\long\def\commtim #1\commendt{#1}
\long\def\commb #1\commbend{}
\def\blackslug{\hbox{\hskip 1pt \vrule width 4pt height 8pt
		depth 1.5pt \hskip 1pt}}
\def\QED{\quad\blackslug\lower 8.5pt\null\par}
\def\Qed{\QED}
\def\inQED{~~~~~\quad\blackslug\lower 8.5pt\null}
\def\Proof{\noindent{\bf Proof:~}}
\def\ProofOf#1#2{\noindent{\bf Proof of #1 \ref{#2}:~}}
\long\def\PPP#1{\noindent{\bf Proof:}{ #1}{\quad\blackslug\lower 8.5pt\null}}
\long\def\denspar #1\densend
\newcommand{\Set}[1]{\left\{ #1 \right\}}
\newcommand{\ceil}[1]{\left\lceil #1 \right\rceil}
\newcommand{\floor}[1]{\left\lfloor #1 \right\rfloor}
\newcommand{\ignore}[1]{}
\def\DEF{\stackrel{\rm def}{=}}
\def\cC{{\cal C}}
\def\cF{{\cal F}}
\def\cP{{\cal P}}
\def\cY{{\cal Y}}
\newif\ifnotesw\noteswtrue
\ifnotesw\marginpar[\hfill\(\top\)]{\(\top\)}\fi}%
\ifnotesw\marginpar[\hfill\(\bot\)]{\(\bot\)}\fi}
\newcommand{\mnote}[1]%
{\ifnotesw\marginpar%
	[{\scriptsize\begin{minipage}[t]{\marginparwidth}
			\raggedleft#1%
		\end{minipage}}]%
		{\scriptsize\begin{minipage}[t]{\marginparwidth}
				\raggedright#1%
			\end{minipage}}%
			\fi}
						\newcommand{\headersmall}[3]{
							\pagestyle{plain}
							\noindent
							\vbox{
								\hbox to 6.28in { \Course \hfill #2} 
								\hrule
								
								\vspace{1mm}
								
								Prof. Patt-Shamir
								
								\vspace{7mm}
								
								\hbox to 6.28in {\Large\bf \hfill #3 \hfill} 
							}
							\vspace*{8mm}
						}
						\newcommand{\tauheadersmall}[4]{
							\pagestyle{plain}
							\noindent
							\vbox{
								\vspace{-.3in}
								\hbox to 6.5in {\sf #1 \hfill #3}
								\hrule
								%
								#2\hfill
								\vspace{7mm}
								\hbox to 6.28in {\Large\bf \hfill #4 \hfill}
							}
						}
						\newcommand{\lectureheader}[3]{
							\pagestyle{plain}
							\noindent
							\vbox{
								\hbox to 6.28in { \Course
									\hfill #2 } 
								\hrule
								\vspace*{1mm}
								Lecturer: #3
								\vspace{7mm}\\
								\hbox to 6.28in {\LARGE\bf \hfill Lecture #1 \hfill} 
							}
							\vspace*{4mm}
						}
						\newcommand{\newlectureheader}[3]{
							\pagestyle{plain}
							\noindent
							\vbox{
								\hbox to 6.28in { 6.04s Design and Analysis of Distributed Protocols
									\hfill #2} 
								\hrule
								\vspace*{1mm}
								Lecturer: #3
								\vspace{7mm}\\
								\hbox to 6.28in {\LARGE\bf \hfill Lecture #1 \hfill} 
							}
							\vspace*{4mm}
						}
						\newcommand{\talkheader}[3]{
							\pagestyle{plain}
							\vbox{
								\hbox to 6.28in {Lecturer: #3 \hfill #2 } 
								\hrule
								\vspace*{7mm}
								\hbox to 6.28in {\LARGE\bf \hfill #1 \hfill} 
							}
							\vspace*{4mm}
						}
						\def\Plus{\hbox{\raise 0.3ex\hbox{\tiny +}}} 
						\def\mtoday{\ifcase\month\or
							January\or February\or March\or April\or May\or June\or July\or
							August\or September\or October\or November\or
							December\fi\space\number\year}
\newcommand{\namedref}[2]{\hyperref[#2]{#1~\ref*{#2}}}
\newcommand{\sectionref}[1]{\namedref{Section}{#1}}
\newcommand{\theoremref}[1]{\namedref{Theorem}{#1}}
\newcommand{\figref}[1]{\namedref{Figure}{#1}}
\newcommand{\lemmaref}[1]{\namedref{Lemma}{#1}}
\newcommand{\corollaryref}[1]{\namedref{Corollary}{#1}}
\newcommand{\equalityref}[1]{\hyperref[#1]{Equality~(\ref*{#1})}}
\newcommand{\inequalityref}[1]{\hyperref[#1]{Inequality~(\ref*{#1})}}
\newcommand{\claimref}[1]{\namedref{Claim}{#1}}
\newcommand{\CONGEST}{\textbf{CONGEST}\xspace}
\newcommand{\set}[1]{\left\{#1\right\}}
\newcommand{\xc}{\textsc{\sc mcast}}
\newcommand{\mf}{{\sc \textsc{mf}}}
\newcommand{\ep}{{\sc \textsc{e}\psi}}
\newcommand{\ea}{{\sc \textsc{ea}}}
\newcommand{\mfk}{{\sc \textsc{$k$-mf}}}
\newcommand{\hhh}{\text{h}}
\newcommand{\Exc}{\text{X}}
\newcommand{\EST}{\textsc{X}}
\def\MST{\textrm{MST}}
\def\EA{\ea}
\def\Ep{\ep}
\def\EQ{\mathrm{EQ}}
\def\ID{\mathrm{ID}}
\def\COL{\mathrm{Col}}
\def\Dist{\mathrm{dist}}
\def\Fm{\cF_{m}}
\newcommand{\mv}{{\sc \textsc{mv}}}
\def\True{\textsc{true}\xspace}
\def\False{\textsc{false}\xspace}
\def\bV{\text{\bf v}\xspace}
\def\bP{\text{\bf p}\xspace}
\renewcommand{\paragraph}[1]{\par\medskip\noindent\textbf{#1}}
\begin{document}
\mainmatter              
\title{Proof-Labeling Schemes: Broadcast, Unicast and In Between}
%
%
\author{Boaz Patt-Shamir \and Mor Perry}
\authorrunning{B. Patt-Shamir and M. Perry} 
%
%
\institute{School of Electrical Engineering, Tel Aviv University, Tel Aviv 
	6997801, Israel.}

\maketitle              
\begin{abstract}
	We study the effect of limiting the number of different messages a
	node can transmit simultaneously on the verification complexity of
	proof-labeling schemes (PLS). 
	In a PLS, each node is given a label, and  the goal is to verify, by exchanging messages over each link
	in each direction, that a certain global predicate 
	is satisfied by the system configuration.
	We consider a single
	parameter $r$ that bounds the number of distinct 
	messages that can be sent concurrently by any node: in the case
	$r=1$, each
	node may only send the same message to all its neighbors
	(the broadcast model), in the case $r\ge\Delta$, where $\Delta$ is
	the largest node degree in the system, each neighbor may be sent a
	distinct message (the unicast model), and in general, for $1\le
	r\le\Delta$, each of the $r$ messages is
	destined to a subset of the neighbors.
	
	We show that message compression
	linear in $r$ is possible for verifying fundamental
	problems such as the agreement between edge endpoints on the edge state.
	Some problems, including verification of maximal matching, exhibit
	a large gap in complexity between $r=1$ and $r>1$.
	For some other important predicates, the
	verification complexity is insensitive to $r$, e.g., the question whether a
	subset of  edges constitutes a spanning-tree. 
	We also consider the congested clique model.
	We show  that the crossing technique \cite{BFP} for proving lower bounds on the
	verification complexity can be applied in the case of congested
	clique only if $r=1$. Together with a new upper bound, this allows us to
	determine the verification complexity of MST in the broadcast clique.
	Finally, we establish a general connection between the deterministic and randomized 
	verification complexity for any given number $r$.
	\keywords{verification complexity, proof-labeling schemes, CONGEST model, congested clique}
\end{abstract}


\section{Introduction}

Similarly to classical complexity theory, studying the verification
complexity of various problems is one of the major approaches in 
the quest to understand the complexity of network tasks.
The basic idea, proposed by Korman, Kutten and Peleg~\cite{KKP} under
the name Proof-Labeling Schemes (PLS for short), is to
assume that an oracle assigns a label to each node, so that by
exchanging these labels, the nodes can collectively verify that a
certain global predicate holds (see Sec.~\ref{sec-prel} for details). 
The 
verification complexity of a
predicate $\pi$ is defined to be the minimal label length which 
suffices to
verify $\pi$.
This node-centric, space-based view was generalized
in subsequent work, in
which it was allowed for nodes to send different messages to different
neighbors, rather than the whole local label to all
neighbors. Specifically, in \cite{BFP} the verification complexity is
defined to be the minimal \emph{message}-length required to verify the
given predicate. 

The distinction between these two models is natural
and appears in other contexts as well, like the broadcast and the
unicast flavors of congested clique, proposed by Drucker et
al.~\cite{DruckerKO-14}: in the unicast flavor, a node may send a
different message to each of its neighbors, while in the broadcast
flavor, all neighbors receive the same message. Following up on this
model, Becker et 
al.~\cite{BeckerARR16} proposed considering a spectrum of congested
clique models,  where a node may send up to $r$ distinct messages in a
round, where $1\le r<n$ is a given parameter. This model,
called henceforth
$\xc(r)$,  can be motivated by observing that $r$ 
can be viewed as the number
of network interfaces (NICs) a node possesses: Each interface may be
connected to a subset of the neighbors, and it can send only a single
message at a time.

\paragraph{Our Results.} 
In this paper we present a few preliminary results concerning PLS in 
the $\xc(r)$ model. Our main focus is 
on the tradeoff between the number~$r$ of different messages a node can send in one round and 
the verification complexity (message length)~$\kappa$.
While there are problems whose verification complexity is independent of $r$,
we prove that the verification complexity of  some
fundamental problems is highly dependent on $r$.
First, we consider the problem of \emph{matching verification}~(\mv), 
where every node has at most one incident edge
marked, and the goal is to verify whether the set of marks
implies a well defined matching, i.e., an edge is either marked
in both endpoints or unmarked in both, and that this set is a matching.
In~\cite{GS11}, among other results, it is shown that maximal matching 
has verification complexity $\Theta(1)$, and that the verification
complexity of maximum matching in bipartite graphs is also $\Theta(1)$.
These results implicitly assume that the subset of edges is
well defined; our results show that in fact, the main difficulty
is in ensuring that both endpoints of an edge agree on its status. 
This motivates our next problem that focuses on consistency.
Specifically, we define the primitive problem \emph{edge 
	agreement}~(\EA) as follows.
Each node has a $b$-bit string for each incident edge, and a state is 
considered legal 
iff both
endpoints of each edge agree on the string associated with that edge. 
It turns out that the \emph{arboricity} of the graph, denoted 
$\alpha(G)$,
plays an important role in the verification complexity of $\EA$ (and
all problems that $\EA$ can  locally be reduced to).
In \theoremref{EA tight trade-off}, we prove that $\kappa(\EA)\cdot r\in\Theta(\alpha(G)b)$. 
Next, as a more sophisticated example, we
consider the important problem
of \emph{maximum flow}~(\mf):
In \theoremref{flow verification} we show that  $\kappa(\mf)\cdot 
r\in\Theta(\alpha(G)\log f_{\max})$, where $f_{\max}$
is the largest flow value over an edge. 
In~\cite{KKP}, 
a scheme to verify that the maximum flow between a given pair of nodes 
$s$ and $t$ is exactly $k$ is given in the broadcast model, with complexity
$O(k(\log k+\log n))$. We prove, in \theoremref{mfk upper}, that the verification complexity of this problem in the broadcast model is $O(\min\set{\alpha(G),k}(\log k+\log \Delta))$, which is an exponential improvement in some cases. In addition, our upper bound scales linearly with $r$ in the $\xc(r)$ model.  

We also consider the congested 
clique model. 
To date, no lower bounds on the 
verification complexity in the congested clique were known. We show that 
the known technique of 
crossing \cite{BFP} can be applied, but only in 
broadcast clique (i.e.,  $\xc(1)$).
We use this argument, along with a new scheme, to obtain a tight ${\Theta(\log n + \log 
	w_{\max})}$  bound for  MST verification in broadcast cliques, 
where $w_{\max}$ denotes the largest edge weight.

Finally, we show that all results translate to randomized PLS (RPLS)~\cite{BFP}.
Extending a result of \cite{BFP}, we 
show that if both PLS and RPLS are using the same number $r$, then 
an 
exponential difference in verification complexity holds in both 
directions, i.e., in the $\xc(r)$ model, an RPLS with 
verification complexity $O(\log{\kappa_d})$ can be constructed out of 
every PLS  with verification complexity $\kappa_d$, and every RPLS  
with verification 
complexity $\kappa_r$ can be used to construct a PLS with 
verification complexity $O(2^{\kappa_r})$.

\paragraph{Related Work.}
%
Drucker et al.~\cite{DruckerKO-14} propose a \emph{local broadcast} 
communication in the congested clique,
where every node broadcasts a message to all other nodes in each round.
Becker et al.~\cite{BeckerARR16} proposed, still for congested cliques,
a bounded number $r$ of different messages a node can
send in each round.

Verification of a given property in decentralized systems finds applications in various
domains, such as, checking the result obtained from the execution of
a distributed program~\cite{APV91,FRT13}, establishing lower bounds
on the time required for distributed approximation~\cite{DH+12},
estimating the complexity of logic required for distributed run-time
verification~\cite{FRT14}, general distributed complexity
theory~\cite{FKP13}, and self stabilizing algorithms~\cite{BFP14,KKM}.

The notion of distributed verification in a single round was 
introduced by Korman, Kutten, and Peleg in~\cite{KKP}. The verification complexity of minimum spanning-trees (MST)  
was studied  in~\cite{KK07}.
Constant-round schemes were studied in~\cite{GS11}.   
Verification processes in which the global result is not restricted to be the logical 
conjunction of local outputs had been studied in~\cite{AFIM,AFP13}. The role of unique 
node identifiers in local decision and verification was extensively 
studied in~\cite{FHK12,FGKS13,FHS15}.
Proof-labeling schemes in directed networks were studied in~\cite{FLSW16}, where both one-way and two-way 
communication over directed edges is considered.
Verification schemes for dynamic networks, where edges may appear or disappear after label assignment and 
before verification, are studied in~\cite{FRSW17}.
Recently, a hierarchy of local decision as an interaction between a 
prover and a disprover 
was presented in~\cite{FFH16}.

\paragraph{Paper Organization.}
The remainder of this paper  is organized as follows. 
In \sectionref{sec-prel} we formalize the model and recall some 
graph-theoretic concepts. In \sectionref{sec-tools}  we present two
general techniques that apply to the $\xc(r)$ model. 
In \sectionref{sec-edges} we present results for verification
of matching, edge agreement, and max-flow.
In \sectionref{sec-clique} we present our results for congested cliques.
In \sectionref{sec-RPLS} we analyze the relation between deterministic and randomized PLSs.
We conclude in \sectionref{sec-conc} with some open questions and directions for future work.

\section{Model and Preliminaries}\label{sec-prel}
%
\paragraph{Computational Framework and the $\xc$ Model.}
Our model is derived from the \CONGEST model~\cite{peleg:book}. 
Briefly, a distributed network is modeled as a connected undirected
graph $G=(V,E)$, where $V$ is the set of nodes, $E$ is the set of edges, and every node has a unique identifier. In each synchronous round every node performs a local computation, sends a message to each of its neighbors, and receives messages from all neighbors. We denote the number of nodes $|V|$ by $n$ and the 
number of edges $|E|$ by $m$.
For every node $v\in V$, let $d(v)$ be the \emph{degree} of $v$. We denote by $\Delta(G)$ the maximal degree of a node in $G$.
We assume 
that the edges incident to a node $v$ are numbered  $1,\dots,d(v)$.

The main difference between the model considered in this paper,
called  $\xc(r)$, and \CONGEST, is that in $\xc(r)$ we are given a 
parameter 
$r\in\mathbb{N}$ such that a node may send at most $r$ distinct 
messages simultaneously. More precisely, 
we assume that prior to sending messages,
the neighbors of a node are partitioned into $r$ disjoint
subsets (some of which
may be empty), such that $v$ sends the same message to all neighbors in 
a subset.
We emphasize that in our model, for simplicity, $r$ is a uniform parameter for all nodes.

%
%

\paragraph{Proof-Labeling Schemes in the $\xc$ model.}
A \emph{configuration} $G_s$ includes an underlying graph $G=(V,E)$ and a \emph{state} assignment function $s:V\to S$, where $S$ is a (possibly infinite) state space. The state of a node $v$, denoted $s(v)$, includes all local input to $v$. In particular, the state usually includes a unique node identity $\ID(v)$ and, in the case of weighted graphs, the weight $w(e)$ of each
incident edge $e$. The state of $v$ typically
include additional data whose integrity we would like to verify.
{For example, node state may contain a marking of incident edges, 
	such that the set of marked edges constitutes a spanning tree.}

Let $\cF$ be a family  of 
configurations, and let $\cP$ be a boolean predicate  over $\cF$.  A 
proof-labeling
scheme consists of two conceptual components: a \emph{prover} $\bP$, 
and a
\emph{verifier} $\bV$. The prover is an oracle which, given any
configuration $G_s\in \cF$ satisfying $\cP$, assigns a bit string $\ell(v)$ to every
node $v$, called the \emph{label} of $v$. The verifier is a
distributed algorithm running at every node. At each
node $v$, the local verifier takes as input the state $s(v)$ of $v$, its 
label
$\ell(v)$ and based on them sends messages to all neighbors. Then, 
using as input 
the messages received from the neighbors, the local state and the local label,
the local verifier computes a
boolean value.  If the outputs are $\True$ at all nodes, 
the global verifier $\bV$ is said to \emph{accept} the configuration, 
and otherwise (i.e., at least one local verifier outputs $\False$),
$\bV$~is said to \emph{reject} the configuration. For correctness, a 
proof-labeling
scheme $\Sigma=(\bP,\bV)$ for $(\cF,\cP)$ must satisfy the following
requirements, for every $G_s\in\cF$:
\begin{compactitem}
	\item If $\cP(G_s)=\True$ then, using the labels assigned by $\bP$, the verifier $\bV$ accepts $G_s$.
	\item If $\cP(G_s)=\False$ then, for every label assignment, the verifier $\bV$ rejects $G_s$.
\end{compactitem}
Given a configuration $G_s$, we denote by $\vec{c}_\Sigma(G_s)$ the vector of length $|E|$ that contains the messages sent according to the scheme $\Sigma$, and we refer to this vector as the \emph{communication pattern} of $\Sigma$ over $G_s$.
For an underlying graph $G$, we denote by $L(G)$ the 
number of legal configurations of $G$, and by $W_\Sigma(G)$  the 
number of different communication patterns of $\Sigma$ in $G$, over 
all legal configurations. 
In our analysis, given an  edge $(v,u)\in E$, we denote by 
$M_v(e)$ the message over $e$ from $v$ to $u$.

Our central measure for PLSs is its verification complexity, defined as 
follows.
\begin{definition}
	The \emph{verification complexity} of a proof labeling scheme 
	$\Sigma=(\bP,\bV)$ for the predicate $\cP$ over a family of 
	configurations $\cF$ is the maximal length of a message generated 
	by the verifier~$\bV$ based on the labels assigned to the nodes by 
	the prover~$\bP$ in  
	a configuration $G_s$ for which $\cP(G_s)=\True$. 
\end{definition}


In this paper we consider PLSs in the {$\xc(r)$ model}, 
namely we impose the additional restriction that at 
most $r$ distinct messages
may be sent by a node.

\paragraph{Arboricity, degeneracy and average degree.}
The average degree of a graph plays a central role in our study.
However, graphs may have dense and sparse regions. We therefore
use the following refined concepts.
\begin{definition}\label{def-arb}
	The \emph{arboricity} of a graph $G=(V,E)$, denoted by $\alpha(G)$, is defined as the minimum number of acyclic subsets of edges that cover $E$.
	%
	%
	The \emph{degeneracy} of a graph $G$, denoted by $\delta(G)$, is defined as the smallest value $i$ such that the edges of $G$ can be oriented to form a directed acyclic graph with out-degree at most $i$.
\end{definition}

The following properties are well known \cite{NW61,NW64}.
\begin{lemma}\label{lem-arb-deg}
	For all graphs $G$, $\alpha(G)\le\delta(G)<2\alpha(G)$.
\end{lemma}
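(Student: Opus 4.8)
The plan is to prove the two inequalities separately, translating in each direction between the orientation-based definition of degeneracy in \defref{def-arb} and the equivalent ``peeling'' (vertex-ordering) view.

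For the lower bound $\alpha(G)\le\delta(G)$ I would start from an acyclic orientation of $E$ in which every out-degree is at most $\delta(G)$, which exists by definition. At each vertex $v$, label its outgoing edges with distinct labels from $\{1,\dots,\delta(G)\}$, and let $E_j$ be the set of edges carrying label $j$. By construction every vertex has at most one out-edge inside $E_j$, so $(V,E_j)$ inherits an orientation of maximum out-degree $1$; the claim is that $(V,E_j)$ is then acyclic. Indeed, an undirected cycle contained in $E_j$ would, under the inherited orientation, have to contain a vertex both of whose incident cycle-edges point outward (since the orientation, being a restriction of the global acyclic one, has no directed cycle), contradicting out-degree at most $1$ in $E_j$. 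Hence $E=E_1\cup\cdots\cup E_{\delta(G)}$ covers $E$ by $\delta(G)$ forests, so $\alpha(G)\le\delta(G)$.

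For the upper bound $\delta(G)<2\alpha(G)$ I would write $E=F_1\cup\cdots\cup F_{\alpha(G)}$ with each $F_i$ acyclic. For an arbitrary subgraph $H$, each $F_i\cap E(H)$ is a forest on vertex set $V(H)$ and hence has at most $|V(H)|-1$ edges, so $|E(H)|\le\alpha(G)\,(|V(H)|-1)<\alpha(G)\,|V(H)|$; consequently the average degree of $H$ is strictly below $2\alpha(G)$, so $H$ has a vertex of degree at most $2\alpha(G)-1$. Repeatedly removing a minimum-degree vertex then produces an ordering $v_1,\dots,v_n$ of $V$ in which every $v_i$ has at most $2\alpha(G)-1$ neighbors among $v_{i+1},\dots,v_n$; orienting each edge from its earlier to its later endpoint in this ordering gives an acyclic orientation with maximum out-degree at most $2\alpha(G)-1$, whence $\delta(G)\le 2\alpha(G)-1<2\alpha(G)$.

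The only genuinely delicate step is the acyclicity claim in the first part --- that a maximum-out-degree-$1$ orientation obtained by restricting a globally acyclic orientation admits no undirected cycle; the rest is routine counting and greedy peeling. Since the statement is classical (it is essentially Nash--Williams), an acceptable alternative is simply to cite \cite{NW61,NW64}, but the argument above is short and self-contained.
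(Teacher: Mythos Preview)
Your proof is correct. The paper itself does not prove \lemmaref{lem-arb-deg} at all: it simply states the inequalities as ``well known'' and cites \cite{NW61,NW64}, exactly the alternative you mention at the end of your proposal. So there is no paper proof to compare against; you have supplied strictly more than the paper does.

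A couple of minor remarks on presentation. In the first part, the ``delicate'' step you flag is in fact routine once stated cleanly: any sub-digraph of a DAG is a DAG, so the induced orientation on an undirected cycle in $E_j$ cannot be a directed cycle; but then some vertex on the cycle must have both incident cycle-edges oriented outward, contradicting out-degree~$\le 1$ in $E_j$. You have this argument, just be explicit that ``restriction of a DAG is a DAG'' is the only fact being used. In the second part, your peeling argument is exactly the standard equivalence between the orientation definition of degeneracy used in \defref{def-arb} and the usual ``every subgraph has a low-degree vertex'' formulation, so it meshes correctly with the paper's definition.
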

\begin{lemma}\label{lem-arb}
	For a given graph $G=(V,E)$, 
	$\alpha(G)=\max\Set{\ceil{m_H\over n_H-1}\mid V_H\subseteq 
		V,|V_H|\ge2}$, where 
	$m_H=|E_H|$ and ${n_H=|V_H|}$ over all induced subgraphs 
	$H=(V_H,E_H)$ of $G$.%
	\footnote{
		Given a graph $G=(V,E)$, the \emph{induced subgraph} $H=(V_H,E_H)$ over 
		the set of nodes $V_H\subseteq V$ 
		satisfies that $E_H=E\cap (V_H\times V_H)$.}
\end{lemma}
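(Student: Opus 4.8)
The plan is to prove the two inequalities separately; write $\beta(G)=\max\Set{\ceil{m_H/(n_H-1)}\mid V_H\subseteq V,\ |V_H|\ge 2}$ for the right-hand side. The direction $\alpha(G)\ge\beta(G)$ is the easy one: fix an optimal cover of $E$ by $\alpha(G)$ acyclic edge sets (forests) $F_1,\dots,F_{\alpha(G)}$, and fix an induced subgraph $H=(V_H,E_H)$ with $n_H\ge 2$. Each $F_j\cap E_H$ is a forest on the vertex set $V_H$, hence has at most $n_H-1$ edges, so $m_H=\sum_j|F_j\cap E_H|\le\alpha(G)\cdot(n_H-1)$. Since $\alpha(G)$ is an integer, this yields $\alpha(G)\ge\ceil{m_H/(n_H-1)}$, and taking the maximum over $H$ gives $\alpha(G)\ge\beta(G)$. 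Restricting to \emph{induced} subgraphs costs nothing here, since among all subgraphs on a fixed vertex set the induced one has the most edges.

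For the reverse inequality $\alpha(G)\le\beta(G)$ I would invoke the matroid union theorem of Nash-Williams and Edmonds, applied to the cycle (graphic) matroid $M=M(G)$, whose independent sets are exactly the forests of $G$ and whose rank function satisfies $\mathrm{rank}_M(F)=|V(F)|-c(F)$, where $V(F)$ is the set of vertices spanned by $F$ and $c(F)$ its number of connected components. The matroid union theorem states that $E$ can be partitioned into $k$ independent sets of $M$ if and only if $|F|\le k\cdot\mathrm{rank}_M(F)$ for every $F\subseteq E$ (and a cover by forests can always be refined into a partition). Taking $k=\beta(G)$, it remains to verify this condition: let $F\subseteq E$ have components spanning vertex sets $V_1,\dots,V_c$; then $\mathrm{rank}_M(F)=\sum_{i=1}^c(|V_i|-1)$, while $|F|=\sum_{i=1}^c|F\cap(V_i\times V_i)|\le\sum_{i=1}^c m_{H_i}$, where $H_i$ is the subgraph of $G$ induced by $V_i$. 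By definition of $\beta(G)$ we have $m_{H_i}\le\beta(G)\,(|V_i|-1)$ for every $i$ with $|V_i|\ge2$, and this holds trivially when $|V_i|=1$ (both sides vanish). Summing over $i$ gives $|F|\le\beta(G)\cdot\mathrm{rank}_M(F)$, so $E$ is a union of $\beta(G)$ forests and $\alpha(G)\le\beta(G)$, completing the proof.

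The only real content lies in the matroid union theorem; I expect that to be the main obstacle if a self-contained treatment is wanted, since its proof is an exchange/augmentation argument maintaining a partition of the current edge set into $k$ forests and showing that the density hypothesis always permits absorbing one more edge. Since the lemma is classical, however, it is legitimate to cite Nash-Williams \cite{NW61,NW64} directly, exactly as the excerpt does, and use the short argument above only as justification.
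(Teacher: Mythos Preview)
Your argument is correct. The paper itself does not prove \lemmaref{lem-arb} at all: it simply states the formula as a well-known fact and cites Nash-Williams \cite{NW61,NW64}. So your write-up already goes further than the paper does, and your closing remark---that citing Nash-Williams directly is legitimate here---matches exactly what the authors do. If anything, your sketch of the two directions (the easy counting bound and the matroid-union direction) is a useful addition that the paper omits.
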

Note that by Lemmas \ref{lem-arb-deg} and \ref{lem-arb}, the minimal 
number of outgoing edges in the best orientation of a graph $G$
is proportional
to the maximal average degree over all induced subgraphs of $G$.

\section{Techniques for the $\xc$ Model}\label{sec-tools}
In this work, we consider problems expressible as a conjunction of
edge predicates,  where a node may have a different input for every 
edge. 
We present two techniques that can be used as building blocks in the 
design of efficient PLSs in the $\xc$ model.

The first technique, which we call \emph{minimizing orientation},
reduces the number of incident edges a node sends its input on.
We orient the edges such that the maximum out degree is minimized. 
\lemmaref{lem-arb-deg} ensures that the maximum out degree is 
bounded by $2\alpha$. Using a minimizing orientation, we can prove the 
following lemma.

\begin{lemma}\label{lem-orientation}
	Suppose that a verification problem $(\cF,\cP)$ is expressible
	as a conjunction of edge predicates, each involving variables from 
	a single pair of neighbors. Then there exists a PLS 
	$\Sigma=(\bP,\bV)$ for $(\cF,\cP)$ in the $\xc(2\alpha)$ model with 
	verification complexity $k$, where $k$ is the length of the largest local input 
	to an edge predicate.
\end{lemma}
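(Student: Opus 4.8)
The plan is to let the prover supply a \emph{minimizing orientation} of $G$ and route the verification of each edge predicate to one endpoint. By \lemmaref{lem-arb-deg} and the definition of degeneracy, $G$ admits an orientation of its edges in which every node has out-degree at most $\delta(G)\le 2\alpha(G)-1$. Writing $x_v(e)$ for the part of $v$'s state that feeds into the predicate $P_e$ attached to an incident edge $e$ (so that $\cP=\bigwedge_e P_e$), the prover $\bP$, which sees the whole configuration, computes such an orientation and, for a legal configuration, records in $\ell(v)$ the direction (``out'' or ``in'') of each edge incident to $v$; no other information is placed in the label. In the verification round, $v$ sends $M_v(e)=x_v(e)$ on each edge $e$ oriented out of $v$ and the empty message on each edge oriented into $v$. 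On the receiving side, $v$ rejects if: (i) its label marks more than $2\alpha(G)-1$ of its edges ``out''; or (ii) for some incident $e$ it receives a non-empty message and $P_e$, evaluated on that message together with $x_v(e)$, is violated; or (iii) $e$ is marked ``in'' at $v$, an empty message arrives on $e$, and $P_e$ genuinely depends on the other endpoint's data. Node $v$ accepts iff none of these fires.

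Completeness is immediate. For a legal configuration the honest prover's orientation is consistent across every edge and gives every node out-degree at most $2\alpha(G)-1$; thus exactly one endpoint of each $e$ transmits $x_v(e)$, the other endpoint has everything it needs to evaluate $P_e$, and since $\cP$ holds no check fires, so $\bV$ accepts. Each node sends at most $2\alpha(G)-1$ distinct ``out'' messages plus at most one empty message, hence at most $2\alpha(G)$ distinct messages in total, so the scheme lives in the $\xc(2\alpha)$ model; and every non-empty message transmitted is some $x_v(e)$, of length at most $k$, so the verification complexity is at most $k$.

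The step I expect to be the crux is \emph{soundness against an adversarial orientation}, since a cheating prover may make the claimed orientation inconsistent or lopsided. Suppose $\cP(G_s)=\False$, so some edge $e=(u,v)$ has $P_e$ violated, and fix any label assignment. If the label marks $e$ ``out'' at some endpoint, that endpoint transmits its \emph{true} $x$-data (the label cannot alter the state), the other endpoint receives it, evaluates the true $P_e$, finds it violated, and rejects by rule (ii). If the label marks $e$ ``in'' at both endpoints, then both receive the empty message on $e$: if $P_e$ really needs the missing side's data the receiver rejects by rule (iii), and otherwise each endpoint evaluates $P_e$ from its own data alone and still detects the violation. Finally, if the label marks more than $2\alpha(G)-1$ edges ``out'' at some node — precisely the regime in which that node could not deliver the required distinct messages within $\xc(2\alpha)$ — that node rejects by rule (i). In every case some node outputs \False, so $\bV$ rejects.

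A small loose end for the write-up is distinguishing ``an empty message from the head'' from ``a legitimately empty $x_v(e)$''; this is settled either by having the designated sender prepend one flag bit (changing the bound to $k+1$, which is harmless), or, more simply, by noting that when $x_v(e)$ is empty the predicate $P_e$ does not depend on that side and rule (iii) is never needed — the argument above goes through unchanged.
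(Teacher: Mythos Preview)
Your approach is essentially the paper's: both use a minimizing orientation (out-degree $\le\delta(G)<2\alpha$), send the local input on out-edges and the empty message on in-edges, and let the tail of each edge evaluate the edge predicate. The one noteworthy difference is how orientation consistency is checked. The paper's verifier simply requires that every edge marked ``in'' carry a non-empty incoming message and every edge marked ``out'' carry an empty one; any mismatch (both ``out'' or both ``in'') is rejected outright, after which the predicate evaluation is routine. Your rules (ii)--(iii) instead fold the consistency check into the predicate evaluation and introduce the side condition ``$P_e$ genuinely depends on the other endpoint's data,'' which is unnecessary machinery: the paper's direct empty/non-empty match handles both inconsistency cases uniformly and sidesteps your loose end about empty inputs versus empty messages (in the paper's scheme the receiver rejects on an empty message over an in-edge regardless of what $P_e$ looks like). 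Your argument is correct, but you can simplify by adopting the paper's cleaner consistency check.
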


\Proof
We describe the scheme $\Sigma=(\bP,\bV)$.
The prover $\bP$ 
orients the edges such that the maximum out-degree of a node is minimized
(this can be done in linear time, see, e.g., \cite{MB83}).
Then, every node sends its input only on outgoing edges.

The orientation can be verified simply by a local verification that a node
receives a message from every incoming edge and does not receive a message on every outgoing edge.
Note that the empty message is also counted in the number $r$ of different messages. By definition of degeneracy, the maximum out degree is $\delta$, and by \lemmaref{lem-arb-deg} $\delta$ is strictly smaller than $2\alpha$. Therefore, in the $\xc(2\alpha)$ model, in addition to sending a different message on every outgoing edge, we may use the empty message for incoming edges.
After verifying that the orientation is correct (i.e., consistent between neighbors), by definition of the scheme, the
function of every edge is computed by one of its endpoints and the verification is completed.
\QED


\emph{Color addressing.} In the unicast model, each node receives its
own message. However, if we want to use a unicast PLS  in the $\xc(r)$
model with $r<2\alpha$, we may need to bundle
together a few messages, and hence we need to somehow tag each part
of the message with its intended recipient.
Clearly this can be done by tagging
each sub-message by the unique ID of 
recipient, but this adds $\Theta(\log n)$ bits to each sub-message.
The \emph{color addressing} technique reduces this overhead to 
$O(\log\Delta)$. The idea is that each node need only distinguish 
between its neighbors.\footnote{
	We note that using simple port numbering
	requires agreement with the neighbors, which is costly, as we prove in 
	\theoremref{EA tight trade-off}.
}
We solve this difficulty by coloring the nodes so that no two neighbors 
of a node get the same 
color. Formally,
\emph{color addressing} is a PLS $\Sigma_{COL}=(\bP,\bV)$ in the 
broadcast model, where the prover $\bP$ first colors
the nodes so that  no two 
nodes at distance 1 or 2 receive the same color. This is
possible using at most $\Delta^2+\Delta+1\in O(\Delta^2)$ colors, 
because every node has at most 
$\Delta$ 
neighbors and $\Delta^2$ 
nodes at distance $2$ from it.  
Next, the prover assigns to every incident edge of a node the color of the neighbor
at the other end of the edge. 
The verifier $\bV$ at a node $v$ broadcasts the color assigned to $v$ 
by the prover.
Every node verifies that every incident edge is assigned a different color and that the
color received from every edge is the color assigned by the prover to this edge.

Clearly, $\Sigma_{COL}$ guarantees a proper coloring as desired to use 
for addressing, and this coloring is locally verifiable.
Moreover, since a color can be represented using $O(\log \Delta)$ bits, 
we obtain local addressing with verification complexity $O(\log\Delta)$
in the \emph{broadcast} model.
We summarize in the following lemma.

\begin{lemma}\label{lem-coloring}
	$\Sigma_{COL}$ is a PLS in the broadcast model, which assigns and verifies an $O(\log\Delta)$-bit coloring for proper addressing. The verification complexity of
	$\Sigma_{COL}$ is  $O(\log\Delta)$. 
\end{lemma}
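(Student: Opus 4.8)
The construction of $\Sigma_{COL}$ is already fully specified above, so the plan is simply to verify the three assertions of the lemma: that $\bP$ can always produce labels that $\bV$ accepts (completeness), that any accepted labeling yields a genuine addressing of each node's neighbors (soundness), and that the messages have length $O(\log\Delta)$.

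For completeness, I would first record that a proper distance-$2$ coloring always exists with $O(\Delta^2)$ colors: the square graph $G^2$ (joining two vertices whenever their distance in $G$ is at most $2$) has maximum degree at most $\Delta+\Delta(\Delta-1)=\Delta^2$, so a greedy argument colors its vertices with at most $\Delta^2+1\le\Delta^2+\Delta+1$ colors. Hence on any configuration $\bP$ fixes such a coloring $c$, writes $c(u)$ on the end of each edge $e=(v,u)$ at $v$. When $\bV$ runs, every node $v$ broadcasts $c(v)$; across each incident edge $e=(v,u)$ it receives exactly $c(u)$, which equals the label $\bP$ wrote on that end of $e$; and the colors of the neighbors of $v$ are pairwise distinct because any two neighbors of $v$ lie at distance at most $2$ in $G$. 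So all local checks pass and $\bV$ accepts.

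For soundness, I would show that acceptance under an \emph{arbitrary} label assignment already forces a valid addressing. Let $c(v)$ be the value node $v$ broadcasts (read off its label) and let $c_v(e)$ be the color $\bP$'s label claims for incident edge $e$. Acceptance means: for every edge $e=(v,u)$ the color $v$ receives across $e$, which is whatever $u$ broadcasts, equals $c_v(e)$, i.e.\ $c_v(e)=c(u)$; and for every $v$ the multiset $\{c_v(e):e\text{ incident to }v\}$ consists of distinct values. Combining the two, for distinct neighbors $u,u'$ of $v$ with edges $e=(v,u)$, $e'=(v,u')$ we get $c(u)=c_v(e)\ne c_v(e')=c(u')$. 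Thus $c$ assigns distinct colors to the neighbors of every node, and each node can locally reconstruct the color--neighbor correspondence from its (verified) edge labels, which is exactly what ``proper addressing'' requires. Finally, every color lies in a range of size $O(\Delta^2)$, hence is encoded in $O(\log\Delta)$ bits, and each node sends this single value identically to all neighbors, so $\Sigma_{COL}$ is a broadcast ($r=1$) scheme of verification complexity $O(\log\Delta)$.

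I do not expect a genuine obstacle here; the only step that needs a little care is the soundness argument, namely noting that the per-node distinctness check together with the edge-label/received-color equality checks jointly force true distinctness of the neighbors' broadcast colors even when the prover is adversarial.
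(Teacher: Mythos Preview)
Your proposal is correct and follows the same approach as the paper: the paper presents the entire argument informally in the paragraph immediately preceding the lemma (greedy distance-$2$ coloring with $O(\Delta^2)$ colors, broadcast own color, check edge labels are distinct and match received colors), and then states the lemma as a summary without a separate formal proof. Your write-up is simply a more explicit version of the same reasoning, in particular spelling out the soundness direction that the paper compresses into ``Clearly, $\Sigma_{COL}$ guarantees a proper coloring\ldots and this coloring is locally verifiable.''
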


\section{Verification Complexity Trade-offs in the $\xc(r)$ 
	Model}\label{sec-edges}
In this section, we study the effect of $r$ on the verification 
complexity of PLSs in the $\xc(r)$ model. We start with the observation 
that for some problems, the asymptotic verification 
complexity is independent of~$r$.
%
These problems include the deterministic verification of a 
spanning-tree and vertex bi-connectivity, and the randomized 
verification of an MST. For each of these problems, we provide a  
scheme for $r=1$ with verification complexity that matches the lower 
bound for $r=\Delta$ \cite{KKP,BFP}.
%
In contrast, there  are problems for which the verification complexity  
is sensitive to $r$. Specifically,
we present a tight bound for the matching verification problem in the 
broadcast model, which is reduced dramatically even for $r=2$.
Finally, we show tight bounds for the primitive problem of 
edge agreement and the more sophisticated application of maximum flow, which scales linearly with $r$.

\subsection{Verification of Matchings}
\label{sec-matching}

In the literature, in verification problems of the form ``does a 
subset of edges satisfy a specified property,'' it is 
usually assumed  that the subset of edges is well defined, i.e., for 
every edge $e=(u,v)$, the local state of $v$ indicates that $e$ is in 
the subset if and only if the local state of $u$ indicates it.
However, since edges do not have storage, an edge set is actually 
represented by the local state at the nodes, and hence consistency 
between neighbors is not always guaranteed.

In fact, there are problems for which  the verification of consistency 
is the dominant factor of the verification complexity. In particular, 
consider matching problems: maximal matching, and maximum matching in 
bipartite graphs. Both problems are known to have constant verification 
complexity~\cite{GS11}. However, these results make the problematic 
assumption that the
edge set in question is well defined.
We consider the matching verification problem using the following
definition.

\begin{definition}[Matching Verification(\mv)]\hfill\\
	\textbf{Instance}: At each node $v$, at most one edge is marked. 
	We use $I_v(e)\in\Set{\True,\False}$ to denote whether $e$ is 
	marked in $v$. \\
	\textbf{Question}: Is the set $M$ of marked edges well defined, i.e., $I_v(e)=I_u(e)$ for every 
	edge $e=(u,v)\in E$, and $M$ is a matching? 
\end{definition}

We argue that in the broadcast model, the verification complexity of this problem is $\Theta(\log\Delta)$.
Formally, we study the problem $(\Fm,\mv)$, where  $\Fm$ is the family of connected configurations with edge indication at each node. We obtain the following result.

\begin{theorem}\label{MV broadcast}
	The verification 
	complexity of $(\Fm,\mv)$ in the broadcast model is 
	$\Theta(\log \Delta)$.
\end{theorem}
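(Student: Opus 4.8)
The plan is to establish matching upper and lower bounds separately. For the upper bound $O(\log\Delta)$, I would proceed in two conceptual stages: first verify that the marked set $M$ is well defined (consistency), and then verify that $M$ is a matching. The matching property is local and cheap: each node simply checks that at most one incident edge is marked, which requires no communication beyond what is already sent. The consistency check is the interesting part. In the broadcast model, a node cannot point at a specific neighbor, so the natural idea is to use the color-addressing machinery of \lemmaref{lem-coloring}: the prover supplies each node with an $O(\log\Delta)$-bit color that is proper at distance $2$, and assigns each incident edge the color of its far endpoint. Each node broadcasts its own color together with the color of the (unique) neighbor along its marked edge, if any. A node $v$ with marked edge $e=(v,u)$ then accepts only if it hears from the incident edge labeled with color $c(u)$ a broadcast message in which that neighbor names $v$'s own color $c(v)$ as the color of \emph{its} marked partner. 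Since colors are distinct among the neighbors of $v$ (and $u$), this unambiguously identifies the matched edge, so inconsistency — one endpoint marking $e$ and the other not — is caught. Total message length is $O(\log\Delta)$, and one must also run $\Sigma_{COL}$ in parallel, which costs another $O(\log\Delta)$; the two combine to $O(\log\Delta)$.

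For the lower bound $\Omega(\log\Delta)$, I would use a fooling/counting argument in the spirit of the crossing technique of \cite{BFP}. Consider a star-like or double-star gadget: a central pair of nodes $a,b$ joined by an edge, with $a$ adjacent to $\Delta$ further leaves and $b$ adjacent to $\Delta$ further leaves, and all configurations in which exactly one leaf-edge at $a$ is marked and exactly one leaf-edge at $b$ is marked. In a \True\ instance these two marks are on disjoint edges, so any pattern of the $2\Delta$ leaves is legal; there are $\Theta(\Delta^2)$ such choices, hence $\Theta(\Delta^2)$ legal configurations that differ only in which leaf-edges are marked. In the broadcast model a leaf node sends the \emph{same} message to everyone, so effectively each leaf contributes one message, and the number of distinct communication patterns is at most $2^{O(\kappa)}$ raised to the number of relevant links. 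If $\kappa = o(\log\Delta)$, two distinct legal configurations produce the same communication pattern on every edge; the standard edge-crossing argument then splices them into an illegal configuration (e.g. one where some leaf-edge is marked at the leaf but not at the center, breaking well-definedness) that the verifier still accepts, a contradiction. Making this rigorous requires choosing the gadget so that the "mixed" configuration is genuinely illegal for $\mv$ and so that acceptance transfers across the cut — this is where I expect the main technical care to be needed: controlling exactly which links carry which (broadcast) message and ensuring the spliced instance is rejected by no local verifier.

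The main obstacle, as usual with these lower bounds, is the asymmetry forced by broadcast: since every node sends one message to all neighbors, the adversary has less freedom than in the unicast crossing argument, so one must design the gadget so that the ambiguity the verifier must resolve is genuinely of size $\Omega(\Delta)$ \emph{per broadcast channel}, not merely $\Omega(\Delta)$ spread over $\Delta$ independent unicast channels. Concentrating the combinatorial load onto a single node's outgoing broadcast — e.g. by having one node that must, in effect, communicate which of its $\Delta$ incident edges is marked — is the crux; once that is in place, the counting and crossing steps are routine. I would also verify the easy direction of Lemma-style bookkeeping: that $\Omega(\log\Delta)$ already holds even when we are promised $M$ is a matching, so that the hardness is truly in the consistency requirement, matching the narrative preceding the theorem.
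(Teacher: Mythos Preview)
Your upper bound is correct and essentially identical to the paper's: use color addressing so that each node broadcasts the color of its matched partner (or nothing), and a node accepts iff the neighbor on its marked edge names $v$'s color and no other neighbor does. This is exactly the scheme in \lemmaref{MV broadcast upper bound}.

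Your lower bound has a genuine gap: the double-star gadget does not support the counting argument you sketch. With centers $a,b$ each adjacent to $\Delta$ leaves, there are only $\Theta(\Delta^2)$ legal configurations, but there are $\Theta(\Delta)$ nodes, each broadcasting a $\kappa$-bit message, hence $2^{\Theta(\Delta)\kappa}$ communication patterns. The inequality $2^{\Theta(\Delta)\kappa} < \Theta(\Delta^2)$ forces only $\kappa = O((\log\Delta)/\Delta)$, not $\kappa = O(\log\Delta)$. Your attempted repair---``concentrate the load onto a single node's broadcast''---does not work either: if you fix $a$'s message across two configurations in which $a$ marks different leaves, the \emph{leaves'} states (hence their broadcasts) also differ between the two configurations, so when you splice, node $a$ receives different incoming messages than in either original configuration and may legitimately reject.

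The missing idea is that you need a gadget in which the number of legal matchings is exponential in the number of nodes, so that the ratio $\log(\#\text{matchings})/(\#\text{nodes})$ is $\Omega(\log\Delta)$. The paper uses the complete bipartite graph $K_{\Delta-1,\Delta-1}$: it has $2(\Delta-1)$ nodes but at least $(\Delta-1)!$ perfect matchings, so $2^{2(\Delta-1)\kappa} < (\Delta-1)!$ forces $\kappa = \Omega(\log\Delta)$. Two legal matchings with the same broadcast pattern must differ on some edge $e^*\in A\times B$; splicing the $B$-side of one configuration with the $A$-side of the other makes $e^*$ inconsistent while preserving every node's local view (each node in $A$ sees the same $B$-broadcasts as before, and vice versa), yielding an illegal instance that is accepted.
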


We start with proving the lower bound of the theorem as stated in the following lemma. The proof uses a variant of crossing arguments \cite{BFP}.

\begin{lemma}\label{MV broadcast lower bound}
	The verification 
	complexity of any PLS for $(\Fm,\mv)$ in the broadcast model is 
	$\Omega(\log \Delta)$.
\end{lemma}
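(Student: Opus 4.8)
The plan is to invoke the crossing argument of~\cite{BFP} applied to a gadget in which a single node must, in effect, point to a distinguished neighbour among $\Delta$ of them. Assume towards a contradiction that $\Sigma=(\bP,\bV)$ is a PLS for $(\Fm,\mv)$ in the broadcast model whose verification complexity is $\kappa$, with $2^{\kappa}<\Delta$; deriving a contradiction yields $\kappa=\Omega(\log\Delta)$.

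First I would fix a connected host graph $G$ containing a node $a$ of degree $\Delta$ with neighbours $v_1,\dots,v_\Delta$, arranged so that $\Delta(G)=\Delta$ and so that for each $i\in[\Delta]$ there is a legal configuration $\cG_i\in\Fm$ (a consistent matching) whose only marked edge at $a$ is $(a,v_i)$. One can attach a small ``reservoir'' to $G$ — a pendant path or a second broom — whose role is only to keep the graph connected and to absorb the later surgery. Running the prover on each $\cG_i$, let $\mu_i$ be the unique message $a$ broadcasts in $\cG_i$. Since there are at most $2^{\kappa}<\Delta$ possible messages, pigeonhole gives $i\ne j$ with $\mu_i=\mu_j$.

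The crossing step then builds an illegal configuration that is nonetheless accepted. Take two disjoint copies of $G$, one carrying $\cG_i$ and one carrying $\cG_j$, joined through their reservoirs so the union lies in $\Fm$; delete the marked edge $(a,v_i)$ in the first copy and an unmarked edge incident to the image $a'$ of $a$ in the second copy (which exists because $j\ne i$), and reattach the four endpoints crosswise. Because $a$ and $a'$ broadcast the same message, and because in this model each node's message depends only on its state and label (not on received messages), every node of the resulting configuration receives exactly the tuple of incoming messages it received in $\cG_i$ or in $\cG_j$; hence every local verifier outputs $\True$ and $\bV$ accepts. But on the new edge the endpoint inherited from $a$ still treats it as its marked edge while the other endpoint treats it as unmarked, so the configuration violates $\mv$ — contradicting soundness. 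Therefore $2^{\kappa}\ge\Delta$, i.e.\ $\kappa=\Omega(\log\Delta)$.

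The crux, and the main obstacle, is making the two crossed edges splice cleanly: a rewiring preserves all local views only when the messages agree on \emph{both} ends of each crossed edge, yet manufacturing an inconsistency forces us to splice $a$'s ``marked'' port onto an ``unmarked'' port whose other endpoint is in a different local state — where an adversarial prover could a priori use a disjoint message range. Overcoming this is where the real work lies: the gadget and the family $\{\cG_i\}$ must be designed so that $\kappa<\log\Delta$ also forces the required message collision at the \emph{other} spliced endpoint, e.g.\ by giving the $v_i$'s degree $\Theta(\Delta)$ so that pigeonhole applies to them as well, and by taking enough copies/configurations for the two collisions to line up. Checking that the spliced graph stays connected and simple (so it is still a member of $\Fm$) is a secondary, routine point that dictates the precise choice of reservoir and of which edges to cross.
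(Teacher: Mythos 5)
Your proposal does not close the lower bound: the step you yourself flag as ``where the real work lies'' is exactly the step that fails, and the patch you sketch does not obviously repair it. A pigeonhole on the single message broadcast by $a$ gives you $\mu_i=\mu_j$, which preserves the local views of the two nodes that \emph{receive} from $a$ (resp.\ $a'$) after the splice. But the splice also changes who $a$ and $a'$ receive from: $a$ now hears $v'_i$ (whose state in $\cG_j$ says ``my edge to $a'$ is unmarked'') in place of $v_i$ (whose state in $\cG_i$ says ``my edge to $a$ is marked''). These are genuinely different local states, the prover may label them differently, and indeed in the matching upper bound of the paper a matched node broadcasts the color of its partner while an unmatched node broadcasts nothing --- so their messages \emph{should} differ, and nothing in your argument forces them to collide. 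Suggesting that the $v_i$'s also have degree $\Theta(\Delta)$ does not help, because you need the second collision to occur between the \emph{same} pair of configurations $(\cG_i,\cG_j)$ already selected by the first pigeonhole; arranging both collisions simultaneously requires counting over joint message tuples, which is a different (and larger) counting argument than the one you set up.

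The paper resolves this by counting \emph{entire communication patterns} rather than one node's message. It plants a $K_{\Delta-1,\Delta-1}$ gadget $H=(A,B,E_H)$ inside $G$; $H$ admits at least $(\Delta-1)!$ legal matchings, while a broadcast scheme with message length $\kappa$ produces at most $2^{2(\Delta-1)\kappa}$ communication patterns over $H$. If $\kappa<\tfrac12\log(\Delta-1)-1$, two distinct legal configurations share the \emph{whole} pattern, and one then swaps the states and labels of the entire side $B$ from one configuration into the other. Because every edge incident to $B$ lies inside $H$ and all messages in $H$ agree between the two configurations, every node's local view is preserved, yet the hybrid is inconsistent on some edge $e^*\in A\times B$. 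Note also that this surgery keeps the underlying graph fixed and only alters states and labels, sidestepping the connectivity/degree/membership-in-$\Fm$ issues your physical edge crossing would additionally have to address. If you want to salvage your route, you should replace the single-node pigeonhole with a count of full patterns over a dense subgraph and replace the edge crossing with a one-sided state swap across a vertex cut --- at which point you have essentially reconstructed the paper's proof.
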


\Proof
By contradiction. Let $n$ and $2\le\Delta\le n/2+1$ be given.
We construct the following graph $G=(V,E)$ with maximum degree 
$\Delta$. The $n$-node graph $G$
consists of two parts. 
One part is a complete bipartite graph over two sets 
of nodes $A$ and $B$ of size $\Delta-1$ each:  
$H=(A,B,E_H)=K_{\Delta-1,\Delta-1}$; the second part consists of 
an $n-2(\Delta-1)$ path, connected by an edge to a node in $A$. 

Given a configuration $G_s$ of $G$, let $I^{G_s}_v(e)$ denote $I_v(e)$ 
(the mark of edge $e$ as represented in the state of $v$).
Given a configuration 
$H_{s}$ of
the nodes of $H$, extend it to a configuration $G({H_s})$ of $G$ as
follows. For every $e=(u,v)\in E_H$ let 
$I^{G({H_s})}_v(e)=I^{H_{s}}_v(e)$, and for every $e=(u,v)\in 
E\setminus 
E_H$, let $I^{G({H_s})}_v(e)=0$. 
Clearly, $H_s$ is legal if and only if  
$G({H_s})$ is legal. 
We note that the number of different matchings in $H$, $L(H)$, is at
least $(\Delta-1)!$, because
every permutation of $\Delta-1$ elements represents a different 
matching in $H$. 

Now, let $\Sigma=(\bP,\bV)$ be 
a PLS for $(\Fm,\mv)$ in the broadcast model, and assume
for contradiction that 
$\kappa(\Sigma)<\frac{1}{2}\log (\Delta-1)-1$.
Recall that $W_\Sigma(H)$ is the 
number of different communication patterns of $\Sigma$ in $H$. Then 
%
$$
W_\Sigma(H) 
~\stackrel{(1)}{\le}~ 2^{2(\Delta-1)\kappa}\
~\stackrel{(2)}{<}~ 2^{(\Delta-1)\log\frac{\Delta-1}{e}}
~=~ \left(\frac{\Delta-1}{e}\right)^{\Delta-1}
~\le~ (\Delta-1)! \le L(H)~.
$$
Inequality (1) is true since for every PLS in the broadcast model with 
verification complexity $\kappa$, 
every communication pattern in $H$ can be 
constructed by choosing a $\kappa$-bit message for each of the 
$2(\Delta-1)$ nodes in $H$.
Inequality (2) follows from  our
assumption that $\kappa<\frac{1}{2}\log (\Delta-1)-1$ and the fact that 
$\log e<2$. 
Therefore, the number of communication patterns of $\Sigma$ in $H$ is 
strictly smaller than the number of legal configurations of $H$.
Therefore, there must be two different legal configurations $G({H_s})$ and 
$G({H_s'})$ with the same communication pattern in $H$. Since $G({H_s})$ and 
$G({H_s'})$ differ only over $E_H$ edges, there must
exist an edge ${e^*=(v,u)\in E_H}$  such that 
$I^{G({H_s})}_v(e^*) = I^{G({H_s})}_u(e^*)
\neq I^{G({H_s'})}_v(e^*) = I^{G({H_s'})}_u(e^*)$. Consider the configuration obtained from $G({H_s})$ with 
$I^{G({H_s'})}_w(e)$ replacing $I^{G({H_s})}_w(e)$ for every node $w\in B$ and every
edge $e=(w,w')\in E_H$. Intuitively, in this configuration, the state of all nodes in $V\setminus B$ is as in $G({H_s})$, and the state of nodes in $B$ is as in $G({H_s'})$. Obviously, this configuration is illegal, because 
$I^{G({H_s})}_u(e^*)\neq I^{G({H_s'})}_v(e^*)$, and $e\in A\times B$. However, since all 
nodes in $H$ send the same messages in $G({H_s})$ and in $G({H_s'})$ 
under $\Sigma$, we get the following. With the labels assigned by $\bP$ to the set of nodes $B$ in 
$G({H_s'})$ and the labels assigned by $\bP$ to all $V\setminus 
B$ nodes in $G({H_s})$, since all edges connected to $B$ are in $E_H$, the local view of $B$ in verification 
is exactly as in $G({H_s'})$, and the local view of all other nodes 
in verification is exactly as in $G({H_s})$. Therefore, all nodes 
output \True on an illegal configuration, which contradicts the correctness of $\Sigma$.
\QED

The following lemma shows a matching upper bound for this problem. This completes the proof of \theoremref{MV broadcast}.

\begin{lemma}\label{MV broadcast upper bound}
	There exists a PLS $\Sigma=(\bP,\bV)$ for $(\Fm,\mv)$ in the broadcast model with verification complexity 
	$O(\log \Delta)$.
\end{lemma}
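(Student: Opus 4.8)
The plan is to construct an explicit PLS in the broadcast model that verifies $(\Fm,\mv)$ using $O(\log\Delta)$-bit messages. The scheme combines two ingredients already developed in the paper: the \emph{color addressing} primitive of \lemmaref{lem-coloring}, and a short certificate that simultaneously resolves (a) consistency of the marking across each edge and (b) the matching property itself. First I would have the prover run $\Sigma_{COL}$ to color the nodes properly at distance $\le 2$ using $O(\log\Delta)$-bit colors, so that each node can unambiguously refer to any individual neighbor by color. This costs $O(\log\Delta)$ bits and is locally verifiable by \lemmaref{lem-coloring}.

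Next, the prover gives each node that has a marked edge a label consisting of the \emph{color of its matched partner}; a node with no marked edge gets a dedicated "unmatched" symbol. In the verification round, each node broadcasts the pair (its own color, the partner-color it was given) — $O(\log\Delta)$ bits. Now each node $v$ checks the following locally: the color $v$ receives on each incident edge is the color the prover assigned to that edge (this is the $\Sigma_{COL}$ check, which ties colors to actual neighbors); among its incident edges, at most one is locally marked, and that one, say $e=(v,u)$, has $u$'s advertised color equal to the partner-color in $v$'s label; conversely, $u$'s broadcast message must name $v$'s color as \emph{its} partner-color. If $v$ is labeled "unmatched," it verifies that no incident edge is locally marked and that no neighbor names $v$'s color as a partner. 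The key point is that this cross-check forces agreement: if $v$ thinks $e$ is marked but $u$ thinks it is not, then $u$ will not be naming $v$ as its partner (or will be naming someone else), and the check fails at $u$ or $v$; and the "at most one marked incident edge, and the partner relation is symmetric" checks guarantee $M$ is a genuine matching.

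For soundness I would argue: if all nodes accept, then the $\Sigma_{COL}$ part guarantees the colors are a valid distance-2 coloring, so within any node's neighborhood colors are unique identifiers; the symmetric partner-naming check then shows that $v$ has a marked incident edge if and only if the corresponding neighbor $u$ has that same edge marked and names $v$ back — hence $I_v(e)=I_u(e)$ for all $e$, so $M$ is well defined; and the "$\le 1$ marked incident edge" check at every node shows $M$ is a matching. For completeness, when $(G_s)$ is legal, the honest prover's labels obviously pass every check. The main obstacle — and it is a mild one — is ensuring the partner-naming cross-check is airtight against an adversarial labeling: one must be careful that a cheating prover cannot make two different neighbors $u,u'$ of $v$ both "point to" $v$ while $v$ points back to only one, creating a vertex of degree $2$ in $M$ that nonetheless looks locally fine. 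This is handled precisely because $v$ itself broadcasts a single partner-color, so at most one neighbor's back-pointer can be matched by $v$, and any other neighbor naming $v$ will fail its own check. Since every message is a constant number of colors plus a constant-size flag, the verification complexity is $O(\log\Delta)$, matching \lemmaref{MV broadcast lower bound} and completing the proof of \theoremref{MV broadcast}.
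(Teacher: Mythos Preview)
Your approach is essentially the paper's: use $\Sigma_{COL}$ to get a distance-$2$ coloring, and then have each node broadcast the color of its (unique) matched partner so that consistency can be checked locally. The paper's verifier is slightly leaner --- a node sends only the partner's color (or the empty message if it has no marked edge), and each node $v$ simply checks, for every incident edge $(v,w)$, that the message received from $w$ equals $\COL_v$ \emph{iff} $(v,w)$ is marked in $v$'s state --- but the idea is the same.

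There is, however, a genuine soundness hole in your construction as written. In your scheme the broadcast partner-color is \emph{a label supplied by the prover}, and you never require a node with no marked edge to verify that its label is the ``unmatched'' symbol. Consider an edge $e=(v,u)$ with $I_v(e)=\True$ and $I_u(e)=\False$ (and $u$ has no marked edge at all). An adversarial prover can set $P_v=\COL_u$ and $P_u=\COL_v$. Then $v$ sees its marked neighbor's color equals $P_v$ and that $u$'s broadcast partner equals $\COL_v$, so $v$ accepts; $u$ has no marked edge, so your check ``the marked edge's neighbor color equals $P_u$'' is vacuous, and since $u$ is \emph{not} labeled ``unmatched'' your ``no neighbor names me'' check does not fire either, so $u$ accepts. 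Thus your assertion ``$u$ will not be naming $v$ as its partner'' is exactly where the argument fails: the prover, not the state, controls what $u$ names.

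The fix is one line --- either have the verifier derive the broadcast partner-color directly from the state and the (verified) $\Sigma_{COL}$ colors, as the paper does, or add the check ``if my label is a color (not `unmatched') then I must have a marked incident edge and its endpoint must carry that color.'' With that, your soundness argument goes through and matches the paper's.
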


\Proof
The constructed scheme $\Sigma=(\bP,\bV)$ uses color addressing.
Let $\COL_v$ be the color assigned to node $v$. The verifier $\bV$ at every node $v$ locally
verifies that it has at most one marked incident edge. If none of the edges of $v$ is marked, then it
sends the empty message, and if there is a marked edge  $(v,u)$, then the verifier at node $v$ sends
$\COL_u$.
Finally, locally verify consistency at every node $v$ as follows. The message received from edge
$(v,w)$ is $\COL_v$ if and only if edge $(v,w)$ is marked.

We now prove the correctness of $\Sigma$.
According to the correctness of
color addressing, every node reliably broadcasts the indication of its marked edge if any.
If the marking is consistent and indicates a matching, then on every marked edge, every endpoint  sends the color of the other endpoints, and all nodes output \True.
If there exists a node with more than one marked edge, by definition of the scheme, this node outputs \False.
Finally, if every node has at most one marked edge but the marking is inconsistent, then there exists an edge $e=(v',u')$ such that $M_{v'}(e)=0$ and $M_{u'}(e)=1$. By definition of the scheme, $u'$ broadcasts $COL_{v'}$, and $v'$ receives its color from an unmarked edge. Therefore, $v'$ outputs \False. 
\QED

The result above says that in the broadcast model, the verification 
complexity of the maximal matching problem and the maximum matching in 
bipartite graphs is dominated by the consistency verification. 
Observe that in the $\xc(2)$ model, the verification 
complexity  of $(\Fm,\mv)$  is $O(1)$, by letting
every node $v$  send on 
every edge $e=(v,u)$ the bit $I_v(e)$: only two types of messages are 
needed! 

We also note that for the problem of maximum matching in cycles, the 
asymptotic verification complexity is unchanged if we must verify 
consistency, since the verification complexity of this problem in the 
broadcast model is $\Theta(\log n)$~\cite{GS11}.

\subsection{The Edge Agreement Problem}
Motivated by the results for matching verification, we now
formalize and study the fundamental problem of consistency across edges.
\begin{definition}[$b$-bit Edge Agreement ($\EA_b$)]\hfill\\
	\textbf{Instance}: Each node 
	$v$ holds in its state a $b$-bit string $B_v(e)$ for each incident 
	edge $e$. \\
	\textbf{Question}: Is $B_v(e)=B_u(e)$ for every 
	edge $e=(u,v)\in E$?
\end{definition}

Let $\cF$ be the family of all configurations, and let $\alpha$ denote 
the arboricity of the graph.
Our first main result is the following tight trade-off between  
$r$ (the number of different messages for a node) 
and verification complexity of $\EA_b$.

\begin{theorem}\label{EA tight trade-off}
	Let $b\in \Omega(\log \Delta)$.
	For every $1\le r \le \min\set{\Delta,2^{b/4}}$, the verification 
	complexity of $(\cF,\EA_b)$ in the $\xc(r)$ model is 
	$\Theta(\ceil{\frac{\alpha}{r}}b)$.
\end{theorem}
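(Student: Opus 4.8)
The plan is to prove matching upper and lower bounds, $\kappa(\EA_b)=O(\ceil{\alpha/r}b)$ and $\kappa(\EA_b)=\Omega(\ceil{\alpha/r}b)$, using the two tools of \sectionref{sec-tools} for the former and a crossing argument adapted to the $\xc(r)$ model for the latter.

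\textbf{Upper bound.} I would combine \lemmaref{lem-orientation} and \lemmaref{lem-coloring}. The prover fixes an acyclic orientation of $G$ with out-degree at most $\delta(G)<2\alpha$, together with the coloring of $\Sigma_{COL}$, and partitions the incident edges of each node into $r$ groups placing at most $\ceil{2\alpha/r}$ outgoing edges in each group. The verifier at $v$ sends $r$ messages; the message for group $j$ is $\COL_v$ followed by the pairs $(\COL_w,B_v(e))$ for the outgoing edges $e=(v,w)$ in group $j$. Node $v$ accepts iff the color-addressing checks of \lemmaref{lem-coloring} pass, every incident edge is "owned" (i.e., carries data from at least one endpoint), and for every incident edge $e=(u,v)$ the value $B_u(e)$ that $v$ reads out of $u$'s message (the part tagged $\COL_v$) equals $B_v(e)$. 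Completeness is immediate; soundness holds because a mismatch $B_u(e)\ne B_v(e)$ is detected by whichever endpoint owns $e$, and an unowned edge fails the ownership check. Each message has length $O(\ceil{\alpha/r}(b+\log\Delta))=O(\ceil{\alpha/r}b)$ by the hypothesis $b\in\Omega(\log\Delta)$; the case $r=1$ is identical with a single message. (One should double-check that double-ownership of an edge is harmless, which it is.)

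\textbf{Lower bound.} I would build a graph $G$ of arboricity $\Theta(\alpha)$: a dense part $H=K_{2\alpha,M}$ with $M=\max\{2\alpha,r\}$, joined by a single edge from one of its vertices to a path $P$ that pads $G$ to $n$ vertices and supplies a pendant edge at its far end; by \lemmaref{lem-arb} the arboricity is $\Theta(\alpha)$ and $\Delta(G)=\Theta(\max\{\alpha,r\})$, consistent with the hypotheses. I then run a crossing argument in the style of \lemmaref{MV broadcast lower bound}. Restricting to configurations that are $0$ outside $E_H$ there are $2^{b\,m_H}$ legal configurations, while the number of communication patterns on $E_H$ is at most $\prod_{v\in V_H}2^{r(\kappa+1)}r^{d_H(v)}=2^{r(\kappa+1)n_H}r^{2m_H}$, since such a pattern is fixed by the $\le r$ messages each node sends and the assignment of its $E_H$-edges to them. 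Using $m_H=\Theta(\alpha n_H)$ and $r\le2^{b/4}$ (so $2\log r\le b/2$), if $\kappa$ is below a suitable constant times $\alpha b/r$ there are fewer patterns than legal configurations, hence two legal configurations share their $E_H$-pattern and differ on some $e^\ast=(v,u)\in E_H$. Because only one vertex of $H$ touches the rest of $G$, take $B$ to be whichever of $v,u$ is not that vertex; the hybrid configuration that takes the state of $B$ from one configuration and everything else from the other is illegal on $e^\ast$ but accepted, since each node's local view equals its view in the configuration it was copied from --- a contradiction. This yields $\kappa=\Omega(\alpha b/r)$. An analogous, simpler crossing on the single pendant edge $e_0$ of $P$, where $W_\Sigma(\{e_0\})\le2^{2(\kappa+1)}$ and $L(\{e_0\})=2^{b}$, yields $\kappa=\Omega(b)$. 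Combining gives $\kappa=\Omega(\max\{b,\alpha b/r\})=\Omega(\ceil{\alpha/r}b)$.

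\textbf{Main obstacle.} The delicate part is the lower bound. One must (a) count communication patterns correctly in the $\xc(r)$ model, where beyond the $\le r$ messages a node also selects which message to send on each edge --- this is exactly what produces the extra $r^{2m_H}$ factor and hence the linear-in-$r$ savings --- and (b) engineer the construction so that the densest subgraph has a single vertex on its boundary, ensuring the hybrid step perturbs no node's local view. The rest is routine counting and the standard completeness/soundness bookkeeping.
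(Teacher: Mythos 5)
Your proposal is correct and follows essentially the same route as the paper: minimizing orientation plus color addressing for the upper bound (the paper's \lemmaref{EA trade-off upper} is your scheme almost verbatim, including the reuse of one outgoing message on incoming edges and the "ownership'' checks phrased as orientation-consistency conditions), and a count of communication patterns in the $\xc(r)$ model ($\le r$ messages per node times an assignment of edges to messages, giving the $r^{2m_H}$ factor) followed by a hybrid/crossing step for the lower bound (\claimref{conf crossing} and \lemmaref{EA trade-off lower}). Two deviations are worth noting. First, the paper runs the counting argument on the densest induced subgraph of an \emph{arbitrary} graph $G$ and invokes \lemmaref{lem-arb} to relate its density to $\alpha(G)$, thereby proving the lower bound for every graph of arboricity $\alpha$; you instead construct a specific hard instance ($K_{2\alpha,M}$ plus a path), which only establishes the bound for some graph of that arboricity. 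Under the per-graph reading of the theorem (which is what the paper's proof delivers) your version is strictly weaker, though the fix is immediate: run your count on the densest induced subgraph of the given $G$, and in the hybrid step swap the entire state and label of one endpoint of $e^*$ as in \claimref{conf crossing} (your "single boundary vertex'' trick is then unavailable, but it is not needed). Second, you obtain the additive $\Omega(b)$ term by crossing on a pendant edge, whereas the paper reduces from the nondeterministic two-party complexity of $\EQ$ on a one-edge network; both are valid, and your direct count $2^{2(\kappa+1)}<2^b$ is correct precisely because a node with a single $E_H$-edge sends only one of at most $2^{\kappa+1}$ strings on it, avoiding the lossy $\binom{2^\kappa}{r}$ factor.
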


This theorem states both an upper and a lower bound. We start with
the lower bound.

\begin{lemma}\label{EA trade-off lower}
	For every $1\le r \le \min\set{\Delta,2^{b/4}}$, the verification 
	complexity of any PLS for $(\cF,\EA_b)$ in the $\xc(r)$ model is 
	$\Omega((\frac{\alpha}{r}+1)b)$.
\end{lemma}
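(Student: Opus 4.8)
The plan is to adapt the crossing technique of~\cite{BFP} to the $\xc(r)$ model, obtaining the two summands $\Omega(\alpha b/r)$ and $\Omega(b)$ of the bound from separate constructions. For the $\Omega(b)$ summand (the binding one when $r>\alpha$), take $G$ to be any connected $n$-vertex graph with a pendant edge $e=(u,v)$ where $u$ is a leaf (for instance a path), and consider the $2^{b}$ legal configurations in which $B_u(e)=B_v(e)=\beta$ ranges over all of $\{0,1\}^b$ while every other edge string is fixed to $0^b$. The restriction of the communication pattern to $e$ is the pair $\bigl(M_u(e),M_v(e)\bigr)$, which takes at most $2^{2(\kappa+1)}$ values; hence if $\kappa<\tfrac{b}{2}-1$, two such configurations with $\beta\neq\beta'$ induce the same pair. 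Form the hybrid that takes $u$'s state and label from the $\beta'$-configuration and everything else from the $\beta$-configuration: it violates $\EA_b$ on $e$, yet the local input (state, label, incoming messages) of $u$ coincides with its input in the $\beta'$-configuration --- its sole incoming message $M_v(e)$ being unchanged --- while the local input of every other node coincides with its input in the $\beta$-configuration, using that $M_u(e)$ is unchanged and that $u$ has no neighbor but $v$. So all nodes accept, contradicting soundness, and $\kappa=\Omega(b)$ for every $r$.

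For the $\Omega(\alpha b/r)$ summand, let $H=K_c$ be a clique on $c=\Theta(\alpha)$ vertices with $c\le\Delta$ (e.g.\ $c=2\alpha-3$ for $\alpha\ge 3$; this range of $c$ is enough because, by \lemmaref{lem-arb}, a graph of maximum degree $\Delta$ has arboricity at most $\lfloor\Delta/2\rfloor+1$, and $\alpha\le 2$ is subsumed by the $\Omega(b)$ bound). Let $G$ consist of $H$ together with a path on the remaining vertices, joined to a fixed vertex $z\in V_H$ by one edge; then $\alpha(G)=\alpha(K_c)=\Theta(\alpha)$ and $\Delta(G)\le\Delta$. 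Take as legal configurations those obtained by assigning, independently, a $b$-bit string to each edge of $H$ (consistently at both endpoints) and fixing everything else, so their number is $2^{b\,m_H}$ with $m_H=\binom{c}{2}$. Now count the restrictions of the communication pattern to $E_H$: each $w\in V_H$ sends, on its $c-1$ incident $H$-edges, a tuple of messages taking at most $r$ distinct values (by the $\xc(r)$ restriction), and the number of such tuples is at most $2^{(\kappa+1)r}\cdot r^{\,c-1}$ --- choose the $\le r$ distinct messages, then assign one of them to each $H$-edge. Multiplying over the $c$ vertices of $H$ gives at most $2^{(\kappa+1)rc}\cdot r^{(c-1)c}$ such patterns.

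If this is smaller than $2^{b\,m_H}$ --- equivalently, $(\kappa+1)r+(c-1)\log r<\tfrac{b(c-1)}{2}$ --- then two of our legal configurations $G_s\neq G_{s'}$ share an $E_H$-restricted communication pattern, hence disagree on the common string of some edge $e^{*}=(v^{*},u^{*})\in E_H$; choosing $v^{*}$ to be an endpoint of $e^{*}$ different from $z$ (at least one exists), every edge at $v^{*}$ lies in $E_H$. The hybrid taking $v^{*}$'s state and label from $G_{s'}$ and everything else from $G_s$ violates $\EA_b$ on $e^{*}$, but since the messages $v^{*}$ sends on, and receives on, its edges agree in $G_s$ and $G_{s'}$ (all of them lie in $E_H$), the local input of $v^{*}$ equals its input in $G_{s'}$ and every other node's input equals its input in $G_s$, so all nodes accept --- again a contradiction. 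Since $r\le 2^{b/4}$ gives $(c-1)\log r\le\tfrac{b(c-1)}{4}$, the displayed inequality is implied by $(\kappa+1)r<\tfrac{b(c-1)}{4}$, and hence it can fail only when $\kappa\ge\tfrac{b(c-1)}{4r}-1=\Omega(\alpha b/r)$. Taking the larger of the two bounds yields $\kappa=\Omega\bigl((\tfrac{\alpha}{r}+1)b\bigr)$, since $\max\{\alpha b/r,\,b\}=\Theta\bigl((\tfrac{\alpha}{r}+1)b\bigr)$.

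The step I expect to be the main obstacle is the communication-pattern count in the dense construction. In contrast to the broadcast regime (one $\kappa$-bit message per vertex) and the unicast regime (one message per edge), in $\xc(r)$ a vertex's behavior on its $c-1$ clique-edges is one of only $r$ messages per edge, so the count incurs a spurious factor $r^{(c-1)c}$ that must still be kept strictly below $2^{b\binom{c}{2}}$; controlling this factor is exactly what the hypothesis $r\le 2^{b/4}$ provides, and it is what makes the $1/r$ savings tight. The remainder is a routine adaptation of the crossing argument, the one delicate point being to cross along a single vertex all of whose edges lie inside the dense part~$H$.
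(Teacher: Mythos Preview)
Your argument is correct, and at its core uses the same counting-plus-hybrid (``crossing'') technique as the paper: bound the number of possible $E_H$-restricted communication patterns by $\bigl(\text{choice of }r\text{ messages}\bigr)\times r^{\text{degree}}$ per node, compare to $2^{b\,|E_H|}$ legal configurations, and use $r\le 2^{b/4}$ to absorb the $r^{\cdots}$ factor. The paper's \claimref{conf crossing} is exactly your hybrid step.

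The one substantive difference is in generality. The paper does not construct a hard instance; instead, for an \emph{arbitrary} graph $G$ it takes $H$ to be the densest induced subgraph (which, by \lemmaref{lem-arb}, has $m_H/n_H\ge\alpha(G)/2$) and runs the counting there. This yields a \emph{per-graph} bound: any PLS must send $\Omega(\alpha(G)b/r)$-bit messages on $G$ itself. Your clique-plus-path construction only shows that \emph{some} graph with the target arboricity is hard; that suffices for the lemma as literally stated, but the per-graph version is what the paper actually invokes later (in the proof of \lemmaref{Flow lower}, the $\EA_b$ bound is applied to the specific graph arising from the reduction, not to a clique). To upgrade your proof to per-graph strength, replace $K_c$ by the densest induced subgraph of the given $G$ and rerun your inequality with $c$ replaced by $n_H$ and $\binom{c}{2}$ by $m_H$.

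Two minor remarks. First, for the $\Omega(b)$ summand the paper simply cites the non-deterministic communication complexity of $\EQ$; your direct pendant-edge argument is equivalent and self-contained. Second, your care in choosing $v^{*}\neq z$ so that \emph{all} incident edges of the crossed vertex lie in $E_H$ is exactly the right thing to do, and is in fact a point the paper's proof of \claimref{conf crossing} is somewhat casual about; in your clique construction this is automatic, whereas for a general densest subgraph one must argue (or adjust the count) that the crossed vertex's external edges do not spoil the hybrid.
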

To prove \lemmaref{EA trade-off lower}, we prove the following claim 
using ideas similar to those used in the proof of 
\lemmaref{MV broadcast lower bound}.

\begin{claim}\label{conf crossing}
	Let $G=(V,E)$ be a graph, let $1\le r \le 
	\min\set{\Delta,2^{b/4}}$  and consider a PLS for $(\cF,\EA_b)$ in the $\xc(r)$ model. 
	For every induced subgraph $H=(V_H,E_H)$ of $G$,  $W_\Sigma(H) \ge L(H)$.
\end{claim}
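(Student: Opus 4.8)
The plan is to argue by contradiction with a crossing argument, in the spirit of the proof of \lemmaref{MV broadcast lower bound}, the difference being that here it suffices to cross a \emph{single} vertex rather than a whole vertex class: since $\EA_b$ is violated as soon as one edge becomes inconsistent, flipping the state of one well-chosen endpoint already produces an illegal configuration.

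Assume toward a contradiction that $W_\Sigma(H)<L(H)$. By the pigeonhole principle there are two distinct legal configurations $H_s\neq H_s'$ of $H$ inducing the same communication pattern of $\Sigma$, so that $M^{H_s}_v(e)=M^{H_s'}_v(e)$ and $M^{H_s}_u(e)=M^{H_s'}_u(e)$ for every edge $e=(v,u)\in E_H$. Being distinct and legal, $H_s$ and $H_s'$ disagree on the string of some edge $e^{*}=(v^{*},u^{*})\in E_H$, and legality gives $B^{H_s}_{v^{*}}(e^{*})=B^{H_s}_{u^{*}}(e^{*})\neq B^{H_s'}_{v^{*}}(e^{*})=B^{H_s'}_{u^{*}}(e^{*})$.

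Next I would form the crossed configuration $\hat H$ in which $v^{*}$ adopts the state and the label it has in $H_s'$ (the label being the one assigned by $\bP$), while every other node keeps the state and label it has in $H_s$. This $\hat H$ is illegal: on $e^{*}$, node $v^{*}$ holds $B^{H_s'}_{v^{*}}(e^{*})$ while $u^{*}$ still holds $B^{H_s}_{u^{*}}(e^{*})$, and these differ. It remains --- and this is the heart of the argument --- to check that $\bV$ nevertheless accepts $\hat H$. Since only one node was flipped, every edge of $H$ has an endpoint whose state and label are unchanged from the configuration it came from, and together with the equality of communication patterns this pins down every delivered message: $v^{*}$ sends the same messages as in $H_s'$ and, by pattern equality, receives from each neighbor precisely the message that neighbor sent it in $H_s'$, so $v^{*}$ reproduces its $H_s'$ view and outputs \True; symmetrically, every $w\neq v^{*}$ sends the same messages as in $H_s$, receives from each neighbor --- including $v^{*}$, again by pattern equality --- exactly what it received in $H_s$, so $w$ reproduces its $H_s$ view and outputs \True. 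Hence $\bV$ accepts the illegal configuration $\hat H$, contradicting the correctness of $\Sigma$; therefore $W_\Sigma(H)\ge L(H)$.

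The step I expect to take the most care is this acceptance check: one must verify, directed edge by directed edge, that the message delivered in $\hat H$ agrees with the one delivered in whichever execution ($H_s$ or $H_s'$) the receiving endpoint is supposed to mimic --- which is exactly where both ``only $v^{*}$ was flipped'' and ``the two executions share a communication pattern'' are used. Note also that the $\xc(r)$ restriction is transparent here: $v^{*}$'s state and label being intact, its verifier reproduces the same partition of its neighbors into at most $r$ classes and the same message per class, so the crossed execution is a legitimate $\xc(r)$ execution. The bound on $r$ enters only later, in \lemmaref{EA trade-off lower}, where $W_\Sigma(H)$ is bounded from above in terms of $\kappa$, $r$, and the degrees of $H$.
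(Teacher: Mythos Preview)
Your proof is correct and follows essentially the same single-vertex crossing argument as the paper: find two legal configurations with the same communication pattern, swap one endpoint's state and label, and observe that the resulting illegal configuration is still accepted. The one cosmetic difference is that the paper first pads each legal configuration $H_s$ of $H$ to a configuration $G(H_s)$ of the ambient graph $G$ by setting $B_v(e)=0^b$ on every edge $e\in E\setminus E_H$, and then carries out the crossing in $G$; you work directly in $H$. Since $\cF$ contains configurations on all graphs, your direct route is valid, and the acceptance check and the single-node swap are identical in substance to the paper's.
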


\Proof
Given $1\le r \le \min\set{\Delta,2^{b/4}}$, let $\Sigma=(\bP,\bV)$ be 
a PLS for $(\cF,\EA_b)$ in the $\xc(r)$ model.  Given $G=(V,E)$, 
let $H=(V_H,E_H)$ be an induced subgraph of $G$.
Let $B^{G_s}_v(e)$ be $B_v(e)$ (the bit-string held by $v$ for 
the edge $e$) in configuration $G_{s}$. Given a configuration 
$H_{s}$ of
the nodes of $H$, extend it to a configuration $G({H_s})$ of $G$ as
follows: for every $e=(u,v)\in E_H$ let 
$B^{G({H_s})}_v(e)=B^{H_{s}}_v(e)$, and for every $e=(u,v)\in 
E\setminus 
E_H$, let $B^{G({H_s})}_v(e)=0^b$. 
Clearly, $H_s$ is legal if and only if  
$G({H_s})$ is legal. 

Now, to prove the claim, assume for contradiction that
$W_\Sigma(H)<L(H)$.  
Then there must be two different legal configurations $G({H_s})$ and 
$G({H_s'})$ with the same communication pattern.  There must
exist an edge ${e^*=(v,u)\in E_H}$  such that 
$B^{G({H_s})}_v(e^*) = B^{G({H_s})}_u(e^*)
\neq B^{G({H_s'})}_v(e^*) = B^{G({H_s'})}_u(e^*)$:
This is because we assume $G({H_s})\ne G({H_s'})$, and by 
construction, the difference can be only in $E_H$ 
edges. Consider the configuration obtained from $G({H_s})$ with 
$B^{G({H_s'})}_v(e)$ replacing $B^{G({H_s})}_v(e)$ for every 
edge $e=(v,w)\in E$. This configuration is illegal, because 
$B^{G({H_s})}_u(e^*)\neq B^{G({H_s'})}_v(e^*)$. However, since all 
nodes send the same messages in $G({H_s})$ and in $G({H_s'})$ 
under $\Sigma$, we get that with the label assigned by $\bP$ to $v$ in 
$G({H_s'})$ and the labels assigned by $\bP$ to all $V\setminus 
\set{v}$ nodes in $G({H_s})$, the local view of $v$ in verification 
is exactly as in $G({H_s'})$, and the local view of all other nodes 
in verification is exactly as in $G({H_s})$. Therefore, all nodes 
output \True on an illegal configuration, a contradiction.
\QED

\ProofOf{Lemma}{EA trade-off lower}
It is known   that
the non-deterministic two-party communication complexity of verifying
the equality ($\EQ$) of $b$-bit strings is $\Omega(b)$ \cite[Example 2.5]{KN}. 
Simulating a verification scheme for $(\cF,\EA_b)$ on a network of one 
edge, is a correct non-deterministic
two-party communication protocol for $\EQ$. 
Therefore, $\Omega(b)$ is a lower bound for $(\cF,\EA_b)$. 

We now prove that $\Omega(\frac{\alpha}{r}b)$ is also a lower bound for $(\cF,\EA_b)$. 
Let $G_s\in \cF$ be a configuration with an underlying graph $G=(V,E)$, 
and let $H=(V_H,E_H)$ be the densest induced subgraph of $G$, i.e., 
${m_H/n_H\ge m_{H'}/n_{H'}}$ for every $V_H'\subseteq V$. By 
\lemmaref{lem-arb}, $\alpha=\lceil m_H/(n_H-1)\rceil$. W.l.o.g., let
$V_H=\{v_1,\ldots,v_{n_H}\}$, and let ${d_H(v_i)=|\{(v_i,v_j)\in E_H\}|}$ 
be the degree of node $v_i$ in $H$. 

We now show that for $1\le r \le \min\set{\Delta,2^{b/4}}$ and any 
scheme $\Sigma$ for $(\cF,\EA_b)$ with verification complexity 
$\kappa<\frac{\alpha b}{4r}-2$ in the $\xc(r)$ model, it holds that $W_\Sigma(H)<L(H)$.
Let $\Sigma$ be such a verification scheme. Then
%
%
%
\begin{align}
	\label{eq0}
	W_\Sigma(H) &\le \prod_{i=1}^{n_H} \left[\dbinom{2^\kappa}{r}\cdot 
	r^{d_H(v_i)}\right] \\
	\label{eq1}
	&\le \left(\frac{2^\kappa\cdot e}{r}\right)^{rn_H}\cdot r^{2m_H} \\
	\label{eq2}
	&< 2^{\alpha b n_H/4}\cdot r^{2m_H} \\
	\label{eq3}
	&\le 2^{\frac{b}{2}m_H}\cdot r^{2m_H} \\
	\label{eq4}
	&\le 2^{bm_H} = L(H)~.
\end{align}

\inequalityref{eq0} is true since for every PLS in the $\xc(r)$ model with 
verification complexity $\kappa$, 
every communication pattern can be 
constructed by letting each node $v_i$
choose $r$ different messages of size $\kappa$ each,
and for each of its $d_H(v_i)$ neighbors, let it choose one of the $r$ 
messages to send. \inequalityref{eq1} is due to the fact that 
${x\choose y}\le (\frac{x\cdot e}{y})^y$  for 
$x,y\ge 0$. \inequalityref{eq2} follows from  our
assumption that $\kappa<\frac{\alpha b}{4r}-2$. \inequalityref{eq3} 
follows from \lemmaref{lem-arb} which implies that $\alpha\le 
2m_H/n_H$, and  \inequalityref{eq4}  from our assumption that 
$r \le 2^{b/4}$.

Therefore we may conclude that if $\kappa<\frac{\alpha b}{4r}-2$, then, 
by \claimref{conf crossing}, $\Sigma$ is not a correct 
verification scheme for $(\cF,\EA_b)$. 
This concludes the proof of the lower bound.
%
%
\Qed

Next, we turn to the upper bound. To this end we define a more general 
problem  as follows.

\begin{definition}[$b$-bit Edge $\psi$ ($\Ep_b$)]\hfill\\
	\textbf{Instance}: Each node 
	$v$ holds in its state a $b$-bit string $B_v(e)$ for each incident edge 
	$e$. \\
	\textbf{Question}: Is $\psi_b(B_v(e),B_u(e))=\True$ for every 
	edge $e=(u,v)$, where  $\psi_b$ is a given symmetric predicate of two 
	$b$-bit strings,
	i.e., $\psi_b:\Set{0,1}^b\times\Set{0,1}^b\to\Set{\True,\False}$ and 
	$\psi(s,s')=\psi(s',s)$ for all $s,s'\in\Set{0,1}^b$? 
\end{definition}
\begin{lemma}\label{EA trade-off upper}
	For every $1\le r < 2\alpha$, there exists a PLS for 
	$(\cF,\Ep_b)$ in the $\xc(r)$ model  with verification complexity $O(\frac\alpha{r}(b+\log \Delta))$, and for every $2\alpha\le r \le \Delta$, there exists a PLS for 
	$(\cF,\Ep_b)$ in the $\xc(r)$ model  with verification complexity $O(b)$.
\end{lemma}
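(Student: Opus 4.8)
The plan is to treat the two ranges of $r$ separately. For $2\alpha\le r\le\Delta$ there is nothing new: by definition $\Ep_b$ is a conjunction of edge predicates $\psi_b(B_v(e),B_u(e))$, each involving only the two endpoints of a single edge, and the largest local input to such a predicate is the $b$-bit string $B_v(e)$. Hence \lemmaref{lem-orientation} directly gives a PLS in the $\xc(2\alpha)$ model --- and therefore also in the $\xc(r)$ model for every $r\ge 2\alpha$, since a larger $r$ is only less restrictive --- with verification complexity $O(b)$. All the work is in the range $1\le r<2\alpha$, where the goal is verification complexity $O(\frac{\alpha}{r}(b+\log\Delta))$.

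For that range I would combine the two techniques of \sectionref{sec-tools}. The prover $\bP$ fixes a minimizing orientation of $E$ (computable in linear time), so that every node has fewer than $2\alpha$ outgoing edges (\lemmaref{lem-arb-deg}), and in parallel runs the color-addressing prover of \lemmaref{lem-coloring}, assigning each node $v$ a color $\COL_v$ of $O(\log\Delta)$ bits; the label of $v$ records $\COL_v$, $v$'s claimed colors for its neighbors, and, for each incident edge, whether it is outgoing or incoming. Deterministically from the edge numbering and the orientation bits in its label, the verifier $\bV$ at $v$ partitions the incident edges into the $r$ groups required by the model, sending the $j$-th outgoing edge (in index order) to group $j\bmod r$ and the incoming edges arbitrarily. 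To the neighbors in group $i$, $v$ sends one message: $\COL_v$, followed by, for each \emph{outgoing} edge $e=(v,u)$ in group $i$, a sub-message tagged with $v$'s claimed color of $u$ and carrying the string $B_v(e)$; on an incoming edge $v$ sends no edge-specific data. Besides the color-addressing checks of \lemmaref{lem-coloring}, $\bV$ at $v$ does the following for every incident edge $e=(v,u)$: if the label marks $e$ incoming, $v$ locates by its own color the sub-message that $u$ sent for $e$, rejects if there is none, and otherwise accepts iff $\psi_b(B_v(e),x)=\True$ where $x$ is the carried string; if the label marks $e$ outgoing, $v$ rejects if it nevertheless receives an $e$-sub-message from $u$.

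Correctness in the positive direction is clear from the honest labels (the true minimizing orientation and a proper distance-$2$ coloring): each edge has one tail and one head, the head finds the expected value, $\psi_b$ holds, and all color checks pass. For the negative direction, suppose $\psi_b(B_v(e^*),B_u(e^*))=\False$ for some edge $e^*=(u,v)$ under an arbitrary label assignment. By \lemmaref{lem-coloring}, either some color check fails (and we are done) or all claimed colors are correct, so every sub-message is unambiguously addressed and delivered; in this case a short case analysis on the two orientation bits of $e^*$ suffices: if they agree on a direction, the head endpoint receives the other endpoint's string and rejects because $\psi_b$ fails; if both labels mark $e^*$ outgoing, each endpoint receives an $e^*$-sub-message it was not expecting and rejects; if both mark it incoming, each endpoint fails to receive the value it expects and rejects. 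For the complexity, each group-message is $\COL_v$ (that is $O(\log\Delta)$ bits) plus at most $\lceil\delta(G)/r\rceil$ sub-messages of $O(b+\log\Delta)$ bits each; since $r<2\alpha$ we have $\alpha/r>1/2$ and $\lceil\delta(G)/r\rceil=O(\alpha/r)$, so the length is $O(\frac{\alpha}{r}(b+\log\Delta))$, and by construction $v$ sends at most $r$ distinct messages.

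I expect the only genuinely delicate point to be the negative-direction argument under an adversarial orientation: the prover controls the orientation bits, which need not be consistent or acyclic, so one cannot assume each edge has a well-defined direction, and it is precisely the rule ``reject on an unexpected presence or absence of the carried value'' that forces some endpoint of the violated edge to reject in every one of the label choices. A secondary thing to get right is that folding the broadcast color-addressing scheme into the $\xc(r)$ scheme does not push the number of distinct messages past $r$ (it does not, as $\COL_v$ is merely prepended to each of the $\le r$ group-messages), and that no per-incoming-edge communication is used --- this is what avoids a blow-up of order $\frac{\Delta}{r}\log\Delta$ on high-degree, low-arboricity graphs such as stars.
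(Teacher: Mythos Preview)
Your proposal is correct and follows essentially the same approach as the paper: for $r\ge 2\alpha$ invoke \lemmaref{lem-orientation} directly, and for $r<2\alpha$ combine a minimizing orientation with color addressing, bundle the at most $\lceil d^o(v)/r\rceil\in O(\alpha/r)$ outgoing-edge sub-messages (each an $O(b+\log\Delta)$-bit pair of color tag and $B_v(e)$) into $r$ group messages, and verify via the presence/absence of a sub-message tagged with one's own color together with the $\psi_b$ check. The only cosmetic differences are that the paper sends a single fixed outgoing-group message on all incoming edges (rather than distributing incoming edges arbitrarily among the $r$ groups) and phrases the negative-direction case analysis a bit more tersely; your explicit four-case treatment of the adversarial orientation bits is equivalent.
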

\Proof
Let $1\le r < 2\alpha$ be given. We construct a scheme $\Sigma=(\bP,\bV)$ in the $\xc(r)$ model as follows. $\Sigma$ uses color addressing and minimizing orientation. 
Let $\COL_v$ be the color assigned to node $v$, 
and let $d^o(v)$ denote 
the out-degree of node $v$ under the orientation. 
The verifier $\bV$ at node $v$ partitions
the outgoing edges into $r$ parts, each of size at most $Q\DEF\ceil{d^o(v)/r}$, such that 
all edges in a part are sent the same message as follows.
Let $\set{e_1=(v,w_1),\ldots,e_Q=(v,w_Q)}$ be one part of outgoing edges of $v$. 
The message $v$ sends over these edges is the 
list of $Q$ 
pairs $(\COL_{w_i}, B_v(e_i))$ for $1\le i\le Q$. One of the messages that are sent over outgoing edges is sent over all incoming edges (in order to meet the limit of only $r$ different messages.)
Every node $v$, upon 
receiving a message 
$M_w(e)$  over an edge $e=(v,w)$, verifies the following conditions.	
\begin{compactenum}
	\item If $e$ is an outgoing edge, then there exists no pair $(\COL, x)$ in $M_w(e)$ such that $\COL=\COL_v$.
	\item If $e$ is an incoming edge then: 
	\begin{compactenum}
		\item There exists exactly one pair $(\COL, x)$ in $M_w(e)$ such that $\COL=\COL_v$.
		\item For the pair $(\COL_v, x)$, it holds that 
		$\psi_b(x, B_v(e))=\True$.
	\end{compactenum}
\end{compactenum} 

Obviously, this is a PLS in the $\xc(r)$ model.
We now prove its correctness.
By \lemmaref{lem-coloring}, we can assume that the colors of neighbors of each node are different from each other and from the color of the node.
If the configuration is legal and labels are assigned 
according to $\bP$, all nodes output \True. Suppose now that the 
configuration is illegal.
Hence, there must be two neighbors, $v$ and $u$, such that for the edge $e=(u,v)$ we have ${\psi_b(B_u(e),B_v(e))=\False}$. 
Since $r<2\alpha$, \lemmaref{lem-orientation} does not imply a proper verification of the orientation. Therefore, our scheme should verify it. If both $v$ and $u$ consider $e$ as an outgoing edge, then $M_v(e)$ contains the pair $(\COL_u, B_v(e))$. Therefore, $u$ rejects condition (2) and outputs \False.
If both $v$ and $u$ consider $e$ as an incoming edge, then $M_v(e)$ does not contain a pair $(\COL, x)$ such that $\COL=\COL_u$. Therefore, $u$ rejects condition (3.a) and outputs \False.
Assume now, w.l.o.g., that $e$ is oriented from $v$ to $u$.
By definition of the scheme, there exists exactly one pair $(\COL_u, x)$ in $M_v(e)$, and for this pair we know that $x= B_v(e)$.
Since $\psi_b(B_u(e),B_v(e))=\False$, $u$ rejects condition (3.b) and outputs \False.

Regarding complexity, by definition of degeneracy, for 
every $v$, it holds that $d^o(v)\le\delta(G)$. By \lemmaref{lem-arb-deg}, 
$\delta(G)< 2\alpha(G)$.
By \lemmaref{lem-coloring}, every 
$\COL$ can be represented using $O(\log \Delta)$ bits, and overall, every 
message is of size $O(\frac{\alpha}{r}(b+\log \Delta))$. 

For $2\alpha\le r \le \Delta$, by \lemmaref{lem-orientation} and the fact that every scheme in the $\xc(r_1)$ model is in particular a scheme in the $\xc(r_2)$ model for $r_2\ge r_1$, there exists a PLS $\Sigma'=(\bP',\bV')$ for $(\cF,\Ep_b)$ in the $\xc(r)$ model with verification complexity $b$.  
\QED

$\EA_b$ is a special case of $\Ep_b$, where $\psi$ is the equality 
predicate. Therefore, \lemmaref{EA trade-off upper} gives a tight upper 
bound for $(\cF,\EA_b)$ for the case $b\in\Omega(\log \Delta)$. This 
concludes the proof of \theoremref{EA 
	tight trade-off}.

We note that \theoremref{EA	tight trade-off}, in conjunction with 
\theoremref{thm-rpls},
gives  the following corollary.

\begin{corollary}\label{Rand EA tight trade-off}
	Let $b\in \Omega(\log \Delta)$.
	For every $1\le r \le \min\set{\Delta,2^{b/4}}$, the randomized verification 
	complexity of $(\cF,\EA_b)$  in the $\xc(r)$ model is 
	$\Theta(\log (\ceil{\frac{\alpha}{r}}b))$.
\end{corollary}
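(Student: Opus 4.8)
The plan is to derive this corollary mechanically by combining the deterministic bounds of \theoremref{EA tight trade-off} with the two directions of the deterministic--randomized transformation of \theoremref{thm-rpls}; both transformations preserve the parameter $r$ and apply verbatim to the pair $(\cF,\EA_b)$, so the range restriction $1\le r\le\min\set{\Delta,2^{b/4}}$ and the hypothesis $b\in\Omega(\log\Delta)$ (needed to make the deterministic bound $\Theta(\ceil{\alpha/r}b)$ tight) carry over unchanged. Write $x\DEF\ceil{\alpha/r}b$; in the regime of interest $x\ge b\ge 1$, so $\log x$ is well defined.

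For the upper bound I would start from the deterministic PLS given by \theoremref{EA tight trade-off}, which has verification complexity $\kappa_d\in O(x)$ in the $\xc(r)$ model, and feed it into the PLS$\to$RPLS direction of \theoremref{thm-rpls}; this produces an RPLS for $(\cF,\EA_b)$ in the $\xc(r)$ model with verification complexity $O(\log\kappa_d)=O(\log x)$. For the lower bound I would argue contrapositively: given any RPLS for $(\cF,\EA_b)$ in the $\xc(r)$ model with verification complexity $\kappa_r$, the RPLS$\to$PLS direction of \theoremref{thm-rpls} yields a deterministic PLS for the same problem in the $\xc(r)$ model with verification complexity $O(2^{\kappa_r})$; by the deterministic lower bound (\lemmaref{EA trade-off lower}) this forces $2^{\kappa_r}\in\Omega(x)$, i.e.\ $\kappa_r\in\Omega(\log x)$. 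Equivalently, an RPLS of complexity $o(\log x)$ would transform into a deterministic PLS of complexity $2^{o(\log x)}=o(x)$, contradicting \lemmaref{EA trade-off lower}.

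I do not expect a genuine obstacle here: the two small points requiring care are (i) confirming that the transformations of \theoremref{thm-rpls} keep the bound $r$ fixed and are problem-agnostic, so that they indeed apply to $(\cF,\EA_b)$, and (ii) absorbing the hidden multiplicative and additive constants (such as the ``$-2$'' in \lemmaref{EA trade-off lower}) when passing through the exponential, which is harmless since $x\to\infty$ in the relevant regime. Matching the resulting $O(\log x)$ and $\Omega(\log x)$ bounds gives the claimed $\Theta(\log(\ceil{\alpha/r}b))$.
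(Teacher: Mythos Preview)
Your proposal is correct and mirrors exactly what the paper does: it states the corollary as an immediate consequence of \theoremref{EA tight trade-off} together with \theoremref{thm-rpls}, with no additional argument. Your write-up is in fact more detailed than the paper's one-line justification, and the two care points you flag (that the transformations preserve $r$ and that constants are absorbed under the logarithm/exponential) are exactly the minor checks needed.
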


\subsection{An Advanced Example: The Maximum Flow Problem}
In this section we consider a more sophisticated problem, namely Maximum
Flow in the context of the $\xc(r)$ model.
The best previously known result~\cite{KKP}  was for verification 
of ``$k$-flow'': the goal is to verify that the maximum flow between a 
given pair of nodes is exactly $k$. The verification complexity of the 
scheme of \cite{KKP} is $O(k(\log k+\log n))$ in the broadcast model.
In~\theoremref{mfk upper}, we show an improvement of this result and a generalization to the $\xc(r)$ model.

First, we solve a slightly different problem,
formalized as follows.
Let $\cF_{st}$ be the family of configurations of graphs,
where a graph in  $\cF_{st}$ has two distinct nodes denoted 
$s$ 
and $t$ called \emph{source} and \emph{sink}, respectively,
and a natural number $c(e)$ called the 
\emph{capacity} associated with each edge $e$.
The $\mf$ problem is defined over the family of 
configurations $\cF_{st}$ as follows.
\begin{definition}[Maximum Flow (\mf)]\hfill\\
	\textbf{Instance}: A configuration $G_s\in \cF_{st}$, where each 
	node $v$ has an integer $f(v,u)$ for every 
	neighbor $u$.\\
	\textbf{Question}: Interpreting $f(v,u)$ as the amount of flow
	from $v$ to $u$ ($f(v,u)<0$ means flow  from 
	$u$ to $v$), is $f$ a maximum flow from  
	$s$ to $t$? 
\end{definition}
Recall that $f$ is a legal flow iff it satisfies the 
following 
three conditions (see, e.g., \cite{AhujaMO-93}). 
\begin{compactitem}
	\item Anti symmetry: for every $(v,u)\in E$, $f(v,u) = -f(u,v)$.
	\item Capacity compliance: for every $(v,u)\in E$, $|f(v,u)|\le 
	c(v,u)$.
	\item Flow conservation: for every node $v\in V\setminus\Set{s,t}$, 
	$\sum_{u\in V}f(v,u) = 0$.
\end{compactitem}
If all three conditions hold, 
then, by the max-flow min-cut theorem, $f$ is maximum iff 
there is a saturated cut.

We denote by $f_{\max}$ the maximal flow amount 
over all edges of $G$ (note that $f_{\max}$ need not be polynomial in 
$n$). Also, for a bit string $x=x_0x_1\cdots x_k$, let $\bar 
x=\sum_{i=0}^kx_i2^i$. 

\begin{theorem}\label{flow verification}
	Let $\log f_{\max}\in \Omega(\log n)$.
	There exists a constant $c>1$ such that for	every 
	$1\le r \le \min\set{\alpha/c,\sqrt[4]{f_{\max}}}$, the 
	verification 
	complexity of  $(\cF_{st},\mf)$  in the $\xc(r)$ model is 
	$\Theta(\log (f_{\max})\alpha/r)$.
\end{theorem}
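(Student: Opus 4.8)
The plan is to prove the two matching bounds separately, closely following the template of \theoremref{EA tight trade-off}.

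\smallskip\noindent\emph{Upper bound.} By the max-flow min-cut theorem, $G_s\in\cF_{st}$ is a positive instance of $\mf$ iff $f$ is a legal flow \emph{and} some $s$--$t$ cut is saturated. Of the three legality conditions, capacity compliance and flow conservation are checkable at $v$ with no communication once $v$ may trust the values $f(v,u)$ it stores (each endpoint already knows $c(v,u)$), while anti-symmetry, $f(v,u)=-f(u,v)$, is exactly an instance of $\Ep_b$ with $b=O(\log f_{\max})$ and the symmetric predicate ``$\psi(x,y)=\True$ iff $x=-y$'', hence verifiable in $\xc(r)$ with complexity $O(\tfrac{\alpha}{r}(\log f_{\max}+\log\Delta))$ by \lemmaref{EA trade-off upper} (legitimate since $r\le\alpha/c<2\alpha$). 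For the saturated-cut condition the prover puts into each label one bit telling on which side of a \emph{minimum} cut $S$ (with $s\in S$, $t\notin S$) the node lies; each verifier appends this bit to every one of the (at most $r$) messages it already sends --- $O(1)$ extra bits, still within $\xc(r)$ --- and then checks locally that $s$ is marked ``in'', $t$ ``out'', and that every incident edge whose received side-bit disagrees with its own is saturated in the right direction. Correctness is immediate: a legal maximum flow is accepted by taking $S$ a minimum cut; an illegal flow is caught by a local check or by the $\Ep_b$ subroutine; and a legal but non-maximum flow saturates no cut, so whatever $S$ the prover announces, some crossing edge is unsaturated and its endpoint rejects. Since $\log\Delta\le\log n=O(\log f_{\max})$ the overall complexity is $O(\tfrac\alpha r\log f_{\max})$, which is $\Theta(\tfrac\alpha r\log f_{\max})$ when $r\le\alpha/c$ (there $\lceil\alpha/r\rceil=\Theta(\alpha/r)$).

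\smallskip\noindent\emph{Lower bound.} I would adapt the crossing argument of \lemmaref{EA trade-off lower} and \claimref{conf crossing}. Take a dense graph $H$ with $\alpha(H)=\Theta(\alpha)$ admitting an Eulerian orientation (in-degree equal to out-degree at every vertex) --- e.g. a balanced complete bipartite graph on $2k$ vertices with $k\approx 2\alpha$, $k$ even, so that the ``uniform'' assignment of $\lfloor f_{\max}/2\rfloor$ to every oriented edge is a circulation lying in the relative interior of the circulation polytope. Build $G$ from $H$ by adding two fresh vertices $s,t$, an edge $(s,t)$ of small capacity $K$, a connecting edge $(s,v_0)$ to some $v_0\in V_H$, capacity $f_{\max}$ on every edge of $H$, and a path attached at $t$ to reach $n$ vertices. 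Then $\Set{t}$ is the unique minimum $s$--$t$ cut, of value $K$, so a flow of $G$ is maximum iff $f(s,t)=K$; and since $H$ (and the path) carry no $s$--$t$ flow, a maximum flow of $G$ is precisely the base flow $f(s,t)=K$ together with an \emph{arbitrary} integer circulation on $H$ all of whose edge values lie in $[0,f_{\max}]$. The number of such circulations is $f_{\max}^{\Theta(m_H)}$: the circulation lattice has rank $\mu=m_H-n_H+1=(1-o(1))m_H$, and restricting each fundamental-cycle flow to a sub-range of size $\Theta(f_{\max}/\mathrm{poly}(n))$ around the uniform circulation keeps every edge value feasible, which by $\log f_{\max}\in\Omega(\log n)$ still gives $2^{\Theta(\mu\log f_{\max})}$ choices. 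Hence, writing $L(H)$ for the number of configurations of $V_H$ that extend to legal configurations of $G$, we have $L(H)\ge f_{\max}^{\Theta(m_H)}$; moreover $H$ is (up to constants) the densest induced subgraph of $G$, and every vertex of $V_H\setminus\Set{v_0}$ has all its incident edges inside $E_H$.

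Now run the argument of \lemmaref{EA trade-off lower} with $b:=\Theta(\log f_{\max})$: if a PLS $\Sigma$ for $(\cF_{st},\mf)$ in $\xc(r)$ had $\kappa<\Theta(\tfrac\alpha r\log f_{\max})$, then bounding $W_\Sigma(H)\le\prod_{v\in V_H}\bigl[\binom{2^\kappa}{r}r^{d_H(v)}\bigr]\le(\tfrac{2^\kappa e}{r})^{rn_H}r^{2m_H}$ and using $r\le\sqrt[4]{f_{\max}}$ to absorb the $r^{2m_H}$ factor, exactly as in \inequalityref{eq0}--\inequalityref{eq4}, yields $W_\Sigma(H)<L(H)$; so two distinct legal configurations of $G$ share a communication pattern on $E_H$, and replacing the state and label of one vertex of $V_H\setminus\Set{v_0}$ --- an endpoint of an $H$-edge on which they differ --- by the other configuration produces a configuration that violates anti-symmetry yet is accepted, contradicting correctness. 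Combined with the trivial $\Omega(\log f_{\max})$ bound (on a single edge, the non-deterministic communication complexity of agreeing on a $\Theta(\log f_{\max})$-bit value) and with $r\le\alpha/c$, this gives $\kappa(\mf)=\Omega(\tfrac\alpha r\log f_{\max})$, matching the upper bound and completing \theoremref{flow verification}.

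\smallskip\noindent\emph{Main obstacle.} The hard part is the lower-bound construction: one needs a \emph{dense} graph $H$ on which maximum flows of the augmented graph retain $f_{\max}^{\Theta(m_H)}$ genuinely independent values --- i.e. whose circulation lattice has a nearly full-dimensional, well-spread intersection with the box $[0,f_{\max}]^{E_H}$ --- while simultaneously keeping $\alpha(G)=\Theta(\alpha)$ and the boundary of the dense part sparse enough for the crossing argument of \claimref{conf crossing}. The Eulerian-orientation trick supplies the interior circulation that makes the counting possible, and the hypothesis $\log f_{\max}\in\Omega(\log n)$ is precisely what keeps the effective value-length at $b=\Theta(\log f_{\max})$ despite the $\mathrm{poly}(n)$-sized sub-ranges (and, in the upper bound, what absorbs $\log\Delta$). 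The upper bound, by contrast, is essentially a reduction to $\Ep_b$ plus a one-bit-per-message cut certificate.
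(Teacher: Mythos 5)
Your upper bound is essentially the paper's: \lemmaref{Flow upper} also reduces anti-symmetry to $(\cF,\Ep_b)$ with $\psi(x,y)=\True$ iff $\overline{x}=-\overline{y}$, checks capacity and conservation locally, and certifies maximality with a one-bit minimum-cut indicator per node, invoking \lemmaref{EA trade-off upper} for the $O(\frac{\alpha}{r}(\log f_{\max}+\log\Delta))$ cost. Your lower bound, however, takes a genuinely different route. The paper explicitly declines to redo the counting argument for $\mf$ and instead gives a \emph{reduction in the other direction}: from a PLS $\Sigma_f$ for $(\cF_{st},\mf)$ it builds a PLS for $(\cF,\EA_b)$ on an arbitrary graph, by having the prover turn the strings $B_v(u)$ into a zero-value maximum flow (an arbitrarily oriented ``flow'' $\pm\overline{B_v(u)}$ per edge, made conservative by convergecasting the excess $\EST(T_v)$ up a BFS tree, with $s,t$ simulated by the root behind capacity-$0$ edges); anti-symmetry of the simulated flow is then equivalent to edge agreement, the overhead is $b+O(\log n)$ bits per message, and the $\Omega(\frac{\alpha}{r}b)$ bound of \lemmaref{EA trade-off lower} transfers. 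You instead run the counting-plus-crossing argument of \claimref{conf crossing} directly on a single tailored instance (a dense near-bipartite $H$ behind a capacity-$K$ bottleneck, so that legal configurations restricted to $H$ are exactly the feasible integer circulations). Your crossing step is sound --- the hybrid obtained by swapping one interior vertex's state and label violates anti-symmetry yet is accepted --- and your choice of swapped vertex in $V_H\setminus\Set{v_0}$ correctly handles the boundary issue. What each approach buys: the paper's reduction yields the lower bound for \emph{every} graph of arboricity $\alpha$ and inherits all constants from the $\EA_b$ bound for free; your construction is more self-contained but only produces one hard family per $\alpha$, and it shifts all the difficulty into the circulation count. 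That count is the one place you should be more careful: you need $L(H)\ge 2^{(1-o(1))m_H\log f_{\max}}$ (not merely $2^{\Theta(m_H\log f_{\max})}$) for the chain \inequalityref{eq0}--\inequalityref{eq4} to close under the stated threshold $r\le\sqrt[4]{f_{\max}}$, since a constant loss in the exponent of $L(H)$ degrades that threshold to a smaller root of $f_{\max}$; and the ``sub-range of size $\Theta(f_{\max}/\mathrm{poly}(n))$ per fundamental cycle'' step silently requires $f_{\max}\ge n^{C}$ for a sufficiently large constant $C$ (choosing a short-cycle basis of $K_{k,k}$, where each edge lies in $O(k)$ basis cycles, brings this down to roughly $f_{\max}\ge n^{1+\epsilon}$, which is the same implicit strengthening of $\log f_{\max}\in\Omega(\log n)$ that the paper itself uses when passing from $\log(f_{\max}/n)$ to $\log f_{\max}$). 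With those two points nailed down, your argument is a valid alternative to the paper's reduction.
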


Again, we start with the lower bound. 
We note that the counting argument used for $\EA_b$ (\lemmaref{EA trade-off lower})
cannot be applied to this problem.
To prove the lower bound for $\mf$, we show a non-trivial reduction from a problem in $(\cF,\EA_b)$ to a problem in $(\cF_{st},\mf)$.

\begin{lemma}\label{Flow lower}
	Let $\log f_{\max}\in \Omega(\log n)$. There exists a constant $c>1$ such that for	every 
	$1\le r \le \min\set{\alpha/c,\sqrt[4]{f_{\max}}}$, the 
	verification complexity of any PLS for $(\cF_{st},\mf)$  in the $\xc(r)$ model 
	is $\Omega(\log (f_{\max})\alpha/r)$.
\end{lemma}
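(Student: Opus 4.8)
I plan to embed the hardest edge‑agreement instances inside legal maximum‑flow instances, so that the counting argument already used for $\EA_b$ (\lemmaref{EA trade-off lower}, with the estimates of \inequalityref{eq1}--\inequalityref{eq4}) can be replayed almost verbatim. Fix a connected graph $H$ of arboricity $\alpha$ whose edge count satisfies $m_H=\Theta(\alpha n_H)$ (e.g.\ a union of $\alpha$ spanning trees), with $n_H=|V_H|$, and set $b=\Theta(\log f_{\max})$, chosen so that $n\cdot2^b=\Theta(f_{\max})$ (possible since $\log f_{\max}\in\Omega(\log n)$ with a sufficiently large implied constant). The goal is to exhibit inside one host graph $G'$ a family of $2^{bm_H}$ legal $(\cF_{st},\mf)$-configurations carrying one $b$-bit string per edge of $H$, whose mutual consistency is exactly what a verifier must certify.

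The construction: let $G'$ have vertex set $V_H\cup\{z,s,t\}$, where $z$ is joined to every vertex of $H$ and $s,t$ are each joined by a single edge to a fixed vertex $v_1\in V_H$; give capacity $2^b-1$ to the edges of $H$, capacity $N:=n2^b$ to the edges at $z$, and capacity $1$ to $\{s,v_1\}$ and $\{v_1,t\}$. For each assignment $\{\beta_e\}_{e\in E_H}$ of $b$-bit strings, define the configuration $C_{\{\beta_e\}}$: along $e=\{u,v\}\in E_H$ with $u<v$ set $f(u,v)=\overline{\beta_e}$, $f(v,u)=-\overline{\beta_e}$; along $\{v,z\}$ set $f(v,z)=-\gamma_v$ where $\gamma_v:=\sum_{\{v,w\}\in E_H}f(v,w)$; and route one unit of flow $s\to v_1\to t$. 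For \emph{every} choice of the strings, $C_{\{\beta_e\}}$ is a legal maximum flow: antisymmetry and the capacity bounds are immediate; conservation holds at each vertex of $V_H$ because the $z$-edge cancels the imbalance $\gamma_v$ (and the unit through-flow additionally cancels at $v_1$), and at $z$ because $\sum_v\gamma_v=\sum_e(\overline{\beta_e}-\overline{\beta_e})=0$; and the flow is maximum since the cut $\{s\}$ is saturated (its only outgoing edge has capacity $1$). Distinct assignments give distinct configurations, $G'$ has arboricity $\Theta(\alpha)$, and its largest edge-flow is $\Theta(2^b)=\Theta(f_{\max}/n)$; a further shift of $b$ by $\Theta(\log\Delta)$, absorbed by the hypothesis on $f_{\max}$, makes the largest edge-flow $\Theta(f_{\max})$.

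For the crossing step, suppose $\Sigma$ is a PLS for $(\cF_{st},\mf)$ in the $\xc(r)$ model with $\kappa(\Sigma)=\kappa$, and let $R$ be the set of edges of $G'$ with both endpoints in $V_H\cup\{z\}$ (the edges of $H$ and the edges to $z$). Exactly as in \inequalityref{eq0}, over the $2^{bm_H}$ configurations above the number of distinct communication patterns of $\Sigma$ restricted to $R$ is at most $\binom{2^\kappa}{r}^{n_H+1}r^{2m_H+2n_H}$, and, using $m_H=\Theta(\alpha n_H)$, $r\le\alpha/c$, and $r\le\sqrt[4]{f_{\max}}$ (so that $\log r\le\tfrac14(\log n+b)\le\tfrac13 b$), the computation of \inequalityref{eq1}--\inequalityref{eq4} shows this count is $<2^{bm_H}$ once $\kappa<c'\alpha b/r$ for a suitable constant $c'>0$. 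Hence two distinct configurations $C=C_{\{\beta_e\}}$ and $C'=C_{\{\beta'_e\}}$ have the same pattern on $R$. Pick an edge $e^*\in E_H$ with $\beta_{e^*}\ne\beta'_{e^*}$, pick the endpoint $v^*\ne v_1$ of $e^*$, and form the configuration agreeing with $C$ everywhere except that $v^*$ is given its $C'$-state (only its flow values change, since all capacities are the same in every $C_{\{\beta_e\}}$). This configuration is illegal, because antisymmetry breaks on $e^*$; yet it is accepted. Indeed every edge at $v^*$ lies in $R$, so $v^*$'s incoming messages coincide with those it receives in $C'$ and it reproduces its $C'$-view; every other vertex of $V_H$ and $z$ reproduces its $C$-view, since the only messages that changed are those emitted by $v^*$, which travel on $R$-edges and are therefore unchanged on the common $R$-pattern; and $s,t$ see exactly what they saw in $C$ because their unique neighbor $v_1$ was not the swapped vertex. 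So every node accepts an illegal configuration, contradicting the correctness of $\Sigma$; hence $\kappa(\Sigma)=\Omega(\alpha b/r)=\Omega(\alpha\log f_{\max}/r)$.

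The main obstacle is the isolation of the data region $R$. Unlike $\EA_b$, the $\mf$ predicate also enforces flow conservation, the presence of a source and sink, and maximality, and a careless embedding would let the single state swap at $v^*$ perturb what $s$, $t$, or the balancing vertex $z$ see, which would break the crossing. Routing all per-vertex imbalances into one vertex $z$, attaching $s$ and $t$ as degree-one pendants off a unit-flow path, and using only configuration-independent capacities is precisely what makes the constructed flow genuinely maximum while confining every consequence of the swap to $R$; verifying that this preserves arboricity and $f_{\max}$ up to constant factors, and calibrating the constant $c$ and the hidden constant in $\log f_{\max}\in\Omega(\log n)$, is the remaining routine work.
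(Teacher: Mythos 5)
Your proof is correct, but it takes a genuinely different route from the paper's. The paper proves this lemma by a black-box reduction \emph{from} $(\cF,\EA_b)$ \emph{to} $(\cF_{st},\mf)$: given a PLS $\Sigma_f$ for max-flow, it builds a PLS for edge agreement whose prover turns the edge strings into a legal circulation by orienting edges, convergecasting the per-node excess $\Exc(v)$ along a BFS tree, and attaching $s,t$ to the root via capacity-$0$ edges; the verifier then simulates $\Sigma_f$ and separately checks that the claimed flows are consistent with the strings (an eight-condition verifier with a seven-part label), and the lower bound of \lemmaref{EA trade-off lower} is invoked after absorbing the additive $b+O(\log n)$ overhead. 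You instead rerun the counting-plus-state-swap argument \emph{directly} on an explicit family of $2^{bm_H}$ legal flow configurations, using a single balancing vertex $z$ (a star rather than a tree) to absorb all imbalances and a pendant unit path $s\to v_1\to t$ to make maximality trivial and configuration-independent. Your crossing step is sound: swapping only $v^*$'s state breaks antisymmetry on $e^*$ while preserving conservation locally at $v^*$, and since all of $v^*$'s edges lie in $R$ and the two configurations agree on the $R$-pattern, every node's view is reproduced from an accepting run. The one place that genuinely needs care is the counting: your pattern count carries an extra factor $r^{2n_H}$ from the $z$-edges that the $\EA_b$ computation in \inequalityref{eq0}--\inequalityref{eq4} does not have, so the budget closes only because $m_H=\Theta(\alpha n_H)$ with $\alpha\ge c$ and because $\log r\le \tfrac{1}{4}\log f_{\max}$ translates into a fraction of $b$ strictly below what $2^{bm_H}$ affords; this is exactly the constant calibration you flag, and it does go through for a large enough $c$ and a large enough implied constant in $\log f_{\max}\in\Omega(\log n)$ (the paper's own proof needs the same slack, e.g.\ to pass from $\log(f_{\max}/n)$ to $\log f_{\max}$). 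The trade-off: your argument is self-contained and avoids designing and proving correct an auxiliary verification scheme, while the paper's reduction is modular (any strengthening of the $\EA_b$ bound transfers automatically) and reuses \claimref{conf crossing} verbatim instead of re-deriving the counting in the presence of the auxiliary edges.
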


\Proof
We reduce the problem $(\cF,\EA_{b})$ to the problem $(\cF_{st},\mf)$ 
with $f_{\max}\le m\cdot 2^b$.
Let $\Sigma_f=(\bP_f,\bV_f)$ be a PLS for $(\cF_{st},\mf)$ in the $\xc(r)$ model. We  construct from $\Sigma_f$ a PLS $\Sigma=(\bP,\bV)$ for $(\cF,\EA_{b})$ in the $\xc(r)$ model. Let $G_s\in \cF$ be such that $\EA_b(G_s)=\True$ with an underlying 
graph $G=(V,E)$, whose arboricity is $\alpha$.
Let $T=(V,E_T)$ 
be a breadth-first spanning tree of 
$G$ rooted at some node $w\in V$.
We denote by $p(v)$ the parent of $v$ in $T$.
Intuitively, the prover constructs from a legal $(\cF,\EA_{b})$ instance $G_s$ a legal $(\cF_{st},\mf)$ instance $G_s'$ by letting exactly one node to  simulate nodes $s$ and $t$  which are connected to the rest of the network with edges of capacity $0$. Assigning a capacity to every edge in $E$ results in $G_s'\in \cF_{st}$. Then, the prover defines a cyclic legal flow according to the input strings $B_v(u)$. Think of an arbitrary orientation of the edges, and assume that for the oriented edge $(v,u)$ with $B_v(u)=B_u(v)=B$ we assign the flow $f(v,u)=-\overline{B}$ and $f(u,v)=\overline{B}$. Of course, $f$ does not necessarily satisfy flow conservation. The actual flow we define would be the sum of $f$ with the convergecast on $T$ of excess flow from all nodes. The result is a legal flow of value $0$ in a network where the minimum cut is $0$. Therefore,  $\mf(G_s')=\True$.

Formally, the prover orients all edges arbitrarily.
Let $\hhh_v(u)$ be the variable indicating to $v$ the orientation of edge $(v,u)$. If the prover orient $e=(v,u)$  from $v$ to $u$ then $\hhh_v(u)=-1$ and $\hhh_u(v)=1$. We define $\Exc(v)$ as the excess flow of node $v$ if the flow over every edge $e=(v,u)$ were $\hhh_v(u)\cdot \overline{B_v(u)}$, i.e., $\Exc(v)=\sum_{(v,u)\in E}\hhh_v(u)\cdot \overline{B_v(u)}$.
We denote by $T_v$ the set of nodes in the sub-tree rooted at  $v$, and define $\EST(T_v)$ as the sum of all  excess flow in this sub-tree, $\EST(T_v)=\sum_{u\in T_v}\Exc(u)$.
Let $G_s'\in \cF_{st}$ be a configuration as follows. Add to $G$ nodes $s$ and $t$ connected to $w$ with edges $(s,w),(t,w)$ of capacity $0$. The idea is that $w$ simulates in addition to the verification of $\mf$ of itself, the verification of $\mf$ of $s$ and $t$. For every edge $e\in E$, $c(e)=m\cdot 2^b$. For every node $v\in V$ and neighbor $u\in V$ the flow $f(v,u)$ in configuration  $G_s'$ is defined as follows. If $u=p(v)$ then $f(v,u)=\hhh_v(u)\cdot \overline{B_v(u)}-\EST(T_v)$, if $v=p(u)$ then $f(v,u)=\hhh_v(u)\cdot\overline{B_v(u)}+\EST(T_u)$, and otherwise $f(v,u)=\hhh_v(u)\cdot \overline{B_v(u)}$.
For completeness, we define $f(s,w)=f(w,s)=f(t,w)=f(w,t)=0$.

We first prove that if $\EA_{b}(G_s) = \True$ then $\mf(G_s')=\True$. 
By construction, the value of flow going out of $s$ and into $t$ is $0$, and the capacity of the cut $(\set{s};V\cup \set{t})$ is $0$. Therefore, if $f$ is a legal flow then it is a maximum flow.
Asymmetry holds by construction and the assumption that $B_v(u)=B_u(v)$ for every $u$ and $v$.
Capacity constraints hold by definition of $\EST(T_v)$ and the fact that $\overline{B_v(u)}\le 2^b$. Finally, we prove that flow conservation holds. Let $\cC(v)=\set{u\mid p(u)=v}$ be the set of children of $v$ in $T$, and let $\cY(v)=\set{u\mid (v,u)\in E, u\neq p(v), v\neq p(u)}$ be the set of neighbors of $v$ which are neither the children of $v$ nor $v$'s parent in $T$. 
The flow at every node $v\neq w$ satisfies the following.

\begin{align}
&\sum_{(v,u)\in E}f(v,u)=f(v,p(v))+\sum_{u\in \cC(v)}f(v,u)+\sum_{u\in \cY(v)}f(v,u)\nonumber \\ 
&=\hhh_v(u)\cdot\overline{B_v(u)}-\EST(T_v)+\sum_{u\in \cC(v)}(\hhh_v(u)\cdot\overline{B_v(u)}+\EST(T_u))+\sum_{u\in \cY(v)}(\hhh_v(u)\cdot \overline{B_v(u)}) \label{Eq.flow}\\
&=\sum_{(v,u)\in E}\hhh_v(u)\cdot \overline{B_v(u)}-\EST(T_v)+\sum_{u\in \cC(v)}\EST(T_u) \nonumber \\
&=\Exc(v)-\EST(T_v)+\sum_{u\in \cC(v)}\EST(T_u)=0 \label{Eq.EST}
\end{align}

\equalityref{Eq.flow} is true by construction of $f$, and \equalityref{Eq.EST} is true since by definition, $\EST(T_v)=\Exc(v)+\sum_{u\in \cC(v)}\EST(T_u)$.

For $w$, we know that $f(w,s)=f(w,t)=0$. Therefore, we only need to prove flow conservation over the edges $(w,u)\in E$.
\begin{align}
&\sum_{(w,u)\in E}f(w,u)=\sum_{u\in \cC(w)}(\hhh_w(u)\cdot\overline{B_w(u)}+\EST(T_u)) \label{Eq.BFS}\\
&=\sum_{(w,u)\in E}\hhh_w(u)\cdot \overline{B_w(u)}+\sum_{u\in \cC(w)}\EST(T_u) \nonumber \\
&=\Exc(w)+\sum_{u\in \cC(w)}\EST(T_u)=\sum_{v\in V}\Exc(v) \label{Eq.allExc}\\
&=\sum_{v\in V}\sum_{(v,u)\in E}\hhh_v(u)\cdot \overline{B_v(u)}=0 \label{Eq.cancel}
\end{align}

\equalityref{Eq.BFS} is true since $T$ is a BFS rooted at $w$ and therefore, every neighbor of $w$ is a child of $w$ in $T$, and \equalityref{Eq.allExc} is true since by definition, $\EST(T_w)=\Exc(w)+\sum_{u\in \cC(w)}\EST(T_u)$ is the sum of all  excess flow of nodes in $T$, and $T$ spans $G$. \equalityref{Eq.cancel} is true by assumption that for every $u$ and $v$, $B_v(u)=B_u(v)$ and $\hhh_v(u)\cdot\hhh_u(v)=-1$.
For $s$ and $t$, flow conservation holds immediately by construction.
This concludes the proof that if $\EA_{b}(G_s) = \True$ then $\mf(G_s')=\True$. 

We now describe the details of the scheme $\Sigma=(\bP,\bV)$ for $(\cF,\EA_{b})$.
Given a configuration $G_s\in \cF$ such that $\EA_{b}(G_s) = \True$, the prover $\bP$ constructs the configuration  $G_s'\in \cF_{st}$ and assigns 
for every node $v\in V$ a label which is 
composed of seven parts: 
\begin{compactitem}
	\item $\ell_{1,v}=\ID_w$.
	\item $\ell_{2,v}=\Dist(w,v)$ is the distance between $w$ and $v$ in $T$.
	\item $\ell_{3,v}=\ID_{p(v)}$.
	\item $\ell_{4,v}=\set{\hhh_v(u)\mid (v,u)\in E}$.
	\item $\ell_{5,v}=\set{f(v,u)\mid (v,u)\in E}$.
	\item $\ell_{6,v}=\EST(T_v)$.
	\item $\ell_{7,v}=\ell_f(v)$ is the label assigned by $\bP_f$ to $v$ in $G_s'$ (if $v=w$ then it receives also $\ell_f(s)$ and $\ell_f(t)$).
\end{compactitem}

The verifier $\bV$ at node $v$ operates as follows. The message sent 
from node $v\in V$ to its neighbor $u\in V$ is 
$M_v(u)=(\ID_v,\ell_{1,v},\ell_{2,v},\ell_{3,v},\ell_{6,v},M^f_{v}(u))$ where 
$M^f_{v}(u)$ is the message $v$ sends to $u$ according to 
$\Sigma_f(G'_s)$. Upon receiving messages $M_u(v)$ from all neighbors, 
$v$ outputs the conjunction of the following. 
\begin{compactenum}
	\item If $\ell_{2,v}=0$ then $\ell_{1,v}= \ID_v$.
	\item $\ell_{1,u}=\ell_{1,v}$ for every neighbor $u$.
	\item If $\ell_{2,v}>0$ then there exists a neighbor $u$ such that $\ell_{2,u}=\ell_{2,v}-1$ and $\ell_{3,v}=\ID_u$.
	\item For $u$ such that $\ell_{3,v}=\ID_u$, $f(v,u)=\hhh_v(u)\cdot \overline{B_v(u)}-\ell_{6,v}$.
	\item For $u$ such that $\ell_{3,u}=\ID_v$, $f(v,u)=\hhh_v(u)\cdot\overline{B_v(u)}+\ell_{6,u}$.
	\item For $u$ such that neither $\ell_{3,v}=\ID_u$ nor $\ell_{3,u}=\ID_v$, $f(v,u)=\hhh_v(u)\cdot\overline{B_v(u)}$.
	\item If $\ell_{2,v}> 0$ then $v$ simulates the output of $v$ according to 
	$\bV_f$ with the set of flows $\ell_{5,v}$, label $\ell_{7,v}$ and the received message $M^f_{u}(v)$ from every neighbor $u$.
	\item If $\ell_{2,v}= 0$ then $v$ simulates the output of $v$ according to 
	$\bV_f$ with the set of flows $\ell_{5,v}\cup\set{f(v,s)=0,f(v,t)=0}$, label $\ell_{7,v}$ and the received message $M^f_{u}(v)$ from every neighbor $u$ and messages $M^f_{s}(v)$ and $M^f_{t}(v)$ that are sent from $s$ with label $\ell_f(s)$ and from $t$ with label $\ell_f(t)$ according to $\bV_f$ respectively. In addition, $v$ simulates the output of $s$ and $t$ according to $\bV_f$ with the flows $f(s,v)=0$ and $f(t,v)=0$, labels $\ell_f(s)$ and $\ell_f(t)$,  and received messages $M^f_{v}(s)$ and $M^f_{v}(t)$ respectively. The result of this verification item is the conjunction of outputs of $v,s$ and $t$.
\end{compactenum}

By construction, if $M^f_v(u_1)=M^f_v(u_2)$ then $M_v(u_1)=M_v(u_2)$. Therefore, if $\Sigma_f$ is a PLS in the $\xc(r)$ model then $\Sigma$ is a PLS in the $\xc(r)$ model.
Let $c^*$ be a constant such that every $\ID$ is at most $n^{c^*}$.
If the verification complexity of $\Sigma_f$ is   
$\kappa$, then the verification 
complexity of $\Sigma$ is $\kappa+b+3(c^*+1)\log n$. This is true because $\EST(T_v)\le n^2\cdot 2^b$. If $\Sigma$ is a correct verification 
scheme for $(\cF,\EA_{b})$, by the proof of \lemmaref{EA trade-off lower}, its verification complexity is greater than $\frac{\alpha}{4r}b-2$. If $b\ge\log n$ and $\frac{\alpha}{r}> 12c^*+16$ we get that $\kappa\in \Omega(\frac{\alpha}{r}b)$. Since by construction ${f_{\max}\le m\cdot 2^b}$, it follows that $\kappa\in\Omega(\frac{\alpha}{r}\log(f_{\max}/n))$, and if $\log f_{\max}\in \Omega(\log n)$ we get  that $\kappa\in\Omega(\frac{\alpha}{r}\log f_{\max})$.

We now prove the correctness of $\Sigma$.
If $\EA_b(G_s)=\True$ and labels are assigned according to $\bP$, then clearly, all nodes output \True.
Suppose now that $\EA_b(G_s)=\False$. Then, there is at least one edge 
$(v,u)\in E$ such that $B_v(u)\neq B_u(v)$. We assume that all nodes output \True and show that it leads to a contradiction. If all nodes verify 
properties (1),(2) and (3) then there is exactly one node $w$ which 
simulates $s$ and $t$. Therefore, the simulated configuration is in 
$\cF_{st}$. If all nodes verify 
properties (7) and (8) then the simulated configuration $\hat{G}_s$ satisfies $\mf(\hat{G}_s)=\True$. In particular, $f(v,u)= -f(u,v)$. If $u$ and $v$ verify properties (4),(5) and (6) then the following holds.
If $\ell_{3,u}=\ID_v$ then $f(v,u)=\hhh_v(u)\cdot\overline{B_v(u)}+\ell_{6,u}$ and $f(u,v)=\hhh_u(v)\cdot \overline{B_u(v)}-\ell_{6,u}$.
If $\ell_{3,v}=\ID_u$ then $f(v,u)=\hhh_v(u)\cdot \overline{B_v(u)}-\ell_{6,v}$ and $f(u,v)=\hhh_u(v)\cdot \overline{B_u(v)}+\ell_{6,v}$.
If neither $\ell_{3,v}=\ID_u$ nor $\ell_{3,u}=\ID_v$ then $f(v,u)=\hhh_v(u)\cdot\overline{B_v(u)}$ and $f(u,v)=\hhh_u(v)\cdot\overline{B_u(v)}$.
In all three possible cases, since $\hhh_v(u)$ and $\hhh_u(v)$ have values either $1$ or $-1$, it holds that $f(v,u)= -f(u,v)$ if and only if $B_v(u)= B_u(v)$. Contradicting the assumption that $B_v(u)\neq B_u(v)$. This concluded the proof.
\QED

\begin{lemma}\label{Flow upper}
	For every $1\le r < 2\alpha$, there exists a PLS for $(\cF_{st},\mf)$
	in the $\xc(r)$ model with verification complexity $O(\frac{\alpha}{r}(\log f_{\max}+\log \Delta))$, and for every $2\alpha\le r \le \Delta$, there exists a PLS for 
	$(\cF_{st},\mf)$ in the $\xc(r)$ model with verification complexity $O(\log f_{\max})$.
\end{lemma}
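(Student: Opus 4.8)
The plan is to reduce verifying $\mf$ to checking the three legality conditions of a flow together with a short certificate of maximality, and to notice that all of these are essentially free except anti-symmetry, which is exactly an instance of $\Ep_b$. Recall that $f$ is a legal flow iff it is anti-symmetric ($f(v,u)=-f(u,v)$ on every edge), capacity-compliant ($|f(v,u)|\le c(v,u)$), and conserves flow at every node other than $s$ and $t$. Capacity compliance involves only data in the state of $v$ — both the flow value and the capacity of each incident edge — so $v$ checks it locally with no communication; similarly $\sum_{u}f(v,u)=0$ is a local computation at $v$. Hence the sole legality condition requiring communication is anti-symmetry. Encoding each integer $f(v,u)$ together with a sign bit into a string $B_v(e)$ of $b=O(\log f_{\max})$ bits, and letting $\psi_b(x,y)$ hold iff $x$ and $y$ encode negatives of one another, anti-symmetry is precisely $(\cF,\Ep_b)$; moreover a legal flow makes $\psi_b$ true on every edge, so an honest prover always supplies a legal instance.

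First I would apply \lemmaref{EA trade-off upper} to this $\Ep_b$ instance: for $1\le r<2\alpha$ it yields a PLS verifying anti-symmetry with complexity $O(\frac{\alpha}{r}(b+\log\Delta))=O(\frac{\alpha}{r}(\log f_{\max}+\log\Delta))$, and for $2\alpha\le r\le\Delta$ with complexity $O(b)=O(\log f_{\max})$; for $r\ge 2\alpha$ one may alternatively apply \lemmaref{lem-orientation} directly to the predicate $f(v,u)=-f(u,v)$. Running this sub-scheme alongside the purely local capacity and conservation checks at every node verifies that $f$ is a legal flow.

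It then remains to certify maximality. As noted in the text, once $f$ is known to be a legal flow the max-flow min-cut theorem says $f$ is maximum iff some cut $(S,V\setminus S)$ has $s\in S$, $t\notin S$, and every edge crossing from $S$ to $V\setminus S$ saturated, i.e.\ $f(v,u)=c(v,u)$ whenever $v\in S$, $u\notin S$: flow conservation then gives $\mathrm{value}(f)=\sum_{v\in S,\,u\notin S}f(v,u)=\sum_{v\in S,\,u\notin S}c(v,u)$, the capacity of this cut, which upper bounds the value of any flow. So the prover hands each node $v$ one extra bit $\sigma_v\in\set{0,1}$, with $\sigma_v=0$ meaning $v\in S$ — for a yes-instance it takes $S$ to be the set of nodes reachable from $s$ in the residual graph. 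Since this bit is the same for all neighbors of $v$, it is prepended to each of the at most $r$ messages already sent by the $\Ep_b$ sub-scheme, contributing $O(1)$ bits and no new message type. Every node then checks: $s$ verifies $\sigma_s=0$, $t$ verifies $\sigma_t=1$ (both are distinguished in the configuration, hence known to themselves), and every $v$ with $\sigma_v=0$ verifies, for each neighbor $u$ with $\sigma_u=1$, that $f(v,u)=c(v,u)$ — again a local check since $v$ knows both $f(v,u)$ and $c(v,u)$. If $f$ is a legal flow but not maximum, no such cut exists and some saturation check fails; if $f$ is not a legal flow, the earlier checks already reject; and on a genuine maximum flow with the honest labeling all checks pass.

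The step I expect to require the most care is only bookkeeping: confirming that the three tasks co-exist inside the $\xc(r)$ budget. This holds because the conservation and capacity checks send nothing, and the cut certificate adds a single broadcast bit that folds into the existing $\le r$ messages of the $\Ep_b$ sub-scheme. Summing the costs gives verification complexity $O(\frac{\alpha}{r}(\log f_{\max}+\log\Delta))$ for $1\le r<2\alpha$ and $O(\log f_{\max})$ for $2\alpha\le r\le\Delta$, as claimed.
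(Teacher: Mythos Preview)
Your proposal is correct and follows essentially the same route as the paper: reduce anti-symmetry to an $\Ep_b$ instance and invoke \lemmaref{EA trade-off upper}, handle capacity compliance and conservation locally, and certify maximality by a single broadcast cut-bit whose saturation is checked locally. The only cosmetic difference is that the paper checks $|f(v,u)|=c(v,u)$ on every edge whose endpoints have different cut-bits, whereas you check $f(v,u)=c(v,u)$ from the $S$-side only; your directional check is in fact the cleaner formulation.
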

\Proof
Let $\psi$ be the function of two input strings such that $\psi(x,y)=\True$  iff $\overline{x}=-\overline{y}$, and let $\Sigma_{A}=(\bP_A,\bV_A)$ be a PLS  for 
$(\cF,\Ep_b)$ in the $\xc(r)$ model. We describe the details of a PLS
$\Sigma=(\bP,\bV)$ for $(\cF_{st},\mf)$ in the $\xc(r)$ model. Let $G_s\in \cF_{st}$ be a 
configuration with an underlying graph $G=(V,E)$ and $\mf(G_s)=\True$. 
Consider the configuration $G'_s$ defined as follows. The underlying 
graph of $G'_s$ is $G=(V,E)$ and for every node $v\in V$ and every edge 
$e=(v,u)$, $B_v(e)=f(v,u)$. Obviously, $G'_s\in \cF$. In addition, $\Ep_b(G'_s)=\True$ since, in 
particular, the flow in $G_s$ satisfies asymmetry on every edge.
Let $Z\subset V$ be such that $(Z;V\setminus Z)$ is a minimum $s$-$t$ cut in $G_s$ with $s\in Z$ and $t\notin Z$. The label assigned by $\bP$ to every node $v\in V$ is composed of two parts: $\ell_{1,v}=z(v)$ where $z(v)$ is a bit such that $z(v)=1 \iff v\in Z$, and $\ell_{2,v}=\ell_A(v)$ is the label assigned by $\bP_A$ to $v$ in $G'_s$. 

The verifier $\bV$ at node $v$ operates as follows. The message sent from  $v$ over edge $e=(v,u)$ is $M_v(e)=(\ell_{1,v},M^A_{v}(e))$ where $M^A_{v}(e)$ is the message $v$ sends over $e$  in $\Sigma_A(G'_s)$. Upon receiving a message $M_u(e)$ over every edge $e=(u,v)$, node $v$ outputs the conjunction of the following. 
\begin{compactenum}
	\item The output of $\bV_A$ upon receiving a message $M^{A}_u(e)$ from every neighbor $u$ of $v$, where the label of $v$ is $\ell_{2,v}$.
	\item For every edge $(v,u)$, $|f(v,u)|\le c(v,u)$.
	\item If $v\neq s,t$ then $\sum_{u\in V}f(v,u) = 0$.
	\item If $v=s$ then 
	$\ell_{1,v}=1$.
	\item If $v=t$ then 
	$\ell_{1,v}=0$.
	\item For every neighbor $u$ of $v$, if $\ell_{1,u}\neq \ell_{1,v}$ then $|f(v,u)|= c(v,u)$.
\end{compactenum}

If $\Sigma_A$ is a PLS in the $\xc(r)$ model with verification complexity $\kappa$, then $\Sigma$ is a PLS in the $\xc(r)$ model with verification complexity $\kappa+1$. This is true because if $M^A_v(e_1)=M^A_v(e_2)$ then $M_v(e_1)=M_v(e_2)$ and $\ell_{1,v}$ is one bit. $\Sigma_A$ is a verification scheme for $(\cF,\Ep_b)$ where $b=\log (f_{\max})$.  By \lemmaref{EA trade-off upper}, the upper bounds follow.

We now prove the correctness of $\Sigma$. If $\mf(G_s)=\True$ then for every $v$ and $u$, $f(v,u) = -f(u,v)$. From correctness of $\Sigma_{A}$, the result of $v$ in (1) is \True. The result of $v$ in (2) and (3) is also \True since the flow is legal.
If labels are assigned as described, then $(Z;V\setminus Z)$ is a minimum $s$-$t$ cut. Since $f$ is a maximum flow, we know that every minimum cut is saturated. Therefore, 
the result of $v$ in (4),(5) and (6) is \True, and $v$ outputs \True.
If all nodes output \True then, by~(1),(2) and (3), $f$ is a legal flow. By~(4) and (5), 
the $z(v)$ bits indicate an $s$-$t$ cut, and by~(6) cut $(Z;V\setminus Z)$ is saturated, and therefore, $f$ is a maximum flow, $\mf(G_s)=\True$.
\QED

For $\log f_{\max}\in \Omega(\log n)$, \lemmaref{Flow upper} gives a tight upper bound for $(\cF_{st},\mf)$ which concludes the proof of \theoremref{flow verification}.

Consider now the $\mfk$ problem as defined in \cite{KKP} over the 
family of 
configurations $\cF_{st}$.

\begin{definition}[$k$-Maximum Flow (\mfk)]\hfill\\
	\textbf{Instance}: A configuration $G_s\in \cF_{st}$.\\
	\textbf{Question}: Is the maximum flow between $s$ and $t$ in $G_s$ is exactly $k$?
\end{definition}

We give an upper bound for
$(\cF_{st},\mfk)$ in the  $\xc(r)$ model, which generalizes and 
improves the previous bound.

\begin{theorem}\label{mfk upper}
	For every $1\le r < 2\alpha$, there exists a PLS  for $(\cF_{st},\mfk)$
	in the $\xc(r)$ model, with verification complexity $O\left(\frac{\min\set{\alpha,k}}{r}(\log k+\log \Delta)\right)$, and for every $2\alpha\le r \le \Delta$, there exists a PLS  for 
	$(\cF_{st},\mfk)$ in the $\xc(r)$ model, with verification complexity $O(\log k)$.
\end{theorem}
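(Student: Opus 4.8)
The plan is to let the prover supply a witness consisting of a flow $f$ of value exactly $k$ together with a minimum $s$-$t$ cut $(Z,V\setminus Z)$, and to verify that (i) $f$ is a legal flow, (ii) $|f|=k$, (iii) $(Z,V\setminus Z)$ is a genuine $s$-$t$ cut, and (iv) $f$ saturates this cut. By the max-flow min-cut theorem these four facts certify that the maximum $s$-$t$ flow equals $k$: (i)+(ii) exhibit a flow of value $k$, so the maximum is at least $k$, while (iii)+(iv) exhibit a cut of capacity $|f|=k$, so the maximum is at most $k$. The cut is encoded by one bit $z(v)$ per node; $z(s)=1$, $z(t)=0$ are checked at $s$ and $t$; saturation reduces to requiring $f(v,u)=c(v,u)$ on every edge $(v,u)$ with $z(v)=1$, $z(u)=0$; and capacity compliance $|f(v,u)|\le c(v,u)$, flow conservation $\sum_u f(v,u)=0$ at internal nodes, and the value conditions $\sum_u f(s,u)=k$, $\sum_u f(t,u)=-k$ are all purely local, since the prover places all values $f(v,\cdot)$ in $v$'s label and a node only additionally needs the one-bit $z$-value of each neighbor. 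The single genuinely non-local ingredient is checking antisymmetry $f(v,u)=-f(u,v)$ on every edge, which is an instance of $(\cF,\Ep_b)$ with $\psi(x,y)=\True\iff\bar x=-\bar y$, exactly as in \lemmaref{Flow upper}.

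To obtain the stated complexity I would run the $(\cF,\Ep_b)$ machinery with two refinements. Since capacities are integral, the prover may take $f$ to be an integral, acyclic maximum flow (canceling flow along all directed cycles in its support); such an $f$ decomposes into at most $k$ directed $s$-$t$ paths, so $f(e)\in\set{-k,\ldots,k}$ for every edge (hence $O(\log k)$ bits per value), and the support $G_f=\set{e:f(e)\ne 0}$ is covered by those at most $k$ acyclic edge sets, giving $\alpha(G_f)\le k$; since also $\alpha(G_f)\le\alpha(G)=\alpha$, we get $\alpha(G_f)\le\min\set{\alpha,k}$. I would then run the edge-agreement scheme only on $G_f$: apply the minimizing orientation to $G_f$ (out-degree $<2\alpha(G_f)\le 2\min\set{\alpha,k}$ by \lemmaref{lem-arb-deg}), use the $O(\log\Delta)$-bit color addressing of \lemmaref{lem-coloring}, and have each node bundle the pairs $(\COL_u,f(v,u))$ over its outgoing $G_f$-edges into $r$ groups of $O(\lceil\min\set{\alpha,k}/r\rceil)$ pairs each, as in \lemmaref{EA trade-off upper}; edges outside $G_f$ carry no flow-pair, and a node rejects whenever it holds a nonzero flow on an edge on which it neither sends nor receives a flow-pair. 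Appending the single $z$-bit to every message, a message then carries $O(\lceil\min\set{\alpha,k}/r\rceil)$ pairs of $O(\log k+\log\Delta)$ bits, for total verification complexity $O(\tfrac{\min\set{\alpha,k}}{r}(\log k+\log\Delta))$. For $2\alpha\le r\le\Delta$ we have $r>\delta(G)$, so \lemmaref{lem-orientation} applies directly to a minimizing orientation of $G$: each node sends $(f(v,u),z(v))$, of size $O(\log k)$, on each outgoing edge and nothing on incoming edges, no color addressing is needed, and the remaining checks become local, giving complexity $O(\log k)$.

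Correctness of the antisymmetry part follows the argument of \lemmaref{EA trade-off upper}: on a legal configuration the prover installs the consistent orientation of $G_f$ and all nodes accept; if some edge $e^*=(v,u)$ violates $f(v,u)=-f(u,v)$, then at least one endpoint holds a nonzero value on $e^*$ and hence must send or expect a flow-pair there, and a short case analysis over the roles the two endpoints assign to $e^*$ shows that one of them always rejects, whether or not they agree on the orientation of $e^*$; combined with the local checks (ii)--(iv) and the max-flow min-cut argument, any instance whose maximum flow differs from $k$ is rejected. The step I expect to be the main obstacle is precisely this decoupling of the ``active'' subgraph $G_f$, on which communication is performed, from the host graph $G$, on whose \emph{every} edge antisymmetry must still be certified, while keeping the scheme robust against a prover that misreports which edges are active; the two facts that carry it through are that a nonzero flow value forces communication on its edge (so an asymmetric edge cannot be hidden) and that the support of an acyclic value-$k$ flow has arboricity at most $\min\set{\alpha,k}$.
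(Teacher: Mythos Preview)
Your proposal is correct and follows essentially the same approach as the paper: have the prover supply an acyclic integral maximum flow, invoke the scheme of \lemmaref{Flow upper} restricted to the subgraph $G_f$ of edges carrying nonzero flow, and exploit the sparsity of $G_f$ to replace $\alpha$ by $\min\{\alpha,k\}$. The only minor variation is that the paper bounds the sparsity of $G_f$ via the observation that every node has at most $2k$ incident nonzero-flow edges (so $\delta(G_f)\le 2k$), whereas you argue $\alpha(G_f)\le k$ directly from the decomposition into $k$ paths; both yield out-degree $O(\min\{\alpha,k\})$ in the minimizing orientation and are interchangeable for the stated bound.
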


\Proof
In a verification scheme for $(\cF_{st},\mfk)$, the prover can assign 
the flow values $f(v,u)$ for every 
edge $(v,u)$. W.l.o.g, assume that $f$ 
does not contain 
cycles of positive flow. In this case, 
$f_{\max}\le k$ and, since the flow value over each edge is an integer, 
the number of incident edges of every node $v$ carrying non-zero
flow is at most $2k$. By \lemmaref{Flow upper}, and the observation 
that it is sufficient that every node verifies the value of flow only 
on edges with $f(v,u)\neq 0$, the upper bounds follow.
\QED

To be precise, the problem solved in \cite{KKP} required in addition 
that every node holds the value $k$ in its state. Verifying that all 
nodes hold the same value $k$ is simply an additive $\log k$ factor to 
message length -- every node sends its value and verifies that all its 
neighbors have the same value. We argue in the following lemma, that 
$\Omega(\log k)$ is a lower bound for $(\cF_{st},\mfk)$ verification 
even if $k$ is known to all nodes. 

\begin{lemma}\label{mfk uni lower}
	For every $1\le k \le 2^{\Theta(n)}$, the verification complexity 
	of any PLS for $(\cF_{st},\mfk)$ is $\Omega(\log k)$, even in the 
	unicast model and for constant degree graphs.
\end{lemma}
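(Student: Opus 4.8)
The plan is to reduce from the non-deterministic two-party communication complexity of $\EQ$ (equality of $\Theta(\log k)$-bit strings), which is $\Omega(\log k)$ by \cite[Example 2.5]{KN}, exactly as in the lower-bound part of \lemmaref{EA trade-off lower}. The setup: Alice holds a value $a\in\set{0,\ldots,k-1}$ and Bob holds $b\in\set{0,\ldots,k-1}$, and they wish to verify $a=b$. I would build a fixed constant-degree graph $G_{s}$ in $\cF_{st}$, with two designated ``interface'' edges, such that the maximum $s$-$t$ flow equals $k$ if and only if $a=b$, and such that a PLS $\Sigma$ for $(\cF_{st},\mfk)$ on $G_{s}$ yields a non-deterministic protocol for $\EQ$ whose communication is $O(\kappa(\Sigma))$: Alice simulates the nodes on her side, Bob the nodes on his, and they exchange only the $\Sigma$-messages crossing the two interface edges (together with the labels, which form the non-deterministic certificate). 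Since each interface carries $O(1)$ edges and each message has $\kappa(\Sigma)$ bits, the protocol uses $O(\kappa(\Sigma))$ bits, forcing $\kappa(\Sigma)\in\Omega(\log k)$.

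The concrete gadget I would use is a ``difference tester'': route $a$ units of flow from $s$ into Alice's side, and route $b$ units demanded out of Bob's side into $t$, connected by a narrow bridge, so that a legal flow of value exactly $k$ across the whole network exists iff $a=b$ while the $s$-$t$ min-cut is forced to be $k$ in that case. One clean realization: let $s$ be joined to a node $x$ by an edge of capacity $k$, let $t$ be joined to a node $y$ by an edge of capacity $k$, put a path of constant-capacity (say capacity $1$) edges, of length $\Theta(\log k)$, between $x$ and $y$ so that the encodings of $a$ and $b$ can be ``written'' on the two halves as flow values on a constant number of parallel constant-capacity edges (using $\Theta(\log k)$ edges to encode a $\log k$-bit number in unary-per-bit fashion, keeping degree $O(1)$), and arrange the conservation constraints at the middle so that the only way to push $k$ total units is for the two encoded numbers to coincide. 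All capacities are constants, so $\alpha(G)=O(1)$ and the graph has constant degree, matching the statement. Note the reduction must work even when every node is told the target value $k$ in its state; since $k$ is fixed in the construction this is automatic — the reduction holds for the ``$k$ known'' variant, which is what the lemma claims.

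The one subtlety, and the step I expect to be the main obstacle, is making the flow gadget force $a=b$ with only constant-degree, small-capacity edges while keeping the encoding length $\Theta(\log k)$ rather than $\Theta(k)$. A naive unary construction needs $\Theta(k)$ capacity on a single edge (fine for capacity, but then $f_{\max}$ over an edge is $\Theta(k)$, still acceptable) or $\Theta(k)$ parallel edges (bad: degree blows up). I would instead encode $a$ in binary across $\Theta(\log k)$ ``bit gadgets,'' each a constant-size subgraph carrying flow $0$ or $2^{i}$ on a constant-capacity-times-$2^i$ edge — capacities may be as large as $k$, which is allowed since $\mfk$ places no polynomial bound on capacities — and chain them through conservation nodes so that the total flow reaching $y$ equals $\sum_i \beta_i 2^i$ for the chosen bit pattern $\beta$; symmetrically on Bob's side; a final comparison node of capacity $k$ between the two chains admits flow $k$ iff the two sums are equal and both equal $k$...

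\begin{align}
\text{(i.e. the prover sets } a+&(k-a)=k \text{ on one side, } b+(k-b)=k \text{ on the other,}\nonumber\\
&\text{and the bridge of capacity } k \text{ is saturable iff } a=b\text{).}\nonumber
\end{align}

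\noindent Once the gadget is pinned down, the reduction and the $\Omega(\log k)$ conclusion are routine, following the template of \lemmaref{EA trade-off lower}. \QED
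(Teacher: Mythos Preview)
Your high-level plan---encode Alice's value $a$ and Bob's value $b$ in the capacities on two sides of a constant-size cut, and run a fooling-set/$\EQ$ argument on the $O(1)$ messages crossing that cut---is exactly the paper's approach, and your binary ``bit gadget'' with edge capacities $2^{j}$ is precisely the encoding the paper uses. So the strategy is right.

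The gap is the gadget itself. Every concrete construction you describe is a \emph{series} arrangement (Alice's block, then a ``bridge'', then Bob's block), and in series the $s$-$t$ max flow is $\min(a,b)$, which equals $k$ iff $a=b=k$, not iff $a=b$; that relation has trivial non-deterministic complexity and yields no lower bound. Your ``capacity-$1$ path between $x$ and $y$'' makes the max flow at most $1$, and the parenthetical ``$a+(k-a)=k$ on one side, $b+(k-b)=k$ on the other'' is a tautology that does not force $a=b$ across a bridge. The fix---and what the paper does---is to place the two encodings in \emph{parallel}: $s$ fans out to $s_0,s_1$, $t$ to $t_0,t_1$, and for $i\in\{0,1\}$ there are $y=\Theta(\log k)$ disjoint $s_i$--$t_i$ links whose middle edges have capacities in $\{0,2^{j}\}$; the total $s$-$t$ max flow is then the sum $a+b$ of the two encoded values, so the legal instances are exactly $G_{a,k-a}$. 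The cut $\{(s,s_1),(t,t_1)\}$ has two edges, and swapping the $s_1$-side between $G_{a,k-a}$ and $G_{a',k-a'}$ produces $G_{a,k-a'}$ with max flow $k+a-a'\ne k$: this is the fooling-set step. Finally, to get constant degree the paper replaces the stars at $s_i,t_i$ by binary trees; your proposal does not address how $s_i$ avoids degree $\Theta(\log k)$, but this is easily patched the same way.
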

\Proof
Consider the following graph family $\cF'$.
\begin{figure}
	\centering
	\vspace{-4mm}\includegraphics[width=3in]{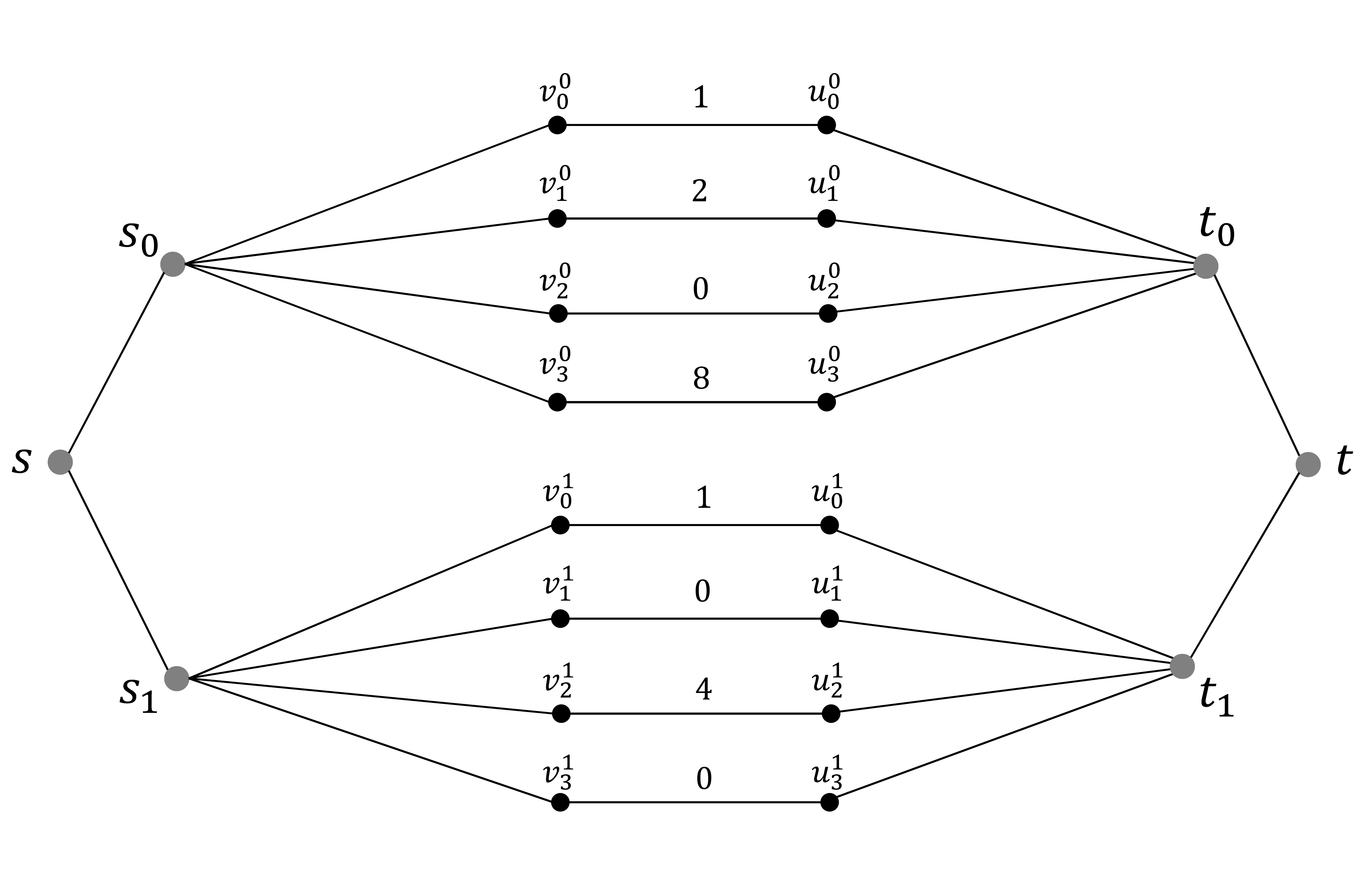}
	\caption{\it An example of $G_{a,b}$ configuration. All capacities which are not specified are $2^n$. In this example, the sum of capacities in the upper part is $11$ and in the lower part is $5$. Therefore, this is $G_{11,5}$.}
	\label{fig-flow}
\end{figure}
Node $s$ is connected to two nodes $s_0$ and $s_1$, and node $t$ is connected to two nodes $t_0$ and $t_1$ with edges of capacity $2^n$. Each of the nodes $s_0,s_1,t_0$ and $t_1$ is connected to $y$ additional different nodes with edges of capacity $2^n$. For $i\in\set{0,1}$ consider the following structure. Let $V^i=\set{v^i_0,\ldots,v^i_{y-1}}$ be the $y$ nodes connected to $s_i$ in addition to $s$ and let $U^i=\set{u^i_0,\ldots,u^i_{y-1}}$ be the $y$ nodes connected to $t_i$ in addition to $t$.
For every $0\le j \le y-1$ there is an edge $(v^i_j,u^i_j)$ with capacity either $c(v^i_j,u^i_j)=0$ or $c(v^i_j,u^i_j)=2^j$. In particular, $\sum_{j=0}^{y-1} c(v^i_j,u^i_j)$ can be every integer between $0$ and $2^y-1$.
We denote by $G_{a,b}$ the configuration where the sum of capacities in the subgraph induced by the set of nodes $V^0 \cup U^0$ is $a$, and the sum of capacities in the subgraph induced by the set of nodes $V^1 \cup U^1$ is $b$ (see \figref{fig-flow}). $\cF'=\set{G_{a,b}\mid 0\le a,b \le 2^y-1}$. Clearly, for every $k\le 2^y-1= 2^{\Omega(n)}$ and $0\le a \le k$ it holds that $G_{a,k-a}\in \cF'$ and $\mfk(G_{a,k-a})=\True$.

Assume by contradiction that there is a unicast proof-labeling scheme $\Sigma$ for the $(\cF_{st},\mfk)$ problem with verification complexity less than $\frac{\log k}{4}$. Consider the collection of $4$ messages sent over edges $(s,s_1)$ and $(t,t_1)$. By assumption, there are less than $\log k$ bits in this sequence of messages. Hence, there are less than $k$ different communication patterns over these edges. Therefore, there must be two configurations $G_{a,k-a}$ and $G_{a',k-a'}$, where $a\neq a'$, such that the communication pattern of $\Sigma$ over edges $(s,s_1)$ and $(t,t_1)$ is the same for both configurations. Consider the configuration $G_{a,k-a'}$. Obviously, since $a\neq a'$, $\mfk(G_{a,k-a'})=\False$. 

By construction, the state of every node $v\in W=\set{s,t,s_0,t_0}\cup V^0\cup U^0$ in $G_{a,k-a'}$ is the same as in $G_{a,k-a}$, and the state of every node $v\in W'=\set{s_1,t_1}\cup V^1\cup U^1$ in $G_{a,k-a'}$ is the same as in $G_{a',k-a'}$. 
Let $\ell_a(v)$ (respectively $\ell_{a'}(v)$) be the label assigned to node $v$ according to $\Sigma(G_{a,k-a})$ (respectively $\Sigma(G_{a',k-a'})$), and consider the following labeling $\ell$ for $G_{a,k-a'}$. 
For every $v\in W$ assign $\ell(v)=\ell_a(v)$, and for every $v\in W'$ assign $\ell(v)=\ell_{a'}(v)$.
Since the state and label of every $v\in W$ (respectively $v\in W'$) in $G_{a,k-a'}$ are exactly as in $\Sigma(G_{a,k-a})$ (respectively $\Sigma(G_{a',k-a'})$), all messages these nodes send are as in $\Sigma(G_{a,k-a})$ (respectively $\Sigma(G_{a',k-a'})$).

By assumption on the communication patterns of $\Sigma(G_{a,k-a})$ and $\Sigma(G_{a',k-a'})$, and the fact that $(s,s_1)$ and $(t,t_1)$ are the only edges in $W\times W'$, all nodes in $G_{a,k-a'}$ output \True, a contradiction to the correctness of $\Sigma$. Therefore, the verification complexity of any proof-labeling scheme for $(\cF',\mfk)$ is $\Omega(\log k)$. Since $\cF'\subset \cF_{st}$, the lower bound holds for $(\cF_{st},\mfk)$.

In order to show that this lower bound holds even for constant degree graphs, we change the construction so that every star structure induced by $\set{s_i}\cup V^i$, for $i\in\set{0,1}$, is replaced by a binary tree rooted at $s_i$ and its leaves are $V^i$. In the same way, we replace every star structure induced by $\set{t_i}\cup U^i$, for $i\in\set{0,1}$, by a binary tree. The maximum degree of the new graph family is $O(1)$, and the lemma follows.
\QED

By \theoremref{mfk upper}, this lower bound is tight for $2\alpha\le r \le \Delta$,  and the following theorem holds.

\begin{theorem}\label{mfk uni tight}
	For every $1\le k \le 2^{\Theta(n)}$ and every $2\alpha\le r \le 
	\Delta$, the verification complexity of $(\cF_{st},\mfk)$  in the 
	$\xc(r)$ model is $\Theta(\log k)$.
\end{theorem}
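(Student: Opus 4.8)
The plan is to derive the theorem by combining the two bounds already established, with only a small amount of bookkeeping on the parameter $r$.

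For the upper bound, I would simply invoke \theoremref{mfk upper}: in the regime $2\alpha\le r\le\Delta$ it already provides a PLS for $(\cF_{st},\mfk)$ in the $\xc(r)$ model with verification complexity $O(\log k)$, so nothing further is needed on this side.

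For the lower bound, the key observation is that the $\xc(r)$ model with $r\le\Delta$ is at least as restrictive as the unicast model: a node that sends at most $r\le\Delta$ distinct messages is, in particular, a node allowed to send a distinct message on each incident edge. Hence any PLS for $(\cF_{st},\mfk)$ in the $\xc(r)$ model with $r\le\Delta$ is also a unicast PLS, and the $\Omega(\log k)$ lower bound of \lemmaref{mfk uni lower} --- which is proved in the unicast model, and moreover for constant-degree graphs --- transfers verbatim to every $\xc(r)$ model with $2\alpha\le r\le\Delta$. The constant-degree version is convenient here because on such graphs $2\alpha=O(1)$, so the claimed interval $2\alpha\le r\le\Delta$ is nonempty and the statement is not vacuous.

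Combining the two sides, for every $1\le k\le 2^{\Theta(n)}$ and every $2\alpha\le r\le\Delta$ the verification complexity of $(\cF_{st},\mfk)$ in the $\xc(r)$ model is simultaneously $O(\log k)$ and $\Omega(\log k)$, hence $\Theta(\log k)$. The only point that requires any care --- and it is the closest thing to an obstacle here --- is checking that the parameter windows of \theoremref{mfk upper} (which treats $2\alpha\le r\le\Delta$ separately from $r<2\alpha$) and of \lemmaref{mfk uni lower} (stated for the unicast model, i.e.\ effectively $r=\Delta$) genuinely overlap on the stated interval; the monotonicity of the $\xc(r)$ hierarchy in $r$ is exactly what bridges that gap.
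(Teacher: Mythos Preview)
Your proposal is correct and follows exactly the paper's approach: the paper derives the theorem in a single sentence by combining the $O(\log k)$ upper bound of \theoremref{mfk upper} (for the regime $2\alpha\le r\le\Delta$) with the $\Omega(\log k)$ unicast lower bound of \lemmaref{mfk uni lower}. Your additional remarks about monotonicity of the $\xc(r)$ hierarchy and non-vacuousness of the interval are sound elaborations that the paper leaves implicit.
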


\section{Verification in Congested Cliques}\label{sec-clique}

In the congested clique model, the communication network is a fully connected graph
over $n$ nodes (i.e., an $n$-clique). Given an input graph $G=(V,E)$ with $n=|V|$, the nodes of $G$ are mapped
1-1 to the nodes of the clique, and the state of each node contains a bit
for each port, indicating whether the edge to that port is in $E$ or not, and,
if the edge is present and $G$ is weighted, the weight of the edge.
We assume that the part in the state that specifies whether the edge connected to this port is in $E$ is reliable:
since verification is done with respect to the given graph as input, there is no
way to verify its authenticity, but only whether the combination of input and output 
satisfies the given predicate. Moreover, we assume that the input is consistent, in the 
sense that the state at node $v$ indicates that $(v,u)$ is an edge in $E$ (possibly with some
weight $w$), if and only if so does the state of $u$ (namely edge agreement on the
input graph is guaranteed).

\subsection{Crossing  in Congested Cliques}

In what follows, we say that an edge is \emph{oriented} to indicate a specific order over its endpoints. 

\begin{definition}[Independent Edges]
	Let $G=(V,E)$ be a graph and let $e_1=(v_1,u_1)$ and $e_2=(v_2,u_2)$ be two oriented edges of $G$. The edges $e_1$ and $e_2$ are said to be \emph{independent} if and only if $v_1,u_1,v_2,u_2$ are four distinct nodes and $(v_1,u_2),(v_2,u_1)\notin E$.
\end{definition}

%
The following definition 
is illustrated in \figref{fig-cross}.
\begin{definition}[Crossing \cite{BFP}]
	\label{def-crossing}
	Let $G=(V,E)$ be a graph, let $e_1=(v_1,u_1)$ and 
	$e_2=(v_2,u_2)$ be two
	independent oriented edges of $G$, and for $i\in\set{1,2}$, let $p_i$ and $q_i$ be the port numbers of $e_i$ at $v_i$ and $u_i$ respectively.  The \emph{crossing} of $e_1$ 
	and $e_2$ in $G$, denoted by $G(e_1,e_2)$, is the graph obtained 
	from	$G$ by replacing $e_1$ and $e_2$ with the edges 
	$e'_1=(v_1,u_2)$ and $e'_2=(v_2,u_1)$ 
	so that $e'_1$ connects port $p_1$ at  $v_1$ and port $q_2$ at  $u_2$ and 
	$e'_2$ connects port $p_2$ at  $v_2$ and port $q_1$ at  $u_1$.
	%
\end{definition}

\begin{wrapfigure}{r}{.4\textwidth}
	\centering
	\vspace{-4mm}\includegraphics[width=.4\textwidth]{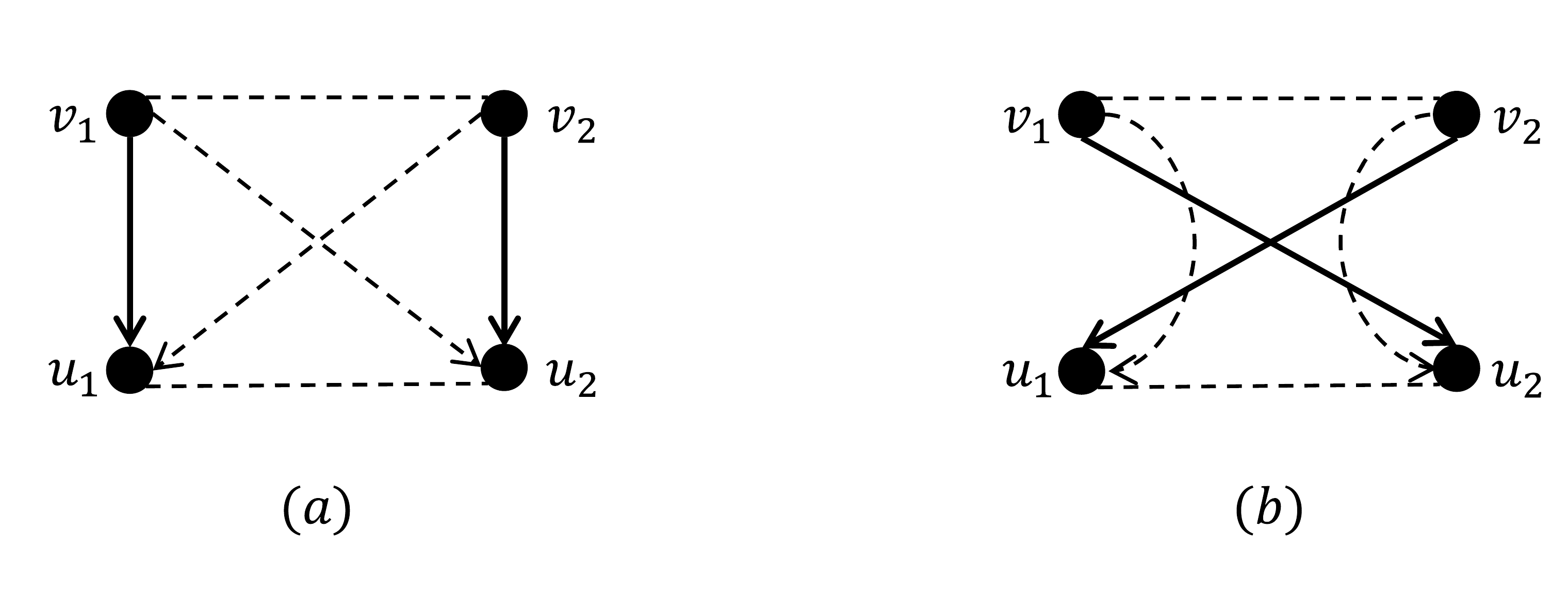}
	\caption{\it An illustration of the crossing operation on a clique 
		network. Solid edges are input graph edges, and dashed edged 
		are 
		communication-only edges. (a) edges ${e_1=(v_1,u_1)}$ and 
		$e_2=(v_2,u_2)$ are two
		independent oriented edges of an input graph $G$. (b) the 
		subgraph induced by nodes $v_1,u_1,v_2$ and $u_2$ in 
		$G(e_1,e_2)$.}
	\label{fig-cross}
\end{wrapfigure}

Consider an input graph $G=(V,E)$ in the clique, assume that
$e_1,e_2\in E$ are independent edges and let ${G(e_1,e_2)=(V,E')}$. 
Note that crossing a graph over a clique network does not result in a change of state:
Due to the port
preservation of the crossing operation, for every node $v\in V$ and every port 
$0\le i\le n-1$, the edge $(v,u)$ on port number $i$ in $G$ satisfies $(v,u)\in E$ if and only if the edge $(v,u')$ on port number $i$ in $G(e_1,e_2)$ satisfies $(v,u')\in E'$.

Whether we can prove a lower bound for verification in the congested clique for $r>1$ is still an open question. However, for the broadcast clique model (i.e., $r=1$), it turns out that we can.
The following lemma is the key to proving lower bounds for PLSs in the broadcast clique.
%
\begin{lemma}\label{Crossing in a BC Clique}
	Let $\cF$ be a family of configurations, let $\cP$ be a boolean
	predicate over $\cF$, and let $\Sigma$ be a PLS for
	$(\cF,\cP)$ in the  broadcast clique model
	with verification complexity $\kappa$.  Suppose that there
	is a configuration $G_s\in\cF$ such that $\cP(G_s)=\True$ and $G$ contains 
	$q$ pairwise independent oriented edges $e_1,\dots,e_q$.
	If $\kappa<\frac{\log q}{2}$, then there are $1\le i< j\le q$ such
	that $G_s(e_i,e_j)$ is accepted by $\Sigma$.
\end{lemma}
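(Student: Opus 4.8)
The plan is to run a counting argument over communication patterns, closely following the structure of \lemmaref{MV broadcast lower bound} and \claimref{conf crossing}, but exploiting the special feature of the broadcast clique: in $\xc(1)$ every node sends a single message to all other nodes, so the communication pattern of $\Sigma$ on the clique is determined entirely by the vector of $n$ messages, one per node. First I would consider, for each pair $1\le i<j\le q$, the graph $G_s(e_i,e_j)$, and ask whether $\Sigma$ accepts it. If $\Sigma$ accepts at least one such crossing we are done, so assume for contradiction that $\Sigma$ rejects $G_s(e_i,e_j)$ for every $i<j$. I would then focus on the two endpoints of each edge $e_i$, say $e_i=(v_i,u_i)$; since the $e_i$ are pairwise independent, the $2q$ nodes $v_1,u_1,\dots,v_q,u_q$ are all distinct.

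The key step is a local-indistinguishability observation. When $\Sigma$ runs on the legal configuration $G_s$, let $M(x)$ denote the (single) broadcast message that node $x$ sends. Now form, for a pair $i<j$, the crossed configuration $G_s(e_i,e_j)$. As noted in the paragraph preceding the lemma, crossing preserves ports and hence does not change the state of any node; moreover it changes only which node sits at the other end of ports $p_i,q_i,p_j,q_j$ at the four nodes $v_i,u_i,v_j,u_j$. Keeping the labels assigned by $\bP$ on $G_s$, every node still sends exactly $M(\cdot)$ (its state and label are unchanged), and every node still \emph{receives} the same multiset of messages $\{M(x):x\}$ — only the port through which $M(u_i)$ vs.\ $M(u_j)$ arrives at $v_i,v_j$ (and symmetrically) has been swapped. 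Therefore the local view of a node in $G_s(e_i,e_j)$ coincides with its view in $G_s$ \emph{up to a permutation of incoming messages}, \emph{unless} the verifier distinguishes $M(u_i)$ from $M(u_j)$ — i.e., unless $M(u_i)\ne M(u_j)$. The upshot: if $M(u_i)=M(u_j)$ \emph{and} $M(v_i)=M(v_j)$, then under these labels every node outputs \True{} on $G_s(e_i,e_j)$, so $\Sigma$ accepts $G_s(e_i,e_j)$. (One has to be slightly careful that the verifier may be sensitive to \emph{which port} a message arrives on; the clean way to handle this is to also demand $p_i=p_j$ and $q_i=q_j$, or — better — to argue as in \cite{BFP} that we may pick the $e_i$ so that the relevant port numbers agree, or simply to absorb port numbers into the "message" being counted. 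I'd adopt the counting-with-ports framing to keep the argument uniform.)

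Hence it suffices to find $i<j$ with $M(v_i)=M(v_j)$ and $M(u_i)=M(u_j)$ (and matching ports). Consider the map $i\mapsto (M(v_i),p_i,M(u_i),q_i)$ from $\{1,\dots,q\}$ into the set of pairs of (message, port). There are at most $2^{\kappa}$ messages of length $\kappa$ and at most $n$ port numbers, but the cleaner bound is: the number of possible values of $(M(v_i),M(u_i))$ is at most $2^{2\kappa}$, and with $\kappa<\tfrac12\log q$ this is $2^{2\kappa}<q$. By pigeonhole there exist $i\ne j$ with $(M(v_i),M(u_i))=(M(v_j),M(u_j))$, and for this pair $\Sigma$ accepts $G_s(e_i,e_j)$, contradicting our assumption. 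This proves the lemma.

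The main obstacle I anticipate is the bookkeeping around port numbers: the verifier at $v_i$ in the crossed graph receives $M(u_j)$ on port $p_i$, and we need this to be indistinguishable (to $v_i$'s local algorithm) from receiving $M(u_i)$ on port $p_i$ in $G_s$ — which is automatic once $M(u_i)=M(u_j)$ — but we also need $u_j$'s view to be right, which requires $M(v_i)$ to arrive at $u_j$ on the port $u_j$ expects, i.e.\ $q_j=q_i$. The honest way to close this gap is either to fold the port index into the quantity we pigeonhole on (replacing $2^{2\kappa}$ by $2^{2\kappa}n^2$, which still beats $q$ if we instead take $q$ large enough relative to $\kappa$ and $n$ — but the lemma as stated wants only $\kappa<\tfrac12\log q$), or, following \cite{BFP}, to first pass to a sub-collection of the $e_i$ on which all four port numbers are constant; since there are at most $n^2$ port-pairs, if $q$ independent edges exist then $q/n^2$ of them share port numbers, and one re-runs the count on those. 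I would check which of these framings the paper's downstream applications need and present that one; most likely the intended reading is that "message" already encodes everything the recipient sees, so the clean $2^{2\kappa}<q$ count goes through directly.
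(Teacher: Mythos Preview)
Your core argument---pigeonhole on the pair $(M(v_i),M(u_i))$, giving at most $2^{2\kappa}<q$ possible values and hence a collision---is exactly the paper's proof. The paper writes it as $M^i=M_{v_i}\circ M_{u_i}$, observes $|M^i|<\log q$, and concludes identically.

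Your worry about port numbers, however, is a misreading of the crossing operation, and the proposed fixes (demanding $p_i=p_j$, $q_i=q_j$, or folding ports into the pigeonhole count) are unnecessary and would weaken the statement. Look again at \defref{def-crossing}: the new edge $(v_i,u_j)$ is placed on port $p_i$ at $v_i$ and on port $q_j$ at $u_j$. So $u_j$'s port $q_j$, which in $G_s$ carried $M_{v_j}$, in $G_s(e_i,e_j)$ carries $M_{v_i}$; indistinguishability at that port requires only $M_{v_i}=M_{v_j}$, not $q_i=q_j$. Moreover, as the paper notes just before the lemma (and after \corollaryref{cor-crossing}), in the clique the crossing simultaneously swaps the pair of \emph{non}-edges $(v_i,u_j),(v_j,u_i)\in\bar E$ into $(v_i,u_i),(v_j,u_j)\in\bar{E'}$, again port-preservingly. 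The net effect at $v_i$ is that exactly two of its ports---the one that went to $u_i$ and the one that went to $u_j$---exchange their far endpoints; every other port of $v_i$ is untouched. Hence if $M_{u_i}=M_{u_j}$, the entire port-indexed vector of incoming messages at $v_i$ is identical in $G_s$ and in $G_s(e_i,e_j)$, even for a port-sensitive verifier. The same holds at $u_i,v_j,u_j$, and trivially at all other nodes. So the clean $2^{2\kappa}<q$ count already suffices; drop the port discussion and your proof matches the paper's.
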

\Proof
Let $\Sigma=(\bP,\bV)$ be a PLS for
$(\cF,\cP)$ in the broadcast clique model, with verification complexity $\kappa$, and let $G_s$ be a configuration as described in the statement. Assume
that $\kappa<\frac{\log q}{2}$, and consider a collection of $q$ pairwise independent oriented edges $e_1=(v_1,u_1),\dots,e_q=(v_q,u_q)$. Let $\ell(v)$ is the label given by $\bP$ to $v$, let $M_v$ be the message sent by $v$ to all its neighbors according to $\bV$, and for every $i$, consider the bit-string $M^i=M_{v_i}\circ M_{u_i}$. We have $|M^i|<\log q$ for every $i$, and thus there are  less than $q$ possible distinct $M^i$'s in total. Therefore, by the pigeonhole principle, there are
$1\le i< j\le q$ such that $M^i=M^j$. Consider the output of the verifier $\bV$ in $G_s$ and in $G_s(e_i,e_j)$.  

By assumption, $G_s$ is accepted by $\Sigma$, i.e., with the labels provided by $\bP$,
the verifier $\bV$ outputs $\True$ at all nodes of~$G_s$. Therefore, clearly, all nodes other than  $v_i,u_i,v_j,u_j$  output $\True$ in $G_s(e_i,e_j)$. Now, consider node
$v_i$. Its neighbor $u_i$ in $G_s$ is replaced in
$G_s(e_i,e_j)$ by the node $u_j$, and its communication edge $(v_i,u_j)$ in $G_s$ is replaced in
$G_s(e_i,e_j)$ by communication edge $(v_i,u_i)$. Since $M_{u_i}=M_{u_j}$, the verifier acts the same at $v_i$ in
both $G_s$ and $G_s(e_i,e_j)$. The same argument works for $u_i,v_j$, and $u_j$, and therefore, the verifier also outputs
$\True$ at all nodes in $G_s(e_i,e_j)$, which implies that $G_s(e_i,e_j)$
is accepted by $\Sigma$.
%
\QED

We use the following corollary of \lemmaref{Crossing in a BC Clique}  to
lower-bound  verification complexity of broadcast clique PLSs.

\begin{corollary}\label{cor-crossing}
	Let $\cF$ be a family of configurations, and let $\cP$ be a boolean
	predicate over $\cF$. If there is a configuration $G_s\in\cF$
	satisfying that $\cP(G_s)=\True$ and $G$ contains 
	$q$ pairwise independent oriented edges $e_1,\dots,e_q$ such that for every $1\le i< j \le q$ it holds that $\cP(G_s(e_i,e_j))=\False$, then
	the verification complexity of any deterministic PLS 
	for $(\cF,\cP)$ in the broadcast clique model is $\Omega(\log q)$.
\end{corollary}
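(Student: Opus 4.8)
The plan is to argue by contradiction, reading \lemmaref{Crossing in a BC Clique} in contrapositive form. Suppose, toward a contradiction, that some deterministic PLS $\Sigma=(\bP,\bV)$ for $(\cF,\cP)$ in the broadcast clique model has verification complexity $\kappa<\frac{\log q}{2}$. The hypothesis of the corollary gives a configuration $G_s\in\cF$ with $\cP(G_s)=\True$ that contains $q$ pairwise independent oriented edges $e_1,\dots,e_q$. By the completeness clause of the PLS definition, using the labels assigned by $\bP$, the verifier $\bV$ accepts $G_s$.

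Next I would apply \lemmaref{Crossing in a BC Clique} to $G_s$ and the edges $e_1,\dots,e_q$: since $\kappa<\frac{\log q}{2}$ and $G_s$ is accepted, the lemma yields indices $1\le i<j\le q$ such that the crossed configuration $G_s(e_i,e_j)$ is also accepted by $\Sigma$. On the other hand, the hypothesis of the corollary states that $\cP(G_s(e_i,e_j))=\False$; in particular this presupposes $G_s(e_i,e_j)\in\cF$, so the soundness clause of the PLS definition applies and forces $\bV$ to \emph{reject} $G_s(e_i,e_j)$ under every label assignment. This contradicts the conclusion of the lemma. Hence no PLS with $\kappa<\frac{\log q}{2}$ can be correct for $(\cF,\cP)$ in the broadcast clique model, i.e., the verification complexity is at least $\frac{\log q}{2}\in\Omega(\log q)$.

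I do not anticipate any genuine obstacle: the corollary is essentially a packaging of \lemmaref{Crossing in a BC Clique} together with the soundness requirement of proof-labeling schemes. The only point worth a word of care is that the crossed configuration $G_s(e_i,e_j)$ must belong to the family $\cF$ for the soundness clause to be invoked; this is implicit in writing $\cP(G_s(e_i,e_j))=\False$, and in concrete applications of the corollary one verifies that $\cF$ is closed under the relevant crossing operations.
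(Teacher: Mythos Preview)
Your proof is correct and matches the paper's intent: the paper states this result as an immediate corollary of \lemmaref{Crossing in a BC Clique} without giving a separate proof, and your contrapositive argument is precisely the intended derivation. The care you take about $G_s(e_i,e_j)\in\cF$ is appropriate and is indeed handled implicitly by the hypothesis $\cP(G_s(e_i,e_j))=\False$.
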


Note that we essentially cross two pairs of edges in the crossing operation: one pair of edges in $E$, and one pair of edges in $\bar{E}$. These two pairs are uniquely associated with each other in a way that if we assume a PLS in the $\xc(2)$ clique model, then we would not be able to apply the pigeonhole principle even with $1$-bit messages. To see why this is true, consider any set of independent oriented edges $(v_1,u_1),\dots,(v_q,u_q)$. For every $i\neq j$, both edges $(v_i,u_j),(v_j,u_i)\in \bar{E}$ are associated only with the pair of edges $(v_i,u_i),(v_j,u_j)\in E$. Therefore, with a PLS in the $\xc(2)$ clique model, it is possible that $M_{v_i}(u_j)\neq M_{v_j}(u_i)$ for every $i\neq j$ independently of other pairs. Hence, the crossing of any two edges may change the local view of at least one node. Therefore, the crossing technique can not be applied  for every $r>1$ in the congested clique.

\subsection{Minimum Spanning-Tree Verification}

In this section we illustrate the use of \corollaryref{cor-crossing} and prove tight bounds for the verification complexity of the Minimum Spanning-Tree (MST) problem.
Recall that an MST of a weighted graph $G$ is a spanning tree of $G$ whose sum of all its
edge-weights is minimum among all spanning trees of $G$. In particular, in the clique, there is a fully connected communication network, a weighted input graph $G=(V,E,w)$ where $E$ is a subset of communication edges, $w:E\rightarrow\mathbb{N}$ is the edge weight assignment, and a subset $T\subseteq E$ is specified as the MST. It is important to notice that all specifications of edge subsets are local in the sense that every node $v\in V$ has $n-1$ ports and in its state there is a specification for every edge $e_i$ on port number $i$ whether $e_i\in E$ and whether $e_i\in T$. According to our assumption on the clique model, the input graph $G$ is given in a reliable way, i.e., an edge $(v,u)$ is considered by $v$ to be in $E$ if and only if it is considered by $u$ to be in $E$. However, this consistency has to be verified for the edges of $T$.
In addition, since the communication network is fully connected and does not depend on the input graph $G$, we also consider the case where $G$ is disconnected. In this case, we define the MST as the set of minimum spanning-trees of all connected components of $G$.

Let $\cF_{{w_{\max}}}$ be the family of all weighted configurations 
(not necessarily connected) with maximum weight ${w_{\max}}$. Formally, 
if $e$ is an edge of the underlying weighted graph of
a configuration $G_s\in \cF_{w_{\max}}$, then ${w(e)\le w_{\max}}$. 
Edge weights are assumed to be known at their endpoints.

\begin{theorem}\label{MST verification}
	The verification complexity of $(\cF_{{w_{\max}}},\MST)$ in the 
	broadcast clique model is $\Theta(\log n + \log {w_{\max}})$.
\end{theorem}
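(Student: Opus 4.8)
The plan is to prove the two bounds separately, the upper bound by a new broadcast‑clique scheme and the lower bound by two different arguments, one for each term.

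For the upper bound $O(\log n+\log w_{\max})$ I would design the following PLS. The prover orients and roots the MST of each connected component of $G$ and gives every node $v$ the ID $r_v$ of its component's root, its depth $d_v$ in the rooted tree, and its parent's ID $p_v$. Each node $v$ then broadcasts the tuple $(\ID_v,r_v,d_v,p_v,t_v,\omega_v)$, where $t_v$ is the number of incident edges that $v$'s state marks as belonging to $T$ and $\omega_v=w(v,p_v)$ is the weight of $v$'s parent edge; all fields fit in $O(\log n+\log w_{\max})$ bits. Because the communication graph is a clique, after the single round every node knows all broadcast tuples and can verify: (i) that, within each $r$‑class, the parent pointers together with the $d$‑values form a tree rooted at the unique node of depth $0$ whose ID equals that $r$‑value; (ii) from its own state, that its parent edge and every edge for which it is the claimed parent lies in $E\cap T$; (iii) that $\sum_{w:\,r_w=c}t_w=2(|\{w:r_w=c\}|-1)$ for each class $c$ and that $r_u=r_v$ for every $(v,u)\in E$; and (iv), after reconstructing the whole weighted forest from the tuples, the cycle property, namely that every incident non‑tree edge $(v,u)\in E\setminus T$ has $w(v,u)$ at least the maximum weight on the reconstructed tree path from $v$ to $u$. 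Checks (i)–(iii) pin $T$ down to a spanning forest of $G$ with consistent endpoint marks, and (iv) upgrades "spanning forest" to "minimum", so the scheme is correct and has the claimed complexity.

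For the lower bound I would handle the two terms with two constructions. For $\Omega(\log n)$ I invoke \corollaryref{cor-crossing}: let $G$ be an $n$‑cycle with all weights equal to $1$ and $T$ all cycle edges but one, so every spanning tree is minimum and $\MST(G_s)=\True$. Among the tree edges one can pick $q=\Omega(n)$ pairwise independent oriented edges (spacing them at cyclic distance $\ge 3$ so the "crossing" chords are non‑edges), and, by port preservation, crossing any two of them leaves a configuration whose marked set $T'$ contains a full cycle and hence is not even a spanning forest, so $\MST(G_s(e_i,e_j))=\False$; \corollaryref{cor-crossing} then yields $\Omega(\log q)=\Omega(\log n)$. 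For $\Omega(\log w_{\max})$ crossing cannot help (it only gives $O(\log n)$), so I would use a fixed‑size gadget: the $4$‑cycle $1\!-\!2\!-\!3\!-\!4\!-\!1$ with $w(1,2)=w(3,4)=1$, $w(4,1)=a$, $w(2,3)=b$ and $T=\{(1,2),(2,3),(3,4)\}$, which is a legal MST configuration iff $a\ge b$. Splitting the clique into the side $\{1,4\}$ (holding $a$) and the side $\{2,3\}$ (holding $b$), a broadcast‑clique PLS of complexity $\kappa$ gives a nondeterministic two‑party protocol for $\{a\ge b\}$ with $O(\kappa)$ communication — the prover's labels are the certificate and the messages on the constant number of clique edges between the sides are the transcript — and since the nondeterministic communication complexity of $\mathrm{GEQ}$ on $\lceil\log w_{\max}\rceil$‑bit numbers is $\Omega(\log w_{\max})$, we get $\kappa=\Omega(\log w_{\max})$; padding with isolated nodes extends this to all $n$. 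Together these give $\Omega(\log n+\log w_{\max})$.

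The main obstacle, I expect, will be the correctness of the upper‑bound scheme rather than its succinctness: since the mark set $T$ lives only in the nodes' states, a dishonest prover could present locally consistent parent and depth labels while the underlying marks are inconsistent or fail to span some component, and one must argue that the global checks — in particular the degree‑sum identity $\sum_{w:\,r_w=c}t_w=2(|C|-1)$ per claimed component together with "$r_u=r_v$ on every $E$‑edge" — force $T$ to be exactly the reconstructed spanning forest. A secondary point to get right is that the weight and $T$‑degree fields are computed by each verifier from its own trusted state, so the prover cannot falsify them, which is what makes the reconstruction in step (iv) trustworthy. On the lower‑bound side the only delicate check is that every crossing in the $n$‑cycle construction genuinely yields an illegal configuration still in $\cF_{w_{\max}}$, which the uniform weights guarantee.
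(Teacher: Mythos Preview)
Your proposal is correct and reaches the same bound, but it diverges from the paper's proof in several places, so a short comparison is worthwhile.

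\textbf{Upper bound.} The paper's scheme is simpler than yours: each non-root broadcasts only $(\ID_v,\ID_{p(v)},w(v,p(v)))$, every node rebuilds the global tree $T'$ from these tuples, and then checks locally that for every incident edge $e$ one has $e\in T\iff e\in T'$, that $T'$ spans $V$, and the red rule. The pointwise check ``$e\in T\iff e\in T'$'' already forces $T$ to be consistent and equal to $T'$, so your degree-sum identity $\sum_{w:r_w=c}t_w=2(|C|-1)$ is not needed. Your version is more elaborate but does buy something: by carrying the component-root field $r_v$ and checking $r_u=r_v$ on every $E$-edge, you handle disconnected input graphs cleanly, whereas the paper's write-up tacitly assumes a single spanning tree. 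Your anticipated ``main obstacle'' (that the global checks pin $T$ down) is real and your argument for it is sound; it is just more work than the paper's direct local comparison.

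\textbf{Lower bound $\Omega(\log n)$.} Same mechanism (\corollaryref{cor-crossing}), different base graph: the paper takes a unit-weight \emph{path} with $T=E$ and spaces the independent edges as $e_i=(v_{3i},v_{3i+1})$; you take a unit-weight \emph{cycle} with $T$ all edges but one. Both crossings create a marked cycle and hence an illegal configuration, so either works.

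\textbf{Lower bound $\Omega(\log w_{\max})$.} The paper does a direct pigeonhole on the $4\kappa$ broadcast bits of a four-node gadget, while you reduce to the nondeterministic communication complexity of $\mathrm{GEQ}$. Your route is fine, but be careful with one phrase: if ``the prover's labels are the certificate,'' the certificate length is unbounded and you cannot conclude $\kappa=\Omega(\log w_{\max})$. The clean formulation is to let each party \emph{privately} guess the labels of its own two nodes and exchange only the four broadcast messages; then the nondeterministic protocol has communication $O(\kappa)$, and the $\Omega(\log w_{\max})$ lower bound for $\mathrm{GEQ}$ (via the fooling set $\{(i,i)\}$) gives what you want. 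With that fix, your argument and the paper's are two presentations of the same cut-and-paste idea.
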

\Proof
We first use \corollaryref{cor-crossing} to show $\Omega(\log n)$ lower bound for MST in the broadcast clique model. Consider the weighted graph $G=(V,E,w)$ where $V=\set{v_0,\ldots,v_{n-1}}$, $E=\set{(v_i,v_{i+1})\mid 0\le i\le n-2}$, and $w(e)=1$ for every $e\in E$. Intuitively, $G$ is a path of $n$ nodes with edges of weight $1$. We define the configuration $G_s$ to be the graph $G$ where $T=E$. Obviously, $\MST(G_s)=\True$. 
Next, we define the set of $q=\floor{\frac{n}{3}}-1$ independent oriented edges $e_1,\ldots,e_q$ as follows. For $1\leq i\leq \floor{\frac{n}{3}}-1$, let $e_i=(v_{3i},v_{3i+1})$. For every $1\leq i<j\leq \floor{\frac{n}{3}}-1$, 
$G'_s=G_s(e_i,e_j)$ is obtained from $G_s$ by removing edges
$(v_{3i},v_{3i+1})$ and $(v_{3j},v_{3j+1})$ from $G$, and
replacing them by $(v_{3i},v_{3j+1})$ and $(v_{3j},v_{3i+1})$.
Thus, the crossing
creates two connected components: the cycle $C=(v_{3i+1},v_{3i+2},\dots,v_{3j-1},u_{3j})$ and a path the contains the rest of the nodes, and therefore, $\MST(G'_s)=\False$. It follows from \corollaryref{cor-crossing}
that the verification complexity of any deterministic PLS for MST in the broadcast clique model is $\Omega(\log q)=\Omega(\log n)$.

\begin{figure}
	\centering
	\vspace{-4mm}\includegraphics[width=3in]{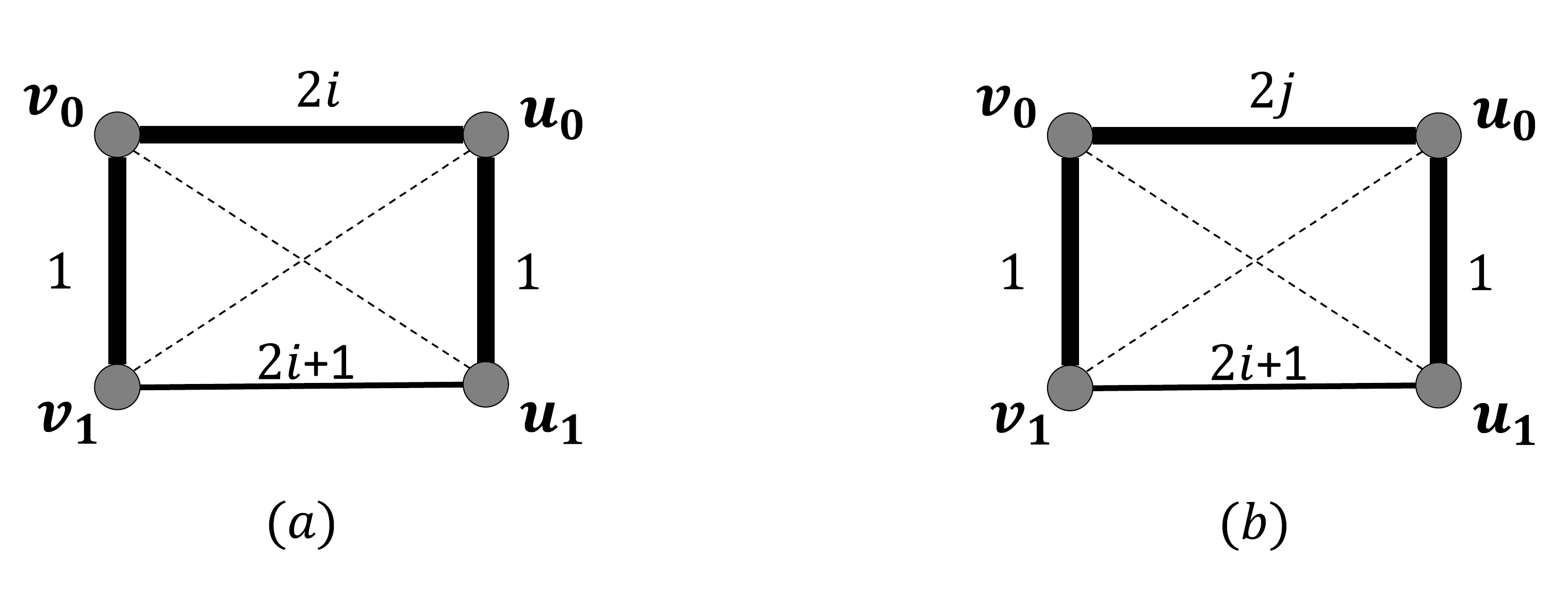}
	\caption{\it The configurations described in the proof of \theoremref{MST verification} for the $\Omega(\log W)$ lower bound. Dashed edged are communication edges, solid edges are in $E$ and thick solid edges are in $T$. (a) the configuration $G^i_s$ which satisfies $\MST(G^i_s)=\True$. (b) the configuration $G_s$ which satisfies $MST(G_s)=\False$ since $i<j$.}
	\label{fig-MST}
\end{figure}

For the $\Omega(\log W)$ lower bound, we show a variation of the proof in \cite{KKP}, which holds also for the broadcast clique model. Assume for contradiction that there
exists a  scheme $\Sigma$ for MST over $\cF_{W}$ in the broadcast clique model  with verification complexity $\kappa< \frac{1}{4}\log\left(\frac{W-1}{2}\right)$.
Let $G^i=(V,E,w_i)$ be a graph where $V=\set{v_0,u_0,v_1,u_1}$, $E=\set{(v_0,u_0), (v_1,u_1), (v_0,v_1), (u_0,u_1)}$, $w(v_0,v_1) = w(u_0,u_1) = 1$, $w(v_0,u_0)=2i$ and $w(v_1,u_1)=2i+1$ (see \figref{fig-MST}(a)). Let $G^i_s$ be the configuration over the graph $G^i$ with $T=E\setminus(v_1,u_1)$ . Obviously, $\MST(G^i_s)=\True$ for every $i\in\mathbb{N}$. In particular,
for every $1\le i \le \frac{W-1}{2}$ it holds that  $G^i_s\in \cF_{W}$ and $\Sigma$ accepts $G^i_s$. Let $\ell_i(v)$ be the label assigned by $\bP$ to $v$ in $G^i_s$.
Since $\kappa< \frac{1}{4}\log\left(\frac{W-1}{2}\right)$ and the fact that $\Sigma$ is a broadcast scheme, we get that
there exist $1\le i<j \le \frac{W-1}{2}$ such that $\vec{c}_\Sigma(G^i_s)=\vec{c}_\Sigma(G^j_s)$. Let $G_s$ be the same as $G^i_s$ except that $w(v_0,u_0)=2j$ (see \figref{fig-MST}(b)). Since $i<j$ it follows that $2i+1=w(v_1,u_1)<w(v_0,u_0)=2j$. Therefore, since $T=E\setminus(v_1,u_1)$, $\MST(G_s)=\False$. However, since $\vec{c}_\Sigma(G^i_s)=\vec{c}_\Sigma(G^j_s)$, with the labeling $\ell(v_0)=\ell_j(v_0)$, $\ell(u_0)=\ell_j(u_0)$, $\ell(v_1)=\ell_i(v_1)$ and $\ell(u_1)=\ell_i(u_1)$ for $G_s$ we
get that nodes $v_0$ and $u_0$ act exactly as in $G^j_s$ and output \True, and nodes $v_1$ and $u_1$ act exactly as in $G^i_s$ and output \True, a contradiction to the correctness of $\Sigma$.
This concludes the proof of the $\Omega(\log n + \log W)$ lower bound.

Finally, we show a  PLS for MST in the broadcast clique model with verification complexity $O(\log n + \log W)$. 
Consider the following scheme $(\bP,\bV)$. Given a legal configuration $G_s$, i.e., the set of edges $T$ is consistent and is an MST, the prover $\bP$ roots $T$, and gives every node a pointer to its parent in $T$. The verifier $\bV$ uses one communication round in which every non-root node sends its identity, the identity of its parent  and the weight of the edge connecting it to its parent; the
root sends an indication that it is the root. When all messages are received, each node locally constructs $T'$ from the collection of all edges sent by all nodes. Finally, every node $v$ outputs \True if the following conditions are met.	\begin{compactenum}
	\item\label{mst2} For all incident edges $e=(v,u)$: $e\in T$ if and only if $e\in T'$. 
	\item\label{mst1} $T'$ is a  tree spanning all $n$ nodes.
	\item\label{mst3} For all $e=(v,u)\in E$: if $e\notin T$ then $w(e)\ge w(e')$ for every $e'$ in the unique path between $v$ and $u$ in $T'$.
\end{compactenum}
We now prove the correctness of the scheme.
Recall that by the ``red rule'' (cf.~\cite{Tarjan:book}), the heaviest edge of every cycle is not in the MST.  Suppose $\MST(G_s)=\True$,  i.e., the set of edges $T$ is an MST. With the labels assigned by $\bP$, since $T$ is an MST and hence satisfies the red rule, all nodes output \True.
Assume now that all nodes output \True. 
Then by (\ref{mst2}), $T$ must be consistent over all nodes, and by (\ref{mst1}), we know that $T$ is a spanning tree. If (\ref{mst3}) holds at all nodes, $T$ satisfies the red rule and therefore $T$ is an MST, i.e., $\MST(G_s)=\True$.
\QED

\section{Randomized Proof-Labeling Schemes in the $\xc$ model}
\label{sec-RPLS}


The concept of randomized proof labeling schemes was introduced in 
\cite{BFP}. Briefly, the idea is that the messages generated by the 
verifier may depend not only on the local state and label, but also
on local random bits, and the correctness requirement is that if
$G_s$ satisfies $\cP$, then, using the labels assigned by the prover, 
all local verifiers accepts $G_s$; and 
if $\cP(G_s)=\False$ then, for every label assignment,
with probability at least $1/2$, at least one local verifier rejects 
$G_s$. (We consider only one-sided error RPLSs here.)

An RPLS for a given family $\cF$ of 
configurations and a boolean predicate $\cP$ over $\cF$, 
in the $\xc(r)$ model is defined in the same way deterministic PLS for
$(\cF,\cP)$ is defined, with the exception that the messages sent by
the verifier are a function of the local state, local label and a
string of random bits. The restriction of the $\xc(r)$ model means
that  the number of distinct messages (which are now random variables)
that may be sent out by a node is at most $r$. 
In accordance with our concern about dynamic partitioning of the
neighbors, we stress that in this case too, we assume that the
partitioning of neighbors into same-message groups is done
\emph{obliviously} of the actual random bits.
The correctness requirement are the same as in the standard
requirements from a (one-sided) RPLS:
%
A randomized scheme
$(\bP,\bV)$ for $(\cF,\cP)$ must satisfy the following
requirements, for every $G_s\in\cF$:

\begin{compactitem}
	\item If $\cP(G_s)=\True$ then, using the labels assigned by $\bP$, 
	the verifier $\bV$ accepts $G_s$.
	\item If $\cP(G_s)=\False$ then, for every label assignment,
	$\Pr[\bV\;\text{rejects}\; G_s]\geq \frac{1}{2}$.
\end{compactitem}

In this section, we extend the exponential relation between verification
complexity of deterministic and randomized schemes (shown in 
\cite{BFP} for broadcast deterministic schemes and unicast randomized  
schemes) to the $\xc(r)$ 
model. 

\begin{theorem}
	\label{thm-rpls}
	Let $\cF$ be a family of configurations, 
	let $\cP$ be a boolean predicate over $\cF$, and consider schemes 
	for $(\cF,\cP)$
	in the $\xc(r)$ model.
	If there exists 
	a (deterministic) PLS with verification complexity $\kappa_d$ then 
	there exists an RPLS with verification complexity $O(\log\kappa_d)$,
	and if there exists an RPLS with verification complexity $\kappa_r$ 
	then 
	there exists 
	a PLS with verification complexity $O(2^{\kappa_r})$.
\end{theorem}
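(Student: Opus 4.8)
The plan is to adapt the two constructions from \cite{BFP}, whose original proofs concern the broadcast-to-unicast direction, and to observe that the transformations preserve the message-count restriction of the $\xc(r)$ model. We handle the two directions separately.

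\emph{Deterministic to randomized.} Suppose $\Sigma=(\bP,\bV)$ is a PLS in the $\xc(r)$ model with verification complexity $\kappa_d$. For each edge $e=(v,u)$, let $M_v(e)$ be the (at most $\kappa_d$-bit) message $v$ sends over $e$ under $\Sigma$. In the randomized scheme $\Sigma'$, the prover keeps the same labels, and each node keeps the same partition of its neighbors into $\le r$ groups. Instead of sending $M_v(e)$ itself, a node picks a random prime $p$ from a suitable range (of size polynomial in $\kappa_d$) and, on each edge $e$, sends the pair $(p, M_v(e)\bmod p)$; to stay within the $\xc(r)$ budget, all edges in the same group of $v$ receive the same value, so the number of distinct messages is still $\le r$. (One can either have each node pick its own prime, or — following \cite{BFP} — have a single global prime verified by a separate $O(\log\log)$-bit gadget; the first is cleaner here.) The verifier at $u$ accepts edge $e$ iff the residue it receives is consistent with what $M_u(e)$ would require, exactly as $\bV$ would check; by the polynomial-identity / fingerprinting argument, if the underlying configuration is rejected by $\bV$ then some edge test fails with probability $\ge 1/2$ (amplify the prime range by a constant to get this), while a legal configuration is always accepted. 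Each message now has length $O(\log\kappa_d)$, and the partition into $\le r$ groups is untouched, so $\Sigma'$ is a valid RPLS in the $\xc(r)$ model.

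\emph{Randomized to deterministic.} Suppose $\Sigma=(\bP,\bV)$ is an RPLS in the $\xc(r)$ model with verification complexity $\kappa_r$, so each random message is a $\kappa_r$-bit string, drawn from a distribution depending only on the node's state and label. The key point is that the neighbor-partition is chosen \emph{obliviously} of the random bits, so each of the $\le r$ groups of a node is associated with one message-valued random variable taking values in $\{0,1\}^{\kappa_r}$, a set of size $2^{\kappa_r}$. In the deterministic scheme, the prover augments the label of each node with the full description of these $\le r$ distributions — i.e., for each group, the vector of probabilities over $2^{\kappa_r}$ outcomes — which is extra label, not extra message. The verifier then, for each edge $e=(v,u)$ and each of the $2^{\kappa_r}$ possible messages $m$ that $v$ could send on $e$, checks whether $\bV$ at $u$ would reject upon receiving $m$; node $v$ sends on $e$ the list indicating, for those $m$ with nonzero probability, whether the rejection predicate fires, together with the claimed probabilities, so that $u$ can recompute $\Pr[\bV\text{ rejects}]$ over the product distribution. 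This deterministic message has length $O(2^{\kappa_r})$ (a bit per possible outcome, plus probability encodings), and — crucially — it still depends only on which group of $v$ the edge $e$ lies in, so the node sends $\le r$ distinct deterministic messages. Soundness: if $\cP(G_s)=\False$ then, under the RPLS, rejection probability is $\ge 1/2>0$ for every labeling, so in particular for the labeling the deterministic prover is forced to certify there is some outcome tuple causing a reject, hence some node's deterministic check fires; completeness is immediate since a legal configuration is accepted by $\bV$ with probability $1$, meaning every outcome is accepting.

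\emph{Main obstacle.} The routine fingerprinting and enumeration arguments are standard; the one genuinely new point — and the only place the $\xc(r)$ constraint could have bitten us — is the obliviousness of the neighbor partition to the random bits. Because of it, each same-message group corresponds to a \emph{single} random variable (not a tuple of independent ones), so replacing that variable by its hash (forward direction) or by the explicit list over its $2^{\kappa_r}$-element support (reverse direction) does not split a group, and the count of distinct messages stays $\le r$. Verifying that this obliviousness suffices to keep all of \cite{BFP}'s estimates intact, and that the probability bookkeeping over the product-of-$\le r\cdot|E|$ distributions still goes through with one-sided error, is the part that needs care; everything else transfers verbatim.
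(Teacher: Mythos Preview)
Your overall strategy is right, and you correctly isolate the one genuinely new ingredient: because the partition of neighbors into at most $r$ groups is oblivious to the random bits, each group carries a single message random variable, and both transformations preserve the $\xc(r)$ constraint. The paper's proof is organized the same way. However, both of your directions have real gaps in the construction.

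\textbf{Forward direction.} You write that ``the prover keeps the same labels'' and that the verifier at $u$ checks the received residue against ``what $M_u(e)$ would require, exactly as $\bV$ would check.'' But $\bV$ at $u$ is an arbitrary boolean function of $u$'s state, label, and \emph{all} incoming messages; it is not an edge-by-edge equality test. Fingerprinting only lets $u$ test whether the incoming message equals a string $u$ already holds --- and with unchanged labels, $u$ does not hold the message $v$ is supposed to send. The construction in \cite{BFP} that the paper invokes fixes this by having the randomized prover \emph{augment} $\ell(u)$ with the deterministic messages $u$ should receive from each neighbor; then $u$ runs $\bV$ locally on these expected messages and separately uses fingerprints to check that actual equals expected. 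Without the label augmentation your scheme has nothing to compare the residue to.

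\textbf{Reverse direction.} Two problems. First, you have $v$ send on $e=(v,u)$, for each possible $m$, ``whether the rejection predicate fires''; but whether $\bV$ at $u$ rejects depends on $u$'s state, $u$'s label, and all of $u$'s incoming messages, none of which $v$ can compute. Second, you include ``claimed probabilities'' in the message and assert the total is $O(2^{\kappa_r})$; if the RPLS uses $R$ random bits, each probability needs $\Theta(R)$ bits, and $R$ is not bounded in terms of $\kappa_r$, so the claimed length does not follow. The paper's construction avoids both issues: $v$ sends to $u$ only the \emph{support} of its message random variable, i.e.\ a $2^{\kappa_r}$-bit indicator vector, and $\bV'(u)$ outputs \False\ iff some tuple of messages, one from each neighbor's declared support, causes $\bV(u)=\False$. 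One-sided error makes the probabilities irrelevant: on legal instances acceptance probability is $1$, so every supported tuple is accepting; on illegal instances rejection probability is at least $1/2>0$, and since messages from distinct neighbors are independent, some node has a rejecting tuple in the product of supports. Your ``Main obstacle'' paragraph already contains this independence observation; the fix is simply to send the support indicator vector and nothing else.
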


\begin{proof}
	In \cite{BFP} it is shown that
	an RPLS (recall that this means a unicast RPLS) with verification
	complexity $O(\log \kappa)$ can be 
	constructed from a (broadcast) PLS with verification complexity $\kappa$.  The
	proof of this result can be easily adapted to show the following
	generalization. 
	%
	\begin{lemma}\label{rpls from upls}
		Let $\cF$ be a family of configurations
		and let $\cP$ be a boolean predicate over $\cF$.  If there exists
		a PLS for $(\cF,\cP)$ in 
		the $\xc(r)$ model with verification complexity $\kappa$, then
		there exists an RPLS for $(\cF,\cP)$ in the $\xc(r)$ model with verification
		complexity $O(\log \kappa)$.
	\end{lemma}
	The converse holds as well, as stated in  the following lemma.
	\begin{lemma}\label{upls from rpls}
		Let $\cF$ be a family of configurations
		and let $\cP$ be a boolean predicate over $\cF$.  If there exists
		a one-sided RPLS for $(\cF,\cP)$ in 
		the $\xc(r)$ model with verification complexity $\kappa$, then
		there exists a PLS for $(\cF,\cP)$ in the $\xc(r)$ model with verification
		complexity $O(2^\kappa)$.
	\end{lemma}
	%
	
	\Proof
	Let $(\bP,\bV)$ be a one-sided randomized proof-labeling scheme
	for $(\cF,\cP)$ in the $\xc(r)$ model with verification complexity $\kappa$. Let $G_s\in
	\cF$ be a configuration satisfying predicate $\cP$. For every node
	$v$, let $\ell(v)$ be the label assigned to $v$ by $\bP$, let
	$u_1,\ldots u_d$ be the $d=deg(v)$ neighbors of $v$ and let
	$g_1,\ldots,g_r$ be the partition of $u_1,\ldots u_d$ to $r$ groups
	used by $(\bP,\bV)$
	(some groups may be empty). For every non-empty
	group of neighbors $g_i$, let
	$C(v,g_i)=\Set{c^{(v,g_i)}_1,\ldots,c^{(v,g_i)}_y}$ be the collection of
	all certificates with positive probability to be sent from $v$ to its
	neighbors in group $g_i$ according to $\bV$ where labels are assigned
	by $\bP$.  By definition of $\xc(r)$, all neighbors that
	belong to the same group receive the same certificate. 
	We construct a deterministic
	proof-labeling scheme $(\bP',\bV')$ for $(\cF,\cP)$ in the $\xc(r)$
	model as
	follows. For every $v$, the label assigned by $\bP'$ to $v$ is
	$\ell'(v)=\ell(v)$. For every neighbor $u$ of $v$, let $g_v(u)$ be the
	group  containing $u$ in the partition of $v$. The message sent from $v$ to $u$
	according to $\bV'$ is
	$M'_v(u)=x^{(v,u)}_1,\ldots,x^{(v,u)}_{2^\kappa}$ where
	$x^{(v,u)}_j$ is a bit whose value is $1$ iff $j\in C(v,g_v(u))$. 
	Upon receiving $M'_{u_1}(v),\ldots,M'_{u_d}(v)$, the verifier at $v$ 
	outputs $\False$ if
	and only if there exist $j_1,\ldots,j_{d}$ such that $\bV(v)=\False$
	upon receiving $j_1,\ldots,j_d$
	from neighbors $u_1,\ldots u_{d}$ 
	respectively, and for every $1\leq i\leq d$ it holds that
	$x^{(u_i,v)}_{j_i}=1$. Intuitively, if for some combination of
	certificates in the support of $\bV$ it holds that $\bV(v)=\False$
	then $\bV'(v)=\False$, otherwise $\bV'(v)=\True$. This concludes the
	construction of $(\bP',\bV')$.
	
	The correctness of the scheme $(\bP',\bV')$ follows from the
	observation that 
	by construction, every combination of messages
	$j_1,\ldots,j_d$, such that for every $1\leq i\leq d$ it holds that
	$x^{(u_i,v)}_{j_i}=1$, has positive
	probability to occur in $\bV$. Therefore, since $(\bP,\bV)$ is a one-sided scheme, all
	combinations must lead to an output of $\True$. 
	Regarding communication complexity, the length of the messages sent by
	$\bV'$ is exactly the number of possible messages sent by $\bV$. Since
	the verification complexity of $(\bP,\bV)$ is $\kappa$, the number of
	bits in the messages of $\bV'$ is $2^\kappa$.
\end{proof}

\section{Conclusion}\label{sec-conc}
In this paper we studied the $\xc(r)$ model from the perspective of
verification. This angle seems particularly convenient, because it 
involves a single round of message exchange. (If multiple rounds
are allowed, one has to consider the possibility of reconfiguring
the neighbor partitions: is it allowed to partition the neighbors
anew in each round, and if so, at what cost?).
We focus on the
relation between
the number of different messages of 
each node and the verification complexity of proof-labeling
schemes. We gave tight bounds on the verification complexity of 
edge agreement and max flow in the $\xc(r)$ model.
We have shown that in the restrictive broadcast
model, a well defined matching
is harder to verify than the maximality of a
given matching, and that it is possible to obtain lower bounds on the verification
complexity in congested cliques.
Many interesting
questions remain open. We list a few below.
\begin{compactitem}
	\item Develop a theory for a restricted number of interface cards (NICs). The
	number of NICs limits the number of messages that can be simultaneously
	transmitted. In this paper we looked only at a simple  case of one round of
	communication.
	We believe that developing a tractable and realistic model in which the
	number of NICs is a parameter is an important challenge.
	\item As mentioned, in multiple round algorithms,  dynamic 
	reconfigurations
	can be exploited to convey information. It seems
	that an interesting challenge would be to account for  
	dynamic reconfigurations.
	\item We considered a model in which a single parameter $r$ is used to indicate the restriction of all nodes. What can be said about a model in which every node has its own restriction?
	\item We have given examples of problems that have a linear improvement in verification complexity as a function $r$, and on the other hand, we have given examples of problems that are not sensitive at all to $r$. Can a characterization of problems be shown, according to their sensitivity of verification complexity to $r$?
\end{compactitem}

\bibliographystyle{abbrv}
\bibliography{UPLS}

\end{document}